\title{A Faster Algorithm for Constrained Correlation Clustering}
\author{Nick Fischer}{INSAIT, Sofia University ``St. Kliment Ohridski'', Bulgaria}{nick.fischer@weizmann.ac.il}{}{Partially funded by the Ministry of Education and Science of Bulgaria’s support for INSAIT, Sofia University ``St. Kliment Ohridski'' as part of the Bulgarian National Roadmap for Research Infrastructure. Parts of this work were done while the author was at Saarland University.}
\author{Evangelos Kipouridis}{Max Planck Institute for Informatics, Saarland Informatics Campus, Saarbr{\"u}cken, Germany}{kipouridis@mpi-inf.mpg.de}{https://orcid.org/0000-0002-5830-5830}{}
\author{Jonas Klausen}{BARC, University of Copenhagen, Denmark}{jokl@di.ku.dk}{https://orcid.org/0000-0002-7403-417X}{}
\author{Mikkel Thorup}{BARC, University of Copenhagen, Denmark}{mikkel2thorup@gmail.com}{https://orcid.org/0000-0001-5237-1709}{}
\authorrunning{N. Fischer, E. Kipouridis, J. Klausen, and M. Thorup}
\keywords{Clustering, Constrained Correlation Clustering, Approximation}
\begin{document}
\maketitle

\begin{abstract}
In the Correlation Clustering problem we are given $n$ nodes, and a preference for each pair of nodes indicating whether we prefer the two endpoints to be in the same cluster or not.
The output is a clustering inducing the minimum number of violated preferences.
In certain cases, however, the preference between some pairs may be too important to be violated. The constrained version of this problem specifies pairs of nodes that \emph{must} be in the same cluster as well as pairs that \emph{must not} be in the same cluster (hard constraints).
The output clustering has to satisfy all hard constraints while minimizing the number of violated preferences. 

Constrained Correlation Clustering is APX-Hard and has been approximated within a factor $3$ by van Zuylen \emph{et al.} [SODA '07]. Their algorithm is based on rounding an LP with $\Theta(n^3)$ constraints, resulting in an $\Omega(n^{3\omega})$ running time.
In this work, using a more combinatorial approach, we show how to approximate this problem significantly faster at the cost of a slightly weaker approximation factor.
In particular, our algorithm runs in $\widetilde{O}(n^3)$ time (notice that the input size is $\Theta(n^2)$) and approximates Constrained Correlation Clustering within a factor $16$.

To achieve our result we need properties guaranteed by a particular influential algorithm for (unconstrained) Correlation Clustering, the \textsc{CC-PIVOT} algorithm.
This algorithm chooses a \emph{pivot} node $u$, creates a cluster containing $u$ and all its preferred nodes, and recursively solves the rest of the problem.
It is known that selecting pivots at random gives a $3$-approximation.
As a byproduct of our work, we provide a derandomization of the \textsc{CC-PIVOT} algorithm that still achieves the $3$-approximation; furthermore, we show that there exist instances where no ordering of the pivots can give a $(3-\varepsilon)$-approximation, for any constant $\varepsilon$.

Finally, we introduce a node-weighted version of Correlation Clustering, which can be approximated within factor $3$ using our insights on Constrained Correlation Clustering. As the general weighted version of Correlation Clustering would require a major breakthrough to approximate within a factor $o(\log{n})$, Node-Weighted Correlation Clustering may be a practical alternative.
\end{abstract}

\clearpage
\setcounter{page}{1}
\section{Introduction}
Clustering is a fundamental task related to unsupervised learning, with many applications in machine learning and data mining. The goal of clustering is to partition a set of nodes into disjoint clusters, such that (ideally) all nodes within a cluster are similar, and nodes in different clusters are dissimilar. As no single definition best captures this abstract goal, a lot of different clustering objectives have been suggested.

\emph{Correlation Clustering} is one of the most well studied such formulations for a multitude of reasons: Its definition is simple and natural, it does not need the number of clusters to be part of the input, and it has found success in many applications. Some few examples include automated labeling~\cite{autoLabelAgrawal,autoLabelChakrabarti}, clustering ensembles~\cite{clusteringEnsembles}, community detection~\cite{commDet, communityVeldt}, disambiguation tasks~\cite{disambiguation}, duplicate detection~\cite{duplicate} and image segmentation~\cite{image1,image2}.

In \CC{} we are given a graph $G=(V,E)$, and the output is a partition (clustering) $C=\{C_1, \ldots, C_k\}$ of the vertex set~$V$. We refer to the sets $C_i$ of $C$ as \emph{clusters}. The goal is to minimize the number of edges between different clusters plus the number of non-edges inside of clusters. More formally, the goal is to minimize $|E \symdiff E_C|$, the cardinality of the symmetric difference between $E$ and $E_C$, where we define~$E_C=\bigcup_{i=1}^{k} \binom{C_i}{2}$.
In other words, the goal is to transform the input graph into a collection of cliques with the minimal number of edge insertions and deletions.
An alternative description used by some authors is that we are given a \emph{complete} graph where the edges are labeled either ``$+$'' (corresponding to the edges in our graph $G$) or ``$-$'' (corresponding to the non-edges in our graph $G$).

The problem is typically motivated as follows: Suppose that the input graph models the relationships between different entities which shall be grouped. An edge describes that we prefer its two endpoints to be clustered together, whereas a non-edge describes that we prefer them to be separated. In this formulation the cost of a correlation clustering is the number of violated preferences.

\subsection{Previous Results}
\CC{} was initially introduced by Bansal, Blum, and Chawla~\cite{Bansal}, who proved that it is NP-Hard, and provided a deterministic constant-factor approximation, the constant being larger than 15,000. Subsequent improvements were based on rounding the natural LP: Charikar, Guruswami and Wirt gave a deterministic $4$-approximation~\cite{cutRadius}, Ailon, Charikar and Newman gave a randomized $2.5$-approximation and proved that the problem is APX-Hard~\cite{pivoting}, while a deterministic $2.06$-approximation was given by Chawla, Makarychev, Schramm and Yaroslavtsev~\cite{NearOptimal2}. The last result is near-optimal among algorithms rounding the natural LP, as its integrality gap is at least $2$. In a breakthrough result by Cohen-Addad, Lee and Newman \cite{sub2} a $(1.994+\epsilon)$-approximation using the Sherali-Adams relaxation broke the $2$ barrier. It was later improved to $1.73+\epsilon$ \cite{apx173} by Cohen-Addad, Lee, Li and Newman, and even to $1.437$ by Cao \emph{et al.}~\cite{clusterLP}. There is also a combinatorial $1.847$-approximation (Cohen-Addad \emph{et al.}~\cite{combinatorialMikkel}).

Given the importance of Correlation Clustering, research does not only focus on improving its approximation factor. Another important goal is efficient running times without big sacrifices on the approximation factor. As the natural LP has $\Theta(n^3)$ constraints, using a state-of-the-art LP solver requires time $\Omega(n^{3\omega})=\Omega(n^{7.113})$.
In order to achieve efficient running times, an algorithm thus has to avoid solving the LP using an all-purpose LP-solver, or the even more expensive Sherali-Adams relaxation; such algorithms are usually called \emph{combinatorial} algorithms\footnote{On a more informal note, combinatorial algorithms are often not only faster, but also provide deeper insights on a problem, compared to LP-based ones.}.
Examples of such a direction can be seen in~\cite{pivoting} where, along with their LP-based $2.5$-approximation, the authors also design a combinatorial $3$-approximation (the \textsc{CC-PIVOT} algorithm); despite its worse approximation, it enjoys the benefit of being faster. Similarly, much later than the $2.06$-approximation~\cite{NearOptimal2}, Veldt devised a faster combinatorial $6$-approximation and a $4$-approximation solving a less expensive LP~\cite{deterministic}.

Another important direction is the design of \emph{deterministic} algorithms. For example, \cite{pivoting} posed as an open question the derandomization of \textsc{CC-PIVOT}.
The question was (partially) answered affirmatively by~\cite{deterministicPivoting}.
Deterministic algorithms were also explicitly pursued in~\cite{deterministic}, and are a significant part of the technical contribution of~\cite{NearOptimal2}.

Correlation Clustering has also been studied in different settings such as parameterized algorithms~\cite{fpt}, sublinear and streaming algorithms~\cite{sublinear, dynamicStream, BehMany, BehSingle, dynamicStream, MakarySingle}, massively parallel computation~(MPC) algorithms~\cite{parallel, sub3parallel}, and differentially private algorithms~\cite{differentialPrivacy}.

\subparagraph*{PIVOT.}
The \textsc{CC-PIVOT} algorithm \cite{pivoting} is a very influential algorithm for Correlation Clustering.
It provides a $3$-approximation and is arguably the simplest constant factor approximation algorithm for Correlation Clustering. It simply selects a node uniformly at random, and creates a cluster $\mathcal{C}$ with this node and its neighbors in the (remaining) input graph. It then removes $\mathcal{C}$'s nodes and recurses on the remaining graph.
Due to its simplicity, \textsc{CC-PIVOT} has inspired several other algorithms, such as algorithms for Correlation Clustering in the streaming model \cite{sublinear, dynamicStream, BehMany, BehSingle, dynamicStream, MakarySingle} and algorithms for the more general Fitting Utrametrics problem \cite{charikarTree, l0}.

One can define a meta-algorithm based on the above, where we do not necessarily pick the pivots uniformly at random.
Throughout this paper, we use the term \textsc{PIVOT} algorithm to refer to an instantiation of the (Meta-)Algorithm~\ref{alg:pivot}\footnote{This is not to be confused with the more general pivoting paradigm for \CC{} algorithms. In that design paradigm, the cluster we create for each pivot is not necessarily the full set of remaining nodes with which the pivot prefers to be clustered, but can be decided in any other way (e.g. randomly, based on a probability distribution related to an LP or more general hierarchies such as the Sherali-Adams hierarchy).}.
Obviously \textsc{CC-PIVOT} is an instantiation of \textsc{PIVOT}, where the pivots are selected uniformly at random.

\begin{algorithm2e}[thb]
  \nonl\procedure{\Pivot{$G = (V, E)$}}{
    $C \gets \emptyset$\;
    \While{$V \neq \emptyset$}{
      Pick a pivot node $u$\;
      \tcc{an instantiation of \Pivot{} only needs to specify how the pivot is selected in each iteration} 
      Add a cluster containing $u$ and all its neighbors to $C$\;
      Remove $u$, its neighbors and all their incident edges from $G$\;
    }
    \Return $C$
  }

\caption{The \textsc{PIVOT} meta-algorithm. \textsc{CC-PIVOT} is an instantiation of \textsc{PIVOT} where pivots are selected uniformly at random.} \label{alg:pivot}
\end{algorithm2e}

The paper that introduced \textsc{CC-PIVOT} \cite{pivoting} posed as an open question the derandomization of the algorithm.
The question was partially answered in the affirmative by~\cite{deterministicPivoting}.
Unfortunately there are two drawbacks with this algorithm.
First, it requires solving the natural LP, which makes its running time equal to the pre-existing (better) $2.5$-approximation.
Second, this algorithm does not only derandomize the order in which pivots are selected, but also decides the cluster of each pivot based on an auxiliary graph (dictated by the LP) rather than based on the original graph.
Therefore it is not an instantiation of \textsc{PIVOT}.

\subparagraph*{Weighted Correlation Clustering.}
In the weighted version of \CC{}, we are also given a weight for each preference. The final cost is then the sum of weights of the violated preferences. An $O(\log{n})$-approximation for weighted \CC{} is known by Demaine, Emanuel, Fiat and Immorlica~\cite{weighted}. In the same paper they show that the problem is equivalent to the Multicut problem, meaning that an $o(\log{n})$-approximation would require a major breakthrough. As efficiently approximating the general weighted version seems out of reach, research has focused on special cases for which constant-factor approximations are possible~\cite{global, local}.

\subparagraph*{\CCC{}.}
\CCC{} is an interesting variant of \CC{} capturing the idea of critical pairs of nodes. To address these situations, \CCC{} introduces hard constraints in addition to the pairwise preferences. A clustering is valid if it satisfies all hard constraints, and the goal is to find a valid clustering of minimal cost. We can phrase \CCC{} as a weighted instance of Correlation Clustering: Simply give infinite weight to pairs associated with a hard constraint and weight $1$ to all other pairs.

To the best of our knowledge, the only known solution to \CCC{} is given in the work of van Zuylen and Williamson who designed a deterministic $3$-approximation \cite{deterministicPivoting}.
The running time of this algorithm is $O(n^{3\omega})$, where $\omega<2.3719$ is the matrix-multiplication exponent.
Using the current best bound for $\omega$, this is $\Omega(n^{7.113})$.

\subsection{Our Contribution}
Our main result is the following theorem.
It improves the $\Omega(n^{7.113})$ running time of the state-of-the-art algorithm for \CCC{} while still providing a constant (but larger than $3$) approximation factor\footnote{We write $\widetilde\Order(T)$ to suppress polylogarithmic factors, i.e., $\widetilde\Order(T) = T (\log T)^{\Order(1)}$.}.

\begin{theorem}[Constrained Correlation Clustering] \label{thm:CCCCombinatorial}
There is a deterministic algorithm for \CCHC{} computing a $16$-approximation in time~$\widetilde O(n^3)$.
\end{theorem}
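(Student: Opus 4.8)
The plan is to reduce \CCC{} to a small number of \emph{unconstrained} Correlation Clustering instances that can each be solved by a fast, deterministic \textsc{PIVOT}-style algorithm, and then stitch the results together. The main conceptual tool will be the derandomized \textsc{CC-PIVOT} algorithm promised in the introduction: I would first establish that there is a deterministic choice of pivot order giving a $3$-approximation for unconstrained \CC{} in $\widetilde O(n^3)$ time. The key property to extract from the analysis of \textsc{CC-PIVOT} is the usual charging scheme via ``bad triangles'' (triples of nodes inducing exactly two edges): the expected cost of the random pivot is bounded by a fractional packing of bad triangles, and one can derandomize by the method of conditional expectations, picking at each step the pivot that minimizes a suitable conditional cost estimate. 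Care is needed to make each conditional-expectation evaluation cheap enough that the whole derandomization stays within $\widetilde O(n^3)$; this is where an explicit combinatorial formula for the per-step conditional cost, rather than an LP, is essential.

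Next I would handle the hard constraints. A ``must-link'' constraint is easy to incorporate by contracting the corresponding pair (or, more carefully, by first taking the transitive closure of the must-link graph and contracting each resulting group into a super-node), reducing to an instance with only ``cannot-link'' hard constraints — which is exactly the regime where the \emph{node-weighted} variant of \CC{} mentioned in the abstract becomes the right abstraction, since a contracted super-node should be weighted by the number of original nodes it represents. For the cannot-link constraints, the plan is to run \textsc{PIVOT} on the graph obtained by \emph{deleting} the cannot-link edges (treating them as non-edges) and then argue that any cannot-link pair that nonetheless ends up co-clustered can be separated cheaply by splitting clusters, paying only a constant-factor blow-up. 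The analysis would compare the cost of the resulting valid clustering against the optimal \emph{valid} clustering \textsc{OPT}; the point is that \textsc{OPT} itself satisfies all hard constraints, so the bad-triangle packing lower bound for the modified unconstrained instance is also a lower bound for \CCC{}, up to the constant losses incurred by contraction and post-hoc splitting. Accumulating the factors — roughly a factor $3$ from \textsc{PIVOT}, plus multiplicative losses from the two reductions — yields the claimed $16$.

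The running time accounting is then routine: transitive closure and contraction of must-link constraints is $\widetilde O(n^2)$ (or $O(n^\omega)$, in any case dominated), the derandomized \textsc{PIVOT} run is $\widetilde O(n^3)$ by the careful conditional-expectation bookkeeping above, and the cluster-splitting postprocessing touches each pair $O(1)$ times, i.e., $O(n^2)$. Determinism is preserved throughout since every step is deterministic.

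The main obstacle I anticipate is twofold. First, making the derandomization of \textsc{CC-PIVOT} run in $\widetilde O(n^3)$ rather than the naive $n^4$ or worse: a direct method-of-conditional-expectations implementation would recompute a global potential $\Theta(n)$ times per pivot choice over $\Theta(n)$ pivot choices, so one needs an incrementally maintainable potential and an argument that the per-node marginal contributions can be updated in amortized $\widetilde O(n)$ time per pivot step. Second, controlling the constant in the reduction from \CCC{} to the unconstrained node-weighted problem: the subtle point is that separating a wrongly-merged cannot-link pair may force a cluster to be split in a way that violates other (satisfied) preferences, and one must show a charging argument — presumably again via bad triangles involving the cannot-link edge — that bounds this extra cost against \textsc{OPT} rather than against the already-approximate \textsc{PIVOT} cost, so that the losses compose additively in the approximation ratio rather than multiplying uncontrollably. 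Getting both the clean constant $16$ and the $\widetilde O(n^3)$ time simultaneously is the crux.
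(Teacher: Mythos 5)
Your high-level two-phase plan (preprocess the hard constraints, then run a derandomized \textsc{PIVOT}) matches the paper, but both of your key steps diverge from the paper and, more importantly, the second one has a genuine gap that the paper's approach is specifically designed to avoid.

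\textbf{Derandomization.} You propose method of conditional expectations, which requires a cheap, exactly computable conditional cost estimate for the remaining randomness of \textsc{CC-PIVOT}. It is not at all clear this exists: \textsc{CC-PIVOT}'s analysis goes through a randomized charging of bad triangles against an LP, and the conditional expectation of its cost given a partial pivot sequence is not an obviously tractable quantity. The paper sidesteps this entirely. It instead observes that the fractional packing/covering LP over bad triplets (the ``charging LP'') can be $(1+\epsilon)$-approximated deterministically and combinatorially in $\widetilde O(n^3)$ time by Garg--K\"onemann/Fleischer-style multiplicative weights, and then feeds that fractional solution into the van Zuylen--Williamson pivot rule, which greedily picks the pivot minimizing $\bigl(\sum_{vw:uvw\in T}1\bigr)/\bigl(\sum_{vw:uvw\in T}x_{vw}\bigr)$. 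This yields a $3\alpha$-approximation for an $\alpha$-approximate LP solution, with no conditional-expectation machinery.

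\textbf{Handling cannot-link.} This is where your plan actually breaks. You propose to delete the cannot-link edges, run \textsc{PIVOT}, and then ``split clusters'' post-hoc if a cannot-link pair nonetheless ends up co-clustered. The paper explicitly points out that this situation is unavoidable under \emph{any} pivot order: take $C_4$ with the two vertex-disjoint edges as hard cannot-link constraints---every pivot choice puts some forbidden pair together. Your fallback (post-hoc splitting charged against bad triangles through the cannot-link edge) is left as ``the crux,'' and indeed it is, because a split can cascade, may itself violate must-link constraints, and has no obvious bound against \textsc{OPT}. The paper does not split anything. Instead its \textsc{Transform} procedure makes three structural modifications \emph{before} pivoting: (i) it disconnects all hostile supernode pairs; (ii) it iteratively removes pairs of edges so that no two hostile supernodes retain a common neighboring supernode (this is the step that kills the $C_4$ obstruction, and it is charged against \textsc{OPT} because any valid clustering must pay for one side of each such conflict); and (iii) it rounds each inter-supernode bipartite connection to be either complete or empty with threshold $\tfrac{3-\sqrt5}{2}$. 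After this, \emph{every} \textsc{PIVOT} run provably satisfies the hard constraints, so no repair phase is needed. The factor $16$ then comes from $|E\triangle E'| \le (1+\sqrt5)\,|E\triangle\mathrm{OPT}|$ combined with the triangle inequality $|E\triangle E_C|\le(\alpha+1)|E\triangle E'|+\alpha|E\triangle\mathrm{OPT}|$ with $\alpha=3+\epsilon$, not from the kind of ``contraction loss plus splitting loss'' accounting you sketch.

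\textbf{Minor point.} You invoke Node-Weighted Correlation Clustering as the intermediate abstraction after contracting must-link groups. The paper's reduction runs in the opposite direction (it uses the \CCC{} machinery to solve the node-weighted problem), and for \CCC{} itself it does not contract: it keeps all $n$ original nodes, adds all intra-supernode edges, and relies on the rounding step to guarantee that every \textsc{PIVOT} run keeps each supernode intact. This avoids having to reweight anything.
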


We first show how to obtain this result, but with a randomized algorithm that holds with high probability, instead of a deterministic one.
In order to do so, we perform a (deterministic) preprocessing step and then use the \textsc{CC-PIVOT} algorithm.
Of course \textsc{CC-PIVOT} alone, without the preprocessing step, would not output a clustering respecting the hard constraints.
Its properties however (and more generally the properties of \textsc{PIVOT} algorithms) are crucial; we are not aware of any other algorithm that we could use instead and still satisfy all the hard constraints of \CCC{} after our preprocessing step.

To obtain our deterministic algorithm we derandomize the \textsc{CC-PIVOT} algorithm.

\begin{theorem}[Deterministic \textrm{PIVOT}] \label{thm:CCCombinatorial}
There are the following deterministic PIVOT algorithms for Correlation Clustering:
\smallskip
\begin{itemize}
    \item A combinatorial $(3+\epsilon)$-approximation, for any constant $\epsilon>0$, in time $\widetilde O(n^3)$.
    \item A non-combinatorial $3$-approximation in time $\widetilde O(n^5)$.
\end{itemize}
\end{theorem}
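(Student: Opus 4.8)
The plan is to derandomize \textsc{CC-PIVOT} by the method of conditional expectations applied to the sequence of pivot choices, using a (possibly approximate) feasible solution to the natural LP relaxation as a pessimistic estimator; this follows the strategy of van Zuylen and Williamson~\cite{deterministicPivoting}, but the key point is to make it work while keeping the cluster of every pivot equal to its \emph{true} remaining neighborhood, so that the algorithm is an honest instantiation of \textsc{PIVOT}. Recall the analysis of \textsc{CC-PIVOT} from~\cite{pivoting}: in the step that creates the cluster $\{u\}\cup N(u)$ of a pivot $u$, every edit decided in that step---a non-edge inside the cluster, or an edge leaving it---is charged to a ``bad triangle'' through $u$ (a triple inducing exactly two edges), and, using the triangle inequalities satisfied by a feasible LP solution $x$, one shows that for a \emph{uniformly random} pivot the expected cost of the step is at most $3$ times the expected $x$-mass of the pairs destroyed in the step (a pair is destroyed once one of its endpoints is removed). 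Summing this step-local inequality over the run, and using that $x$ restricted to the surviving graph stays feasible, recovers $\mathbb{E}[\mathrm{cost}]\le 3\cdot\mathrm{value}(x)$.

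To derandomize, maintain the partial clustering produced so far together with the potential $\Psi$ equal to the cost already decided plus $3$ times the $x$-mass of the pairs still alive. Then $\Psi$ equals $3\cdot\mathrm{value}(x)$ at the start, equals the algorithm's cost at the end, and---this is exactly the content of the step-local inequality above, applied with $x$ restricted to the current graph---from any state there is a choice of next pivot that does not increase $\Psi$; the deterministic rule picks the pivot minimizing $\Psi$. Hence the output cost is at most $3\cdot\mathrm{value}(x)$. If $x$ is an optimal LP solution this is $3\cdot\mathrm{LP}\le 3\,\mathrm{OPT}$, giving the exact $3$-approximation; if $x$ is only a feasible $(1+\epsilon)$-approximate solution this is $3(1+\epsilon)\,\mathrm{LP}\le 3(1+\epsilon)\,\mathrm{OPT}$, which after reparametrising $\epsilon$ gives the stated $(3+\epsilon)$-approximation.

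It remains to bound running times. For the exact $3$-approximation we compute an optimal LP solution (the non-combinatorial ingredient) and run the derandomization loop in the obvious way: at each of the $n$ steps we evaluate the change of $\Psi$ for each of the at most $n$ candidate pivots by summing the relevant charges over the $\Theta(n^3)$ bad triangles; both parts fit within $\widetilde O(n^5)$. For the combinatorial $(3+\epsilon)$-approximation we instead (i) replace the LP solver by a combinatorial multiplicative-weights routine that outputs a feasible $(1+\epsilon)$-approximate solution (equivalently, a near-optimal fractional packing of bad triangles) in time $\widetilde O(n^3)$, and (ii) implement the derandomization loop incrementally: since the clusters partition $V$, the pairs and bad triangles touched over the whole run sum to $\widetilde O(n^3)$, so by maintaining for every alive vertex the marginal effect of choosing it as the next pivot and refreshing these values only for the $O(|\mathcal C_i|\,n^2)$ triangles incident to each removed cluster $\mathcal C_i$ (tolerating additive error $\epsilon$ where convenient), the whole loop runs in $\widetilde O(n^3)$.

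The hard part is the step-local inequality with the \emph{unaltered} neighborhood: one must simultaneously charge the intra-cluster non-edge edits and the edge edits leaving the cluster to $x$-mass that is genuinely consumed (incident to a removed vertex), and verify that the charges behind the $3\cdot\mathrm{value}(x)$ bound of~\cite{pivoting} compose correctly when a whole cluster is removed at once---this is precisely the point where~\cite{deterministicPivoting} found it convenient to alter the cluster via an auxiliary graph, so the work lies in showing that the true remaining neighborhood already suffices. The secondary obstacle is the $\widetilde O(n^3)$ bound: a naive implementation spends $\Omega(n^3)$ per step, and getting below this requires the incremental bookkeeping and the $\epsilon$-slack sketched above together with a fast combinatorial (approximate) LP solver.
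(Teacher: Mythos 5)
Your general approach is the same as the paper's: both rest on the van Zuylen--Williamson pivoting framework, where a (near-)feasible fractional point $x$ of the bad-triplet covering LP is used to guide a deterministic pivot choice, the step-local averaging over bad triplets in the surviving graph shows a good pivot always exists, and a fast combinatorial covering-LP approximator (multiplicative weights \`a la Garg--K\"onemann/Fleischer) supplies $x$ in $\widetilde{O}(n^3)$ time. Your potential-function phrasing ($\Psi = $ committed cost $+\,3\cdot$ alive $x$-mass) is a cosmetic restatement of the paper's rule of picking the pivot minimizing the ratio $\bigl(\sum_{vw:uvw\in T(G)}1\bigr)\big/\bigl(\sum_{vw:uvw\in T(G)}x_{vw}\bigr)$; both derive from the same averaging inequality $3|T(G^{(i)})|\le 3\sum_{uvw\in T(G^{(i)})}(x_{uv}+x_{vw}+x_{wu})$, and the ``hard part'' you flag (composing the charges with the \emph{unaltered} pivot neighborhood) is handled cleanly by the observation that every pair in the denominator loses an endpoint in that step, so each pair contributes to at most one iteration.

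However, your second bullet has a genuine gap. You propose to ``compute an optimal LP solution'' for the non-combinatorial $3$-approximation and assert that this, together with a naive $O(n\cdot n\cdot n^3)$ derandomization loop, ``fits within $\widetilde{O}(n^5)$.'' The derandomization loop is fine (the paper even does it in $O(n^3)$ via precomputation), but solving the natural LP exactly with an all-purpose solver costs $\Omega(n^{3\omega})=\Omega(n^{7.11})$ time on $\Theta(n^3)$ constraints, which blows the claimed bound. The paper avoids this with a rounding trick you did not use: any correlation clustering has integer cost strictly below $\binom{n}{2}$, so a $(3+\epsilon)$-approximation with $\epsilon = 1/\binom{n}{2}$ is already an exact $3$-approximation. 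Plugging this $\epsilon$ into a near-linear-time \emph{approximate} covering-LP solver with $\widetilde{O}(N\epsilon^{-1})$ dependence (Allen-Zhu--Orecchia / Wang--Rao--Mahoney) yields $\widetilde{O}(n^3\cdot n^2)=\widetilde{O}(n^5)$; that is where the $n^5$ comes from, not from an exact LP solve. You need to replace the ``optimal LP'' ingredient with this integrality argument plus a fast $(1+\epsilon)$-approximate covering-LP solver to make the running time claim correct.
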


We note that the final approximation of our algorithm for \CCC{} depends on the approximation of the applied \textsc{PIVOT} algorithm.
If it was possible to select the order of the pivots in a way that guarantees a better approximation, this would immediately improve the approximation of our \CCC{} algorithm.
For this reason, we study lower bounds for \textrm{PIVOT};
currently, we know of instances for which selecting the pivots at random doesn't give a better-than-$3$-approximation in expectation~\cite{pivoting}; however, for these particular instances there \emph{does} exist a way to choose the pivots that gives better approximations.
Ideally, we want a lower bound applying for any order of the pivots (such as the lower bound for the generalized \textsc{PIVOT} solving the Ultrametric Violation Distance problem in \cite{l0}).
We show that our algorithm is optimal, as there exist instances where no ordering of the pivots will yield a better-than-$3$-approximation.

\begin{theorem}[\textrm{PIVOT} Lower Bound] \label{thm:pivot-lower-bound}
There is no constant $\epsilon>0$ for which there exists a PIVOT algorithm for \CC{} with approximation factor $3-\epsilon$.
\end{theorem}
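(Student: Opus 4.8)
The plan is to construct, for every $\varepsilon > 0$, a single fixed graph $G_\varepsilon$ on which \emph{every} ordering of the pivots produces a clustering of cost at least $(3-\varepsilon)\cdot\mathrm{OPT}(G_\varepsilon)$. Since a \textsc{PIVOT} algorithm is deterministic once the pivot order is fixed (and a randomized one is just a distribution over orders, so some order is at least as bad as the expectation), it suffices to show $\max_{\pi}\mathrm{cost}(\textsc{Pivot}_\pi(G_\varepsilon))\ge(3-\varepsilon)\,\mathrm{OPT}$; a fortiori no single algorithm can beat it. The natural starting point is the standard tight example for \textsc{CC-PIVOT} (e.g.\ the one in~\cite{pivoting}), which shows the ratio $3$ is tight \emph{in expectation} but admits a good pivot order; I would take many disjoint copies, or a blown-up/recursive version, of such a gadget so that no matter which order is chosen, a constant fraction of the copies are ``hit'' in the bad way.

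Concretely, I would look for a small gadget $H$ with the property that \emph{for at least a $(1-\varepsilon)$-fraction of vertices $v$, choosing $v$ as the first pivot already forces cost $\ge (3-\varepsilon)\,\mathrm{OPT}(H)$ on that gadget}, regardless of what happens afterward. The canonical candidate is a graph whose optimum is a single near-clique (so $\mathrm{OPT}$ is small) but where almost every vertex, when pivoted, either carves off a cluster that separates many $+$-edges or sweeps in many $-$-edges. One clean realization: take a set of $k$ ``center'' vertices pairwise joined by $-$-edges (non-edges) together with a large common $+$-neighborhood arranged so that (i) the clustering putting everything in one cluster costs only $\binom{k}{2}$ plus lower-order terms, but (ii) pivoting on almost any vertex splits the common neighborhood, and the recursion cannot recover. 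Then form $G_\varepsilon$ as $N$ vertex-disjoint copies of $H$ (with all inter-copy pairs being $-$-edges, which any reasonable clustering respects at zero extra cost since \textsc{Pivot} never merges across copies). For any pivot order $\pi$, in each copy the \emph{first} pivot chosen inside that copy is bad with probability $\ge 1-\varepsilon$ over the uniform choice — but here we need this to hold for \emph{every} order, so the gadget must be symmetric enough that \emph{every} vertex is a bad first pivot (up to an $\varepsilon$-fraction of exceptions whose total contribution is negligible). Summing the per-copy lower bounds gives $\mathrm{cost}(\textsc{Pivot}_\pi(G_\varepsilon)) \ge (3-\varepsilon)\sum_i \mathrm{OPT}(H_i) = (3-\varepsilon)\,\mathrm{OPT}(G_\varepsilon)$.

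The main obstacle is exactly the strengthening from ``bad in expectation over a random pivot'' to ``bad for \emph{every} choice of pivot order''. In the classical tight instance there is a distinguished good pivot (pivoting on the right vertex recovers OPT), so a straightforward product of copies is not enough — an adversarial order could pick the good pivot in every copy. I therefore expect the real work to be in designing $H$ so that \emph{no} vertex is a good first pivot: every vertex must, upon being pivoted, irreparably pay close to $3\,\mathrm{OPT}$. I would try to build this by ``folding'' the asymmetry out, e.g.\ overlaying several rotated copies of the basic gadget on a shared vertex set, or using an algebraically symmetric construction (a suitable Cayley-graph-like structure on $+/-$ labels) so that the bad-pivot property is vertex-transitive. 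A second, lesser obstacle is controlling the recursion: after the first bad pivot the remaining instance could in principle be solved cheaply, so I must ensure the cost \emph{already incurred} by the first cluster (edges cut out of it plus non-edges inside it), \emph{before} any recursion, is itself $\ge(3-\varepsilon)\,\mathrm{OPT}(H)$; making $\mathrm{OPT}(H)$ small (a single near-clique) while the first-cluster cost is large is the lever that makes this work. Finally I would verify the arithmetic: choose the gadget size large enough that the $\varepsilon$-fraction of exceptional vertices and all lower-order terms are absorbed, and conclude that for every constant $\varepsilon>0$ the instance $G_\varepsilon$ defeats every \textsc{PIVOT} algorithm, proving the theorem.
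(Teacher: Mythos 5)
Your high-level plan is right: you correctly diagnose that the classical tight example for \textsc{CC-PIVOT} admits a good pivot, so a blind product of copies fails, and that what is needed is a gadget symmetric enough that \emph{every} vertex is a bad first pivot. But the concrete gadget you sketch does not have this property, and you never actually arrive at one that does, so the proposal has a genuine gap.

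Consider your proposed gadget: $k$ pairwise non-adjacent ``centers'' plus a large common $+$-neighborhood $S$ forming a clique and joined to all centers. Pivoting on a center $c$ does carve off $\{c\}\cup S$ and pays $(k-1)|S|$ against $\mathrm{OPT}=\binom{k}{2}$, which can be made arbitrarily bad. However, pivoting on any vertex $s\in S$ immediately puts \emph{everything} (all of $S$ and all $k$ centers) into a single cluster, which is exactly the optimal clustering. So $S$ consists entirely of good pivots, and if $|S|\gg k$ (which you need to make the center-pivot ratio large) the overwhelming majority of vertices are good pivots. An adversarial pivot order simply picks some $s\in S$ first. Taking disjoint copies does not help, since the adversary can pick a good pivot in each copy. ``Overlaying rotated copies'' or a Cayley-graph construction might fix this, but you leave that entirely unrealized, and the symmetry you need is much easier to achieve directly.

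The paper's construction is $K_{2n}$ minus a perfect matching: every vertex has \emph{exactly one} non-neighbor, so the graph is vertex-transitive and every first pivot $u$ produces the same outcome --- a cluster of size $2n-1$ and a singleton $\{\bar u\}$, costing $(2n-2)+(n-1)=3n-3$ against $\mathrm{OPT}=n$ (one big cluster, paying only the $n$ missing matching edges). No multi-copy argument, no ``fraction of exceptional vertices,'' and no recursion bookkeeping is needed: the whole graph collapses after one pivot, and the ratio $3-\tfrac{3}{n}$ already does the job. The lever you were reaching for --- make $\mathrm{OPT}$ a single near-clique so that the first pivot's damage is immediately comparable to $\mathrm{OPT}$ --- is exactly the one the paper uses, but the uniform ``one non-neighbor per vertex'' structure is what removes all good pivots, and that is the concrete insight your proposal is missing.
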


We also introduce the \NWCC{} problem, which is related to (but incomparable, due to their asymmetric assignment of weights) a family of problems introduced in \cite{communityVeldt}. As weighted Correlation Clustering is equivalent to Multicut, improving over the current~$\Theta(\log{n})$-approximation seems out of reach. The advantage of our alternative type of weighted Correlation Clustering is that it is natural and approximable within a constant factor.

In \NWCC{} we assign weights to the nodes, rather than to pairs of nodes. Violating the preference between nodes $u, v$ with weights $\omega_u$ and $\omega_v$ incurs cost $\omega_u \cdot \omega_v$.
We provide three algorithms computing (almost-)3-approximations for \NWCC{}:

\begin{theorem}[Node-Weighted Correlation Clustering, Deterministic] \label{thm:node-weighted-deterministic}
There are the following deterministic algorithms for Node-Weighted Correlation Clustering:
\begin{itemize}
    \item A combinatorial $(3 + \epsilon)$-approximation, for any constant $\epsilon > 0$, in time $\widetilde\Order(n^3)$.
    \item A non-combinatorial $3$-approximation in time $\Order(n^{7.116})$.
\end{itemize}
\end{theorem}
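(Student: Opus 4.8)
The plan is to regard \NWCC{} with integer weights as the special case of \CCC{} obtained by the following blow-up: replace each node $u$ by $\omega_u$ identical \emph{copies}, joined pairwise by ``$+$'' edges, with the same $\pm$-pattern towards every other node as $u$, and add a must-link hard constraint between any two copies of the same node. Call this unweighted instance $G'$. Two observations drive the proof. First, a run of any \textsc{PIVOT} algorithm on $G'$ respects all must-link constraints \emph{for free}: when a copy of a node $x$ is chosen as pivot, the new cluster is exactly the union of the copy-class of $x$ with the copy-classes of all original neighbours of $x$, so every cluster is a union of whole classes. Hence --- unlike general \CCC{}, where a lossy preprocessing step is needed --- the approximation ratio of the underlying \textsc{PIVOT} algorithm is inherited with no loss. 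Second, $\mathrm{OPT}(G')$ as an unconstrained \CC{} instance equals $\mathrm{OPT}_{\mathrm{NWCC}}(G)$: by a convexity/exchange argument, for any fixed clustering of the remaining vertices, splitting the $\omega_u$ copies of $u$ across several clusters never decreases the cost (the cross-cluster ``$+$'' edges among copies only add to it), so there is an optimal clustering of $G'$ respecting all classes, and class-respecting clusterings of $G'$ correspond, cost-for-cost with the weights $\omega_u\omega_v$, to clusterings of $G$. (For real weights the blow-up is only an expositional device; one instead analyses a weighted \textsc{PIVOT} directly on $G$.)

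For the combinatorial $(3+\epsilon)$-approximation, note that \textsc{CC-PIVOT} on $G'$ picks a uniformly random copy, i.e.\ picks an original node $u$ with probability proportional to $\omega_u$ --- a weighted-sampling variant of \textsc{CC-PIVOT}. I would derandomize it using the same technique as in \cref{thm:CCCombinatorial} (weight bucketing together with a derandomized weighted-sampling step), but run entirely on the $n$-vertex graph $G$: the pivot-selection rule and the conditional-expectation/potential estimates it relies on depend only on the multiset of residual node weights and the $\pm$-pattern, never on individual copies, so the $\widetilde O(n^3)$ bound of \cref{thm:CCCombinatorial} is preserved even though $G'$ can be enormous. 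No new idea is needed for the analysis: the bad-triangle charging argument behind the $3$-approximation of \textsc{CC-PIVOT} and of its derandomization goes through verbatim with an extra factor $\omega_a\omega_b\omega_c$ on each bad triangle $\{a,b,c\}$, since which triples are bad triangles and which edge each would cause to be violated does not depend on the weights.

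For the non-combinatorial $3$-approximation, I would instead solve the node-weighted natural LP --- minimize $\sum_{\{u,v\}\in E}\omega_u\omega_v\,x_{uv} + \sum_{\{u,v\}\notin E}\omega_u\omega_v\,(1-x_{uv})$ over $x\ge 0$ subject to the $\Theta(n^3)$ triangle inequalities $x_{uv}\le x_{uw}+x_{wv}$ --- and use its optimal fractional solution to choose a good pivot ordering, exactly as in the non-combinatorial \textsc{PIVOT} of \cref{thm:CCCombinatorial}; this \textsc{PIVOT} variant, run on $G$ (equivalently on $G'$), outputs a class-respecting clustering of cost at most $3$ times the LP value, which lower-bounds $\mathrm{OPT}(G') = \mathrm{OPT}_{\mathrm{NWCC}}(G)$. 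The running time is dominated by solving the LP with a general-purpose solver, $O(n^{3\omega}) = O(n^{7.116})$, the pivoting itself being $\widetilde O(n^3)$. (The faster $\widetilde O(n^5)$ LP route available in the unweighted case does not obviously transfer, because the node-weighted objective can be super-polynomially large, which rules out width-dependent approximate solvers.)

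The main obstacle I expect is not the approximation analysis, which as sketched is weight-robust, but making the algorithm \emph{copy-oblivious}: proving that both the derandomized weighted pivot rule and the LP-guided pivot ordering can be evaluated on the $n$-vertex \emph{weighted} graph within $\widetilde O(n^3)$ (resp.\ $O(n^{7.116})$) time, with no dependence on $\sum_u\omega_u$, so that the conceptual blow-up $G'$ is never materialized. A secondary, more technical point is to work out precisely how the derandomization of \cref{thm:CCCombinatorial} interacts with non-uniform pivot probabilities --- this is presumably where the weighted-sampling ingredient credited in the acknowledgements is used.
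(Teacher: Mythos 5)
Your blow-up $G'$ is a sound conceptual device, and the paper itself mentions exactly this reduction (each node $u$ replaced by $\omega_u$ copies tied by must-link constraints) — but \emph{only} as intuition for the \emph{randomized} algorithm of \cref{thm:node-weighted-randomized}, where the Alias-Method/bucketing machinery you allude to actually lives. For the \emph{deterministic} result you are trying to prove, you have misidentified what the technique of \cref{thm:CCCombinatorial} is. That theorem does not derandomize \textsc{CC-PIVOT} by ``weight bucketing together with a derandomized weighted-sampling step''; it has no sampling component at all. It (i) approximately solves the \emph{charging LP} of \cref{fig:lps} — a covering LP over bad-triplet constraints $x_{uv}+x_{vw}+x_{wu}\ge 1$ — by the Garg--K\"onemann/Fleischer multiplicative-weights method (\Charge{}, \cref{alg:charge}), and then (ii) runs the van Zuylen--Williamson pivot rule \Cluster{} (\cref{alg:cluster}) that picks the pivot minimizing the ratio of bad-triplet count to charge. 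Since you never introduce the charging LP, your combinatorial $(3+\epsilon)$ route has no algorithm to run, and the ``copy-oblivious'' worry you raise at the end is precisely the unfinished part.

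The missing idea, which is the whole content of the paper's proof, is that the charging LP generalizes to node weights while \emph{remaining a covering LP}: replace the constraint $x_{uv}+x_{vw}+x_{wu}\ge 1$ with $\tfrac{x_{uv}}{\omega_u\omega_v}+\tfrac{x_{vw}}{\omega_v\omega_w}+\tfrac{x_{wu}}{\omega_w\omega_u}\ge 1$ (still nonnegative coefficients, still $\Theta(n^3)$ constraints with $O(1)$ nonzeros each), and in the pivot rule replace the numerator $\sum_{vw:uvw\in T(G)} 1$ by $\sum_{vw:uvw\in T(G)}\omega_v\omega_w$. The correctness proof is a verbatim reweighting of the bad-triplet charging argument — your remark that ``which triples are bad triangles\ldots does not depend on the weights'' is exactly right, and this is where it gets used. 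The combinatorial covering-LP solver then runs in $\widetilde O(n^3\epsilon^{-3})$ as before, giving the $(3+\epsilon)$-approximation directly on the $n$-vertex weighted graph with no blow-up. For the non-combinatorial $3$-approximation the paper solves the \emph{same} covering LP \emph{exactly} with an all-purpose LP solver in $O((n^3)^{\omega})$; your switch to the triangle-inequality natural LP is a genuinely different LP (closer to van Zuylen--Williamson's original) and would likely also work via their framework, but it is not ``exactly as in \cref{thm:CCCombinatorial}'', which never touches that LP. Your observation that the $\widetilde O(n^5)$ approximate-solver shortcut does not carry over because the precision must scale with the total weight is correct and is the reason the paper falls back to an exact solver here.
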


\begin{theorem}[Node-Weighted Correlation Clustering, Randomized] \label{thm:node-weighted-randomized}
There is a randomized combinatorial algorithm for Node-Weighted Correlation Clustering computing an expected $3$-approximation in time $\Order(n + m)$ with high probability $1 - 1/\poly(n)$.
\end{theorem}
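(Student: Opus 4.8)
The plan is to realize the randomized algorithm as the \textsc{CC-PIVOT} algorithm run on an auxiliary \emph{unweighted} instance obtained by ``blowing up'' each node into a group of identical copies, and then to argue that this execution can be simulated \emph{directly} on the original weighted instance in linear time, without ever materializing the (potentially huge) blow-up.

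\emph{Reduction to a blow-up.} We may assume the weights $\omega_u$ are positive integers: multiplying all weights by a common constant scales every clustering's cost by the square of that constant, so rational weights can be cleared to integers, and irrational weights are handled afterwards by continuity (both $\mathrm{OPT}$ and the expected cost of our algorithm are continuous functions of the $\omega_u$, being, respectively, a minimum over finitely many clusterings of their weighted costs, and a finite sum of outcome-probabilities times outcome-costs). Given integer weights, let $G'$ be the unweighted graph on $N:=\sum_u \omega_u$ vertices in which each node $u$ is replaced by a group $B_u$ of $\omega_u$ copies; all pairs inside a group are ``$+$'' edges, and a pair between $B_u$ and $B_v$ is a ``$+$'' edge iff $uv\in E$. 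Lifting any clustering of $G$ to $G'$ by keeping each group $B_u$ intact shows $\mathrm{OPT}(G')\le\mathrm{OPT}(G)$, since the cross-pairs between $B_u$ and $B_v$ contribute exactly $\omega_u\omega_v$ whenever $uv$ is a mistake, while the intra-group pairs (all ``$+$'', all together) are never mistakes.

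\emph{Simulating \textsc{CC-PIVOT} on the blow-up.} Run \textsc{CC-PIVOT} on $G'$. The key is the invariant that at every iteration, for each node $u$ of $G$, either all of $B_u$ survives or none of it does: all copies in $B_u$ have identical neighborhoods, so picking a pivot $p\in B_u$ creates the cluster $B_u\cup\bigcup_{v:\,uv\in E,\,B_v\text{ alive}}B_v$, removing whole groups at once, and a pivot outside $B_u$ absorbs all of $B_u$ or none of it. Under this invariant, choosing a uniformly random surviving copy amounts to first choosing a surviving node $u$ with probability proportional to $\omega_u$ and then (irrelevantly) a copy in $B_u$. Coupling the randomness this way, the execution of \textsc{CC-PIVOT} on $G'$ projects exactly onto an execution of the following weighted pivot procedure on $G$: repeatedly pick a surviving pivot $u$ with probability $\propto\omega_u$, cluster it with its surviving neighbors, and recurse; moreover the $G'$-cost of the execution equals the node-weighted cost of its projection (intra-group pairs never contribute, and a cross-pair mistake between $B_u$ and $B_v$ contributes $\omega_u\omega_v$). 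Since \textsc{CC-PIVOT} is a randomized $3$-approximation for unweighted \CC{},
\[
\mathbb{E}[\mathrm{cost}(\text{weighted pivot on }G)]\;=\;\mathbb{E}[\mathrm{cost}(\textsc{CC-PIVOT on }G')]\;\le\;3\,\mathrm{OPT}(G')\;\le\;3\,\mathrm{OPT}(G).
\]

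\emph{Linear-time implementation.} It remains to run the weighted pivot procedure on $G$ in time $\Order(n+m)$ — building $G'$ is impossible since $N$ may be superpolynomial. Generating a sequence of pivots with probabilities proportional to the remaining weights is an ``exponential race'': draw independent $E_u\sim\mathrm{Exp}(\omega_u)$ and process the nodes in increasing order of $E_u$, making a node a pivot when it is reached while still alive (memorylessness of the exponential makes each successive surviving minimum distributed proportionally to the surviving weights). A heap already yields $\widetilde\Order(n+m)$; to obtain $\Order(n+m)$ with probability $1-1/\poly(n)$ we instead maintain a bucketed weighted-sampling structure supporting ``sample a surviving node $\propto$ weight'' and ``delete'' in amortized $\Order(1)$, so that the running time is dominated by the $\Order(n+m)$ work of forming clusters and deleting their incident edges. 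The main — and only genuinely technical — obstacle is designing this sampler to run in linear time when the weights span an unbounded range; this is exactly where randomization is used (rejection sampling over geometrically scaled buckets, with the high-probability time bound following from concentration of the number of rejections).
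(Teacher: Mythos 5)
Your proposal is correct, and for the approximation guarantee it takes a genuinely different (and arguably cleaner) route than the paper. The paper proves the expected $3$-approximation directly via the node-weighted covering LP (\cref{fig:lp-node-weighted}), bounding the expected cost of each iteration against the LP constraints in a computation that parallels its deterministic analysis, and it only mentions the blow-up into $\omega_u$ copies as an afterthought remark connecting to \CCC{}. You instead make the blow-up the centerpiece: you observe that the per-group copies have identical neighborhoods so \textsc{CC-PIVOT} on the blown-up graph $G'$ keeps groups intact, that the uniform choice of a surviving copy projects exactly onto weighted sampling of surviving nodes, that the realized $G'$-cost equals the node-weighted cost on $G$, and that $\OPT(G')\le\OPT(G)$ by lifting; the bound then follows directly from the known $3$-approximation of \textsc{CC-PIVOT} with no new LP work. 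This is a valid and conceptually transparent argument, and it correctly only needs the inequality $\OPT(G')\le\OPT(G)$ (you do not need the blow-up to preserve optima exactly, which is good, since $G'$'s optimum might not keep groups intact). For the running time, your sketch of the sampler (weight-ranged buckets, rejection sampling, concentration of the number of rejections, amortized constant time) matches the paper's construction in \cref{lem:weightedSampling}, which combines Walker's Alias Method with rejection over $\Order(\log n)$ geometric buckets and periodic rebuilding; you leave those details to the reader, but you identify the right obstacle and the right tool. Two minor points to align with the paper: the linear-time bound does rely on the standing assumption that weights are $\poly(n)$ (so that there are only $\Order(\log n)$ buckets and numbers fit in $\Order(1)$ words), which you should state rather than gesture at ``unbounded range''; and the continuity argument you use to dispose of irrational weights is hand-wavy compared to the paper's remark, though it can be made precise.
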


\subsection{Overview of Our Techniques}
\subparagraph*{\CCC{}.}
We obtain a faster algorithm for \CCC{} by
\begin{enumerate}
	\item modifying the input graph using a subroutine aware of the hard-constraints, and
	\item applying a \textsc{PIVOT} algorithm on this modified graph.
\end{enumerate}
In fact, no matter what \textsc{PIVOT} algorithm is used, the output clustering respects all hard constraints when the algorithm is applied on the modified graph.

To motivate this two-step procedure, we note that inputs exist where \emph{no} PIVOT algorithm, if applied to the unmodified graph, would respect the hard constraints.
One such example is the cycle on four vertices, with two vertex-disjoint edges made into hard constraints.

The solution of \cite{deterministicPivoting} is similar to ours, as it also modifies the graph before applying a \CC{} algorithm.
However, both their initial modification and the following \CC{} algorithm require solving the standard LP, which is expensive ($\Omega(n^{7.113})$ time).
In our case both steps are implemented with deterministic and combinatorial algorithms which brings the running time down to $\widetilde{O}(n^3)$.

For the first step, our algorithm carefully modifies the input graph so that on one hand the optimal cost is not significantly changed, and on the other hand any \textsc{PIVOT} algorithm on the transformed graph returns a clustering that respects all hard constraints.
For the second step, we use a deterministic combinatorial \textsc{PIVOT} algorithm.

Concerning the effect of modifying the graph, roughly speaking we get that the final approximation factor is $\myBetaPlusOne{}\cdot \alpha + 3$, where $\alpha$ is the approximation factor of the \textsc{PIVOT} algorithm we use. Plugging in $\alpha = 3+\epsilon$ from \cref{thm:CCCombinatorial} we get the first combinatorial constant-factor approximation for \CCC{} in $\widetilde{O}(n^3)$ time.

\subparagraph*{\NWCC{}.}
We generalize the deterministic combinatorial techniques from before to the Node-Weighted Correlation Clustering problem. In addition, we also provide a very efficient randomized algorithm for the problem. It relies on a weighted random sampling technique.

One way to view the algorithm is to reduce Node-Weighted Correlation Clustering to an instance of \CCC{}, with the caveat that the new instance's size depends on the weights (and can thus even be exponential).
Each node $u$ is replaced by a set of nodes of size related to $u$'s weight and these nodes have constraints forcing them to be in the same cluster.

We show that we can simulate a simple randomized \textsc{PIVOT} algorithm on that instance, where instead of sampling uniformly at random, we sample with probabilities proportional to the weights.
Assuming polynomial weights, we can achieve this in linear time.
To do so, we design an efficient data structure supporting such sampling and removal of elements.

It is easy to implement such a data structure using any balanced binary search tree, but the time for constructing it and applying all operations would be $O(n\log{n})$.
Using a non-trivial combination of the Alias Method~\cite{Walker74,alias} and Rejection Sampling, we achieve a linear bound.

Due to space constraints the presentation of our algorithms for \NWCC{} is deferred to \cref{sec:nodeWeightedAppendix}.

\subparagraph*{Deterministic PIVOT algorithms.}
Our algorithms are based on a simple framework by van Zuylen and Williamson~\cite{deterministicPivoting}. In this framework we assign a nonnegative ``charge'' to each pair of nodes. Using these charges, a \textsc{PIVOT} algorithm decides which pivot to choose next. The approximation factor depends on the total charge (as compared with the cost of an optimal clustering), and the minimum charge assigned to any bad triplet~(an induced subgraph~$K_{1,2}$).

The reason why these bad triplets play an important role is that for any bad triplet, any clustering needs to pay at least $1$.
To see this, let $uvw$ be a bad triplet with $uv$ being the only missing edge.
For a clustering to pay $0$, it must be the case that both $uw$ and $vw$ are together. However, this would imply that $uv$ are also together although they prefer not to.

Our combinatorial $(3+\epsilon)$-approximation uses the multiplicative weights update method, which can be intuitively described as follows: We start with a tiny charge on all pairs. Then we repeatedly find a bad triplet $uvw$ with currently minimal charge (more precisely: for which the sum of the charges of $uv, vw, wu$ is minimal), and scale the involved charges by $1+\epsilon$. One can prove that this eventually results in an almost-optimal distribution of charges, up to rescaling. 

For this purpose it suffices to show that the total assigned charge is not large compared to the cost of the optimal correlation clustering. We do so by observing that our algorithm $(1+\epsilon)$-approximates the covering LP of \cref{fig:lps}, which we refer to as the \emph{charging~LP}.

Our faster deterministic non-combinatorial algorithm solves the charging LP using an LP solver tailored to covering LPs~\cite{Allen-ZhuO19,WangRM16}.
An improved solver for covering LPs would directly improve the running time of this algorithm.

\begin{figure}[h]
\caption{The primal and dual LP relaxations for Correlation Clustering, which we refer to as the \emph{charging LP}. $T(G)$ is the set of all bad triplets in $G$.} \label{fig:lps}
\vskip -1.5ex\rule{\linewidth}{.5pt}
\begin{mini}|s|<b>{}{\sum_{uv \in \binom V2} x_{uv}}{}{}
    \addConstraint{x_{uv} + x_{vw} + x_{wu}}{\geq 1\quad}{\forall uvw \in T(G)}
    \addConstraint{x_{uv}}{\geq 0\quad}{\forall uv \in \textstyle\binom{V}{2}}
\end{mini}
\rule{\linewidth}{.5pt}
\begin{maxi}|s|<b>{}{\sum_{uvw \in T(G)} y_{uvw}}{}{}
    \addConstraint{\sum_{w : uvw \in T(G)} y_{uvw}}{\leq 1\quad}{\forall uv \in \textstyle \binom{V}{2}}
    \addConstraint{y_{uvw}}{\geq 0\quad}{\forall uvw \in T(G)}
\end{maxi}
\rule{\linewidth}{.5pt}
\end{figure}

\subparagraph*{Lower Bound.}
Our lower bound is obtained by taking a complete graph $K_n$ for some even number of vertices $n$, and removing a perfect matching. Each vertex in the resulting graph is adjacent to all but one other vertex and so \emph{any} \textsc{PIVOT} algorithm will partition the vertices into a large cluster of $n-1$ vertices and a singleton cluster. A non \textsc{PIVOT} algorithm, however, is free to create just a single cluster of size $n$, at much lower cost. The ratio between these solutions tends to $3$ with increasing $n$.

We note that in \cite{pivoting} the authors proved that \textsc{CC-PIVOT}'s analysis is tight. That is, its expected approximation factor is not better than $3$. However, their lower bound construction (a complete graph $K_n$ minus one edge) only works for \textsc{CC-PIVOT}, not for \textsc{PIVOT} algorithms in general.

\subsection{Open Problems}
We finally raise some open questions.
\begin{enumerate}
    \item Can we improve the approximation factor of Constrained Correlation Clustering from $16$ to $3$ while keeping the running time at $\widetilde O(n^3)$?
    \item We measure the performance of a \textsc{PIVOT} algorithm by comparing it to the best correlation clustering obtained by \emph{any} algorithm. But as \cref{thm:pivot-lower-bound} proves, there is no \textsc{PIVOT} algorithm with an approximation factor better than $3$. If we instead compare the output to the best correlation clustering obtained by a \emph{PIVOT algorithm}, can we get better guarantees (perhaps even an exact algorithm in polynomial time)?
    \item In the Node-Weighted Correlation Clustering problem, we studied the natural objective of minimizing the total cost $\omega_v \cdot \omega_u$ of all violated preferences $uv$. Are there specific applications of this problem? Can we achieve similar for other cost functions such as~\makebox{$\omega_v + \omega_u$}?
\end{enumerate}

\section{Preliminaries}
We denote the set $\{1,\ldots,n\}$ by $[n]$. We denote all subsets of size $k$ of a set $A$ by $\binom{A}{k}$. The symmetric difference between two sets $A,B$ is denoted by $A\symdiff B$. We write~$\poly(n) = n^{\Order(1)}$ and~$\widetilde\Order(n) = n (\log n)^{\Order(1)}$.

In this paper all graphs $G=(V,E)$ are undirected and unweighted. We typically set~$n = |V|$ and $m = |E|$. For two disjoint subsets $U_1,U_2 \subseteq V$, we denote the set of edges with one endpoint in~$U_1$ and the other in~$U_2$ by $E(U_1,U_2)$. The subgraph of $G$ induced by vertex-set $U_1$ is denoted by $G[U_1]$. For vertices~$u, v, w$ we often abbreviate the (unordered) set $\set{u, v}$ by $uv$ and similarly write $uvw$ for $\set{u, v, w}$.
We say that $uvw$ is a \emph{bad triplet} in~$G$ if the induced subgraph $G[uvw]$ contains exactly two edges (i.e., is isomorphic to $K_{1, 2}$). Let $T(G)$ denote the set of bad triplets in $G$.
We say that the edge set \(E_C\) of a clustering $C=\set{C_1,\ldots, C_k}$ of $V$ is the set of pairs with both endpoints in the same set in $C$. More formally, \(E_C = \bigcup_{i=1}^k \binom{C_i}{2}\).

We now formally define the problems of interest.

\begin{definition}[Correlation Clustering]
Given a graph $G = (V, E)$, output a clustering $C=\set{C_1, \ldots, C_k}$ of $V$ with edge set $E_C$ minimizing $|E \symdiff E_C|$.
\end{definition}

An algorithm for \CC{} is said to be a \textsc{PIVOT} algorithm if it is an instantiation of \cref{alg:pivot} (\cpageref{alg:pivot}).
That is, an algorithm which, based on some criterion, picks an unclustered node \(u\) (the \emph{pivot}), creates a cluster containing \(u\) and its unclustered neighbors in \((V, E)\), and repeats the process until all nodes are clustered.
In particular, the algorithm may not modify the graph in other ways before choosing a pivot.

The constrained version of Correlation Clustering is defined as follows.

\begin{definition}[Constrained Correlation Clustering]
Given a graph $G = (V, E)$, a set of friendly pairs~$F\subseteq \binom{V}{2}$ and a set of hostile pairs $H\subseteq \binom{V}{2}$, compute a clustering $C=\set{C_1,\ldots, C_k}$ of~$V$ with edge set $E_C$ such that no pair $uv \in F$ has $u, v$ in different clusters and no pair~$uv \in H$ has $u, v$ in the same cluster. The clustering $C$ shall minimize $|E \symdiff E_C|$.
\end{definition}

We also introduce \NWCC{}, a new related problem that may be of independent interest.

\begin{definition}[Node-Weighted Correlation Clustering]
Given a graph $G = (V, E)$ and positive weights $\set{\omega_u}_{u \in V}$ on the nodes, compute a clustering $C=\set{C_1, \ldots, C_k}$ of $V$ with edge set~$E_C$ minimizing
\begin{equation*}
    \sum_{uv \in E\symdiff E_C} \omega_u \cdot \omega_v \, .
\end{equation*}
\end{definition}

For simplicity, we assume that the weights are bounded by $\poly(n)$, and thereby fit into a constant number of word RAM cells of size $w=\Theta(\log n)$.
We remark that our randomized algorithm would be a polynomial (but not linear) time one if we allowed the weights to be of exponential size.

The Node-Weighted Correlation Clustering problem clearly generalizes Correlation Clustering since we pay $w(u)\cdot w(v)$ (instead of $1$) for each pair $uv$ violating a preference.
\section{Combinatorial Algorithms for \CCHC{}} \label{sec:CCHC}
Let us fix the following notation: A connected component in~$(V,F)$ is a \emph{supernode}. The set of supernodes partitions $V$ and is denoted by $SN$. Given a node $u$, we let $s(u)$ be the unique supernode containing $u$. Two supernodes $U,W$ are \emph{hostile} if there exists a hostile pair $uw$ with $u\in U, w\in W$. Two supernodes $U,W$ are \emph{connected} if~$|E(U,W)|\ge 1$. Two supernodes $U,W$ are \emph{$\beta$-connected} if $|E(U,W)| \ge \beta \cdot |U| \cdot |W|$.

The first step of our combinatorial approach is to transform the graph $G$ into a more manageable form $G'$, see procedure \Transform of \cref{alg:main}.
The high-level idea is that in $G'$:
\smallskip
\begin{enumerate}
    \item \label{prop:friendly} If $uv$ is a friendly pair, then $u$ and $v$ are connected and have the same neighborhood.
    \item \label{prop:hostile} If $uv$ is a hostile pair, then $u$ and $v$ are not connected and have no common neighbor.
    \item \label{prop:approximation} An $O(1)$-approximation of the $G'$ instance is also an $O(1)$-approximation of the $G$ instance.
\end{enumerate}
\smallskip
As was already noticed in~\cite{deterministicPivoting}, Properties~\ref{prop:friendly} and~\ref{prop:hostile} imply that a \textsc{PIVOT} algorithm on~$G'$ gives a clustering satisfying the hard constraints.
Along with Property~\ref{prop:approximation} and our deterministic combinatorial \textsc{PIVOT} algorithm for Correlation Clustering in \cref{thm:CCCombinatorial}, we prove \cref{thm:CCCCombinatorial}.
Properties~\ref{prop:friendly} and \ref{prop:hostile} (related to correctness) and the running time ($\widetilde{O}(n^3)$) of our algorithm are relatively straightforward to prove.
Due to space constraints, their proofs can be found in \cref{sec:analysisCCC}.
In this section we instead focus on the most technically challenging part, the approximation guarantee.

\begin{figure}[t]
  \centering
  \subfloat[][The original graph. The set of friendly pairs is $F=\{\{1,2\}, \{2,3\}, \{4,5\}, \{6,7\}\}$, and the only hostile pair in $H$ is $\{2,5\}$.]{\includegraphics[page=1, width=.4\textwidth]{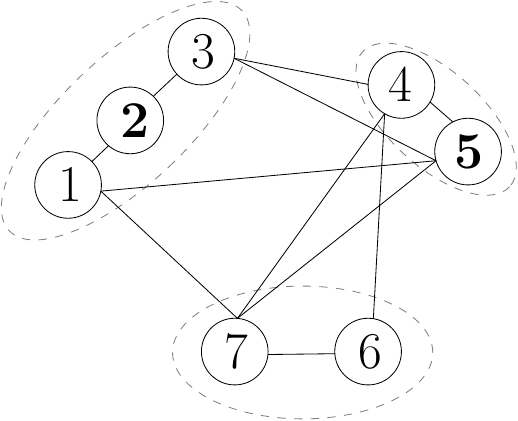}}\quad\quad
  \subfloat[][Line~\ref{line:connectSupernode} introduces edge $\{1,3\}$, and Line~\ref{line:disconnectSupernode} disconnects the supernodes containing $2$ and $5$.]{\includegraphics[page=2, width=.4\textwidth]{transform.pdf}}\\
  \subfloat[][Line~\ref{line:drop2} removes the pair of edges $\{1,7\}$ and $\{4,6\}$ because $1,4$ are in hostile supernodes while $6,7$ are in the same supernode.]{\includegraphics[page=3, width=.4\textwidth]{transform.pdf}}\quad\quad
  \subfloat[][Line~\ref{line:round} introduces all edges connecting supernodes $\{4,5\}$ and $\{6,7\}$ because there were enough edges between them already.]{\includegraphics[page=4, width=.4\textwidth]{transform.pdf}}
  \caption{Illustrates an application of \textsc{TRANSFORM(G,F,H)} (Algorithm~\ref{alg:main}). In the transformed graph, for any two supernodes~$U_1,U_2$, either all pairs with an endpoint in $U_1$ and an endpoint in $U_2$ share an edge, or none of them do. Furthermore, all pairs within a supernode are connected and no hostile supernodes are connected.}
  \label{fig:sub1}
\end{figure}

Our algorithm works as follows (see also \Cref{fig:sub1}):
If some supernode is hostile to itself, then it outputs that no clustering satisfies the hard constraints. Else, starting from the edge set $E$, it adds all edges within each supernode. Then it drops all edges between hostile supernodes. Subsequently, it repeatedly detects hostile supernodes that are connected with the same supernode, and drops one edge from each such connection. Finally, for each $\beta$-connected pair of supernodes, it connects all their nodes if $\beta>\thrshld{}$, and disconnects them otherwise\footnote{The constant $\thrshld{}$ optimizes the approximation factor. The natural choice of $0.5$ would still give a constant approximation factor, albeit slightly worse.}.

From a high-level view, the first two modifications are directly related to the hard constraints: If $u_1,u_2$ are friendly and $u_2,u_3$ are friendly, then any valid clustering has $u_1,u_3$ in the same cluster, even if a preference discourages it. Similarly, if $u_1,u_2$ are friendly, $u_3,u_4$ are friendly, but $u_1,u_3$ are hostile, then any valid clustering has $u_2,u_4$ in different clusters, even if a preference discourages it. Our first two modifications simply make the preferences consistent with the hard constraints.

The third modification guarantees that hostile supernodes share no common neighbor.
A \textsc{PIVOT} algorithm will thus never put their nodes in the same cluster, as the hostility constraints require. Concerning the cost, notice that if hostile supernodes $U_1,U_2$ are connected with supernode $U_3$, then no valid clustering can put all three of them in the same cluster. Therefore we always need to pay either for the connections between $U_1$ and~$U_3$, or for the connections between $U_2$ and $U_3$. 

Finally, after the rounding step, for each pair of supernodes $U_1, U_2$, the edge set $E(U_1, U_2)$ is either empty or the full set of size $|U_1| \cdot |U_2|$. This ensures that a \textsc{PIVOT} algorithm always puts all nodes of a supernode in the same cluster, thus also obeying the friendliness constraints. Concerning the cost of the rounded instance, a case analysis shows that it is always within a constant factor of the cost of the instance before rounding.

\begin{algorithm2e}[!ht]
  \SetKwFunction{MAIN}{ConstrainedCluster}

  \setcounter{AlgoLine}{0}
  
  \nonl \procedure{\Transform{$G=(V,E), F, H$}}{
  Compute the connected components of $(V,F)$\;
  \medskip
  \tcp{Impossible iff some pair must both be and not be in the same cluster.}
  \lIf{$\exists U\in SN$ hostile to itself}{\Return $G'=(\emptyset, \emptyset)$}
  
  \medskip
  \tcp{Connect nodes in the same supernode.}
  $E_1 \gets E \cup \set{uv \in \binom{V}{2} \mid s(u)=s(v)}$ \label{line:connectSupernode}\;
  
  \medskip
  \tcp{Disconnect pairs in hostile supernodes.}
  $E_2 \gets E_1 \setminus \set{uv \in \binom{V}{2} \mid \text{$s(u)$ and $s(v)$ are hostile}}$ \label{line:disconnectSupernode}\;
  
  \medskip
  \tcp{While hostile supernodes $U_1,U_2$ are both connected with super-\\node~$U_3$, drop an edge between $U_1,U_3$ and an edge between $U_2,U_3$}
  $E_3 \gets E_2$\;
  \While{$\exists U_1,U_2,U_3\in \binom{SN}{3}$ such that $U_1,U_2$ are hostile and $\exists u_1\in U_1, u_2\in U_2, u_3\in U_3, u_3'\in U_3$ such that $u_1u_3\in E_3, u_2u_3' \in E_3$}{ \label{line:drop2}
    $E_3 \gets E_3 \setminus \set{u_1 u_3, u_2 u_3'}$
  }
  
  \medskip
  \tcp{Round connections between pairs of supernodes}
  $E_4 \gets E_3$ \; \label{line:afterDrop2}
  \ForEach{$\{U_1,U_2\} \in \binom{SN}{2}$} { \label{line:round}
    $E_{U_1,U_2} \gets \{ u_1u_2 \mid u_1\in U_1, u_2\in U_2\}$\;
    \lIf{$|E_{U_1,U_2} \cap E_4| > \thrshld{}|U_1| \cdot |U_2|$} {$E_4 \gets E_4 \cup E_{U_1,U_2}$}
    \lElse {$E_4\gets E_4 \setminus E_{U_1,U_2}$}
  }

  \Return $G'=(V,E_4)$
  }

  \nonl\;
  \nonl \procedure{\MAIN{$G=(V,E), F, H$}}{
    $G' \gets \Transform{G=(V,E), F, H}$\;
    \lIf {$G'=(\emptyset,\emptyset)$}{\Return ``Impossible''}
    \Return $\textsc{PIVOT}(G')$\;
  }

\caption{The procedure $\textsc{ConstrainedCluster}$ is given a graph $G = (V, E)$ describing the preferences, a set of friendly pairs $F$ and a set of hostile pairs $H$. It creates a new graph $G'$ using the procedure $\textsc{Transform}$ and uses any $\textsc{PIVOT}$ algorithm on $G'$ to return a clustering.}
\label{alg:main}
\end{algorithm2e}

Formally, let $E'$ be the edge set of the transformed graph~$G'$, let \(E_3\) be the edge set at Line~\ref{line:afterDrop2} of \cref{alg:main} (exactly before the rounding step), $\OPT$ be the edge set of an optimal clustering for $E$ satisfying the hard constraints described by $F$ and $H$, $\OPT'$ be the edge set of an optimal clustering for the preferences defined by $E'$, and $E_C$ be the edge set of the clustering returned by our algorithm. Finally, let $\alpha$ be the approximation factor of the $\textsc{PIVOT}$ algorithm used.
\begin{lemma} \label{lem:supernode}
Given an instance $(V,E,F,H)$ of \CCHC, if two nodes $u_1,u_2$ are in the same supernode, then they must be in the same cluster.
\end{lemma}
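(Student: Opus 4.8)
The plan is to exploit the fact that supernodes are by definition the connected components of the graph $(V,F)$, together with the observation that ``lying in the same cluster'' is transitive. Fix any valid clustering $C$ for the instance $(V,E,F,H)$ (if no valid clustering exists, the statement is vacuously true, so assume one does). Since $u_1$ and $u_2$ lie in the same supernode, there is a path $u_1 = v_0, v_1, \dots, v_k = u_2$ in $(V,F)$, i.e.\ $\{v_i, v_{i+1}\} \in F$ for every $i \in \{0,\dots,k-1\}$.

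The key step is the base case: for each $i$, since $\{v_i,v_{i+1}\}$ is a friendly pair, the definition of \CCHC{} requires that $v_i$ and $v_{i+1}$ are not in different clusters of $C$; hence they lie in the same cluster. Now I would conclude by induction on $k$: the relation ``$x$ and $y$ are in the same cluster of $C$'' is an equivalence relation (clusters partition $V$), so chaining the $k$ individual statements $v_i \sim v_{i+1}$ yields $v_0 \sim v_k$, i.e.\ $u_1$ and $u_2$ are in the same cluster of $C$. Since $C$ was an arbitrary valid clustering, the claim follows.

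I do not expect any real obstacle here; the statement is essentially unwinding the definition of a supernode as an $F$-connected component and using transitivity of cluster membership. The only thing to be careful about is the degenerate case where the instance admits no valid clustering (for example when some supernode is hostile to itself, which is exactly the case \textsc{Transform} detects and rejects), in which case the implication holds vacuously.
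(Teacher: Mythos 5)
Your proof is correct and matches the paper's argument: both unwind the definition of a supernode as an $F$-connected component, take a path of friendly pairs, and propagate ``same cluster'' along the path by transitivity. Your explicit remark about the vacuous case (no valid clustering) is a harmless extra precaution the paper leaves implicit.
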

\begin{proof}
The proof follows by ``in the same cluster'' being a transitive property.

More formally, $u_1,u_2$ are in the same connected component in $(V,F)$, as $s(u_1)=s(u_2)$. Thus, there exists a path from $u_1$ to $u_2$. We claim that all nodes in a path must be in the same cluster. This is trivial if the path is of length $0$ ($u_1=u_2$) or of length $1$ ($u_1u_2\in F$). Else, the path is $u_1,w_1,\ldots,w_k,u_2$ for some $k\ge 1$. We inductively have that all of $w_1, \ldots, w_k, u_2$ must be in the same cluster, and $u_1$ must be in the same cluster with $w_1$ because $u_1w_1\in F$. Therefore, all nodes in the path must be in the same cluster with $w_1$.
\end{proof}

We now show that it is enough to bound the symmetric difference between $E$ and $E'$.

\begin{lemma} \label{lem:triangle}
The cost of our clustering \(C\) is $|E\symdiff E_C| \le (\alpha+1)|E\symdiff E'| + \alpha |E\symdiff {\OPT}|$.
\end{lemma}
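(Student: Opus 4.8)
The plan is to exploit the fact that on subsets of the fixed ground set $\binom{V}{2}$ the map $(A,B)\mapsto|A\symdiff B|$ is a metric — in particular it obeys the triangle inequality $|A\symdiff C|\le|A\symdiff B|+|B\symdiff C|$, which follows from $A\symdiff C\subseteq(A\symdiff B)\cup(B\symdiff C)$ — and to chain two applications of it with the approximation guarantee of the \textsc{PIVOT} algorithm and the optimality of $\OPT'$. Concretely, I would first write $|E\symdiff E_C|\le|E\symdiff E'|+|E'\symdiff E_C|$ by the triangle inequality with the three sets $E$, $E'$, $E_C$. We may assume $G'\neq(\emptyset,\emptyset)$, since otherwise \textsc{ConstrainedCluster} just reports ``Impossible'' and there is nothing to prove; so $E_C$ is the edge set of the clustering returned by running the \textsc{PIVOT} algorithm on the ordinary (unconstrained) Correlation Clustering instance $G'=(V,E')$, and its $\alpha$-approximation guarantee gives $|E'\symdiff E_C|\le\alpha\,|E'\symdiff\OPT'|$.

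Next I would get rid of $\OPT'$ in favour of $\OPT$. The point is simply that $G'$ defines an \emph{unconstrained} Correlation Clustering instance, so every clustering of $V$ is feasible for it; in particular the optimal constrained clustering of the original instance, whose edge set is $\OPT$, is an admissible competitor for $\OPT'$, whence $|E'\symdiff\OPT'|\le|E'\symdiff\OPT|$ by optimality of $\OPT'$. A second application of the triangle inequality, now with the sets $E'$, $E$, $\OPT$, yields $|E'\symdiff\OPT|\le|E'\symdiff E|+|E\symdiff\OPT|=|E\symdiff E'|+|E\symdiff\OPT|$.

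Combining the three inequalities gives
\[
|E\symdiff E_C|\;\le\;|E\symdiff E'|+\alpha\,|E'\symdiff\OPT'|\;\le\;|E\symdiff E'|+\alpha\bigl(|E\symdiff E'|+|E\symdiff\OPT|\bigr)\;=\;(\alpha+1)|E\symdiff E'|+\alpha\,|E\symdiff\OPT|,
\]
which is exactly the claimed bound. I do not expect any real obstacle in this lemma: it is a two-line metric argument, and the only thing to be careful about is the \emph{type} of the objects, namely that $\OPT$ is a legitimate competitor for the optimum $\OPT'$ of the transformed instance — and this is precisely where it matters that \textsc{Transform} has removed all hard constraints, turning $G'$ into a plain Correlation Clustering instance. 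The genuinely hard part of the overall argument, namely showing that $|E\symdiff E'|$ is itself within a constant factor of $|E\symdiff\OPT|$ (so that the right-hand side above is $O(1)\cdot|E\symdiff\OPT|$), is a separate matter handled by the subsequent analysis of the rounding and edge-dropping steps in \textsc{Transform}.
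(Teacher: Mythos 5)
Your proof is correct and takes essentially the same route as the paper's: two applications of the triangle inequality for symmetric difference, chained with the $\alpha$-approximation guarantee on $G'$ and the observation that $\OPT$ is a feasible competitor for $\OPT'$ in the unconstrained instance $G'$. The only difference is cosmetic — you spell out explicitly why $|E'\symdiff\OPT'|\le|E'\symdiff\OPT|$, which the paper states without comment.
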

\begin{proof}
The symmetric difference of sets satisfies the triangle inequality; we therefore have 
\begin{equation*}
  |E\symdiff E_C| \le |E\symdiff E'| + |E'\symdiff E_C|.
\end{equation*}
$C$ is an $\alpha$-approximation for $G'=(V, E')$ and thus
\(|E'\symdiff E_C| \le \alpha|E' \symdiff \OPT'| \le \alpha|E' \symdiff \OPT|\).
Therefore:
\begin{equation*}
  |E\symdiff E_C| \le |E\symdiff E'| + \alpha |E'\symdiff {\OPT}| \le |E\symdiff E'| + \alpha |E'\symdiff E| + \alpha |E\symdiff {\OPT}|.
\end{equation*}
with the second inequality following by applying the triangle inequality again.
\end{proof}

In order to upper bound $|E\symdiff E'|$ by the cost of the optimal clustering $|E\symdiff {\OPT}|$, we first need to lower bound the cost of the optimal clustering.

\begin{lemma} \label{lem:lowerBoundOPT}
Let $S$ be the set of all pairs of distinct supernodes $U,W$ that are in the same cluster in $\OPT$. Then $|E\symdiff {\OPT}| \ge \sum_{\{U,W\}\in S} |E(U,W)\symdiff E_3(U,W)|$.
\end{lemma}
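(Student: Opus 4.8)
The plan is to track, for each pair of \emph{distinct} supernodes that $\OPT$ places in the same cluster, exactly how the edges between them in $E_3$ differ from $E$, and then charge each such difference against a distinct preference violated by $\OPT$. First I would observe that a pair $\set{U,W}$ with $U\neq W$ that $\OPT$ co-clusters cannot consist of hostile supernodes: a hostile pair $u_1u_2$ with $u_1\in U$, $u_2\in W$ forces $u_1,u_2$ — and hence, by \cref{lem:supernode}, all of $U$ and all of $W$ — into different clusters of $\OPT$. Consequently, Lines~\ref{line:connectSupernode} and~\ref{line:disconnectSupernode} leave the edges between $U$ and $W$ untouched (the former only adds within-supernode pairs, the latter only deletes between-hostile pairs), so $E(U,W)\symdiff E_3(U,W)$ is precisely the set of edges between $U$ and $W$ that are deleted during the while-loop on Line~\ref{line:drop2}. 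Hence the right-hand side of the claimed inequality counts exactly those edges deleted in that loop whose two endpoints lie in supernodes that $\OPT$ co-clusters.

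Next I would analyze a single iteration of the loop. It deletes two distinct edges $u_1u_3$ and $u_2u_3'$ with $u_1\in U_1$, $u_2\in U_2$, $u_3,u_3'\in U_3$, where $U_1,U_2$ are hostile and therefore lie in different clusters of $\OPT$. Thus $\OPT$ cannot co-cluster $U_3$ with both $U_1$ and $U_2$, so at most one of the two deleted edges has both endpoints in co-clustered supernodes; and if (say) $u_1u_3$ does, then $U_2$ and $U_3$ lie in different clusters, so the partner edge $u_2u_3'$ is a cross-supernode pair of $E$ (it lies in $E_3\subseteq E_1$ but joins different supernodes, hence it is in $E$) whose endpoints are separated by $\OPT$, i.e.\ $u_2u_3'\in E\setminus\OPT$. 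I would then map each edge counted by the right-hand side to this partner edge, removed in the same iteration.

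Finally I would check injectivity of this map: a counted edge and its image are removed in the same iteration of the loop, and each edge is removed in at most one iteration (the loop only performs $E_3\gets E_3\setminus\set{\cdot,\cdot}$ and never re-adds anything); since each iteration removes only two edges, two distinct counted edges cannot be assigned the same image. The map therefore injects the set counted by the right-hand side into $E\setminus\OPT\subseteq E\symdiff\OPT$, which yields the stated inequality. The main thing to get right is this bookkeeping: namely, that the partner edge is a genuine violated preference (a true edge of $E$ whose endpoints $\OPT$ really separates) and that distinct iterations supply distinct witnesses — both of which hinge only on the monotonicity of the deletions in the loop and on hostile supernodes being separated by $\OPT$.
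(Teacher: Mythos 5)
Your proof is correct and follows essentially the same approach as the paper's: identify that for co-clustered (hence non-hostile) supernode pairs the edges counted on the right-hand side arise only from the while-loop deletions, pair each such edge with the other edge deleted in the same iteration, and verify that this partner is a preference violated by $\OPT$ since one of the two hostile supernodes must be separated from the shared supernode. Your write-up makes explicit the injectivity of the pairing and the observation that at most one deleted edge per iteration is counted, points the paper leaves implicit behind the words ``unique edge'' and its ``WLOG'' relabeling of $U$ and $W$.
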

\begin{proof}
The high-level idea is that when a node is connected to two hostile nodes, then any valid clustering needs to pay for at least one of these edges. Extending this fact to supernodes, we construct an edge set of size $\sum_{\{U,W\}\in S} |E(U,W)\symdiff E_3(U,W)|$ such that the optimal clustering needs to pay for each edge in this set.

First, for any $\{U,W\}\in S$ it holds that $E(U,W)\symdiff E_3(U,W) = E(U,W)\setminus E_3(U,W)$ because Line~\ref{line:connectSupernode} (\cref{alg:main}) does not modify edges between pairs of distinct supernodes, and Lines~\ref{line:disconnectSupernode} and~\ref{line:drop2} only remove edges.

Each edge of $E(U, W) \setminus E_3(U, W)$ is the result of applying Line~\ref{line:drop2},
seeing as Line~\ref{line:disconnectSupernode} only removes edges from hostile pairs of supernodes.
Thus each edge $u w \in E(U,W)\setminus E_3(U,W)$ can be paired up with a unique edge $x y \in E$ which is removed together with $u w$. Without loss of generality it holds that $x \in U, y \in Z$ for some supernode $Z$ different from $U$ and $W$.
Due to the way Line~\ref{line:drop2} chooses edges it must be the case that $Z$ and $W$ are hostile, hence $xy \in E \symdiff \OPT$.

Summing over all pairs of clustered supernodes gives the result stated in the lemma.
\end{proof}

We are now ready to bound $|E\symdiff E'|$. 

\begin{lemma} \label{lem:EEt}
$|E\symdiff E'| \le \myBeta |E\symdiff {\OPT}|$
\end{lemma}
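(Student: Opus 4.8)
The plan is to split the pair set $\binom V2$ according to the supernode structure and compare $|E\symdiff E'|$ with $|E\symdiff\OPT|$ class by class (we may assume the algorithm did not abort, so that $\OPT$ exists). Call a pair $uv$ \emph{internal} if $s(u)=s(v)$, \emph{hostile-crossing} if $s(u)\neq s(v)$ and $s(u),s(v)$ are hostile, and \emph{non-hostile-crossing} otherwise. On internal pairs $E'$ contains all of them --- they are added on Line~\ref{line:connectSupernode} and the later lines only touch pairs of distinct supernodes --- and so does $\OPT$ by \cref{lem:supernode}; hence $E\symdiff E'$ and $E\symdiff\OPT$ restrict to the same set on internal pairs. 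On hostile-crossing pairs $E'$ contains none of them --- Line~\ref{line:disconnectSupernode} empties them and the rounding step keeps them empty, as their density is $0\le\thrshld$ --- and $\OPT$ contains none either, since hostile supernodes lie in different clusters; so again the two symmetric differences agree there. Thus $|E\symdiff E'|-|E\symdiff\OPT|$ equals the corresponding difference restricted to non-hostile-crossing pairs, and it suffices to bound that by $(\myBeta-1)\,|E\symdiff\OPT|$.

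Next I would treat one non-hostile-crossing supernode pair $\{U,W\}$ at a time; write $N=|U|\,|W|$, $e=|E(U,W)|$, $e_3=|E_3(U,W)|$, and recall from \cref{lem:lowerBoundOPT} that $S$ denotes the set of pairs of distinct supernodes lying in a common cluster of $\OPT$. Since Line~\ref{line:connectSupernode} leaves crossing pairs untouched, Line~\ref{line:disconnectSupernode} leaves non-hostile pairs untouched, and the loop on Line~\ref{line:drop2} only deletes edges, we have $E_3(U,W)\subseteq E(U,W)$, so $e_3\le e\le N$. After the rounding step $E'(U,W)$ is $\emptyset$ if $e_3\le\thrshld N$ and the full set $U\times W$ otherwise, while $\OPT(U,W)$ is the full set if $\{U,W\}\in S$ and $\emptyset$ otherwise. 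Going through the four combinations, the contribution of $\{U,W\}$ to $|E\symdiff E'|-|E\symdiff\OPT|$ is $0$ when the rounding agrees with $\OPT$ (rounded down and $\{U,W\}\notin S$, or rounded up and $\{U,W\}\in S$), equals $N-2e$ when rounded up with $\{U,W\}\notin S$, and equals $2e-N$ when rounded down with $\{U,W\}\in S$. Choosing the threshold with $\thrshld\ge\tfrac12$ disposes of the first nontrivial case: rounding up forces $e\ge e_3>\thrshld N\ge N/2$, so $N-2e<0$.

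The remaining case --- rounded down while $\{U,W\}\in S$, contributing $2e-N$ --- is the heart of the proof and is where \cref{lem:lowerBoundOPT} enters. Since $U,W$ share a cluster in $\OPT$, all $N-e$ non-edges of $E(U,W)$ lie in $\OPT\setminus E$; and the gap $e-e_3$ counts crossing edges of $\{U,W\}$ deleted by the loop on Line~\ref{line:drop2}, whose total over all pairs of $S$ is at most $|E\symdiff\OPT|$ by \cref{lem:lowerBoundOPT}. A one-line calculation from $e_3\le\thrshld N$ gives $(1-\thrshld)N\le(N-e)+(e-e_3)$, hence
\[
  2e-N \;=\; N-2(N-e)\;\le\;\tfrac{2\thrshld-1}{1-\thrshld}\,(N-e)\;+\;\tfrac{1}{1-\thrshld}\,(e-e_3) .
\]
Summing over all non-hostile-crossing pairs in $S$, applying \cref{lem:lowerBoundOPT} to the $(e-e_3)$-terms and using that the $(N-e)$-terms are $\OPT$-non-edges (hence total at most $|E\symdiff\OPT|$) bounds the whole overshoot by $c(\thrshld)\cdot|E\symdiff\OPT|$ for an explicit $c(\thrshld)$; minimising $c(\thrshld)$ over the choice of $\thrshld$ yields $|E\symdiff E'|\le\myBeta\,|E\symdiff\OPT|$.

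I expect the delicate step to be the final accounting. One must check that the parts of $E\symdiff\OPT$ used to pay for the overshoot --- the $\OPT$-non-edges inside $S$-pairs and the edge set exhibited in the proof of \cref{lem:lowerBoundOPT}, which are edges of $E\setminus\OPT$ between non-hostile supernode pairs --- are disjoint from each other and from the internal and hostile-crossing contributions, so that no element of $E\symdiff\OPT$ is overcharged; a cruder version simply adds the two budgets at the price of a larger $\myBeta$. Everything else is the four-case check above plus linear arithmetic, together with the one-variable optimisation of $\thrshld$.
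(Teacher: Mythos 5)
Your approach is genuinely different from the paper's and, if executed carefully, is sharper. The paper defines a \emph{charging scheme} of total mass at most $2|E\symdiff\OPT|$ (one unit from $E\symdiff\OPT$, one unit from the quantity bounded by \cref{lem:lowerBoundOPT}), then bounds the ratio $|E\symdiff E'|/\text{charge}$ pairwise across five cases by $\frac{1}{1-\thrshld}$, which after optimizing gives $\thrshld=\frac{3-\sqrt5}{2}$ and $\myBeta=1+\sqrt5$. You instead subtract $|E\symdiff\OPT|$ first and only bound the \emph{overshoot}; your case analysis, the cancellation on internal and hostile-crossing pairs, and the algebra $2e-N\le\frac{2\thrshld-1}{1-\thrshld}(N-e)+\frac{1}{1-\thrshld}(e-e_3)$ are all correct. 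Your disjointness worry also resolves cleanly: the $(N-e)$ budget lives in $\OPT\setminus E$, while the $(e-e_3)$ budget exhibited in \cref{lem:lowerBoundOPT} consists of pairs in $E\setminus\OPT$, so no element of $E\symdiff\OPT$ is overcharged, and (if needed for $\thrshld<\frac12$) the remaining ``rounded up, $\notin S$'' budget can be restricted to $E_3\subseteq E$, which is disjoint from $E\setminus E_3$.

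There is, however, one concrete mismatch with the paper: you assume $\thrshld\ge\tfrac12$ to kill the ``rounded up, $\notin S$'' case, but the paper's algorithm fixes $\thrshld=\frac{3-\sqrt5}{2}\approx0.382<\tfrac12$, so your argument as written does not apply to the algorithm as specified. You treat $\thrshld$ as a free parameter to optimize, but the lemma is about the algorithm with the paper's fixed threshold. To cover $\thrshld<\tfrac12$ you must bound $N-2e$ against the $e$ (or $e_3$) edges of such pairs, which is easy ($N-2e<e\cdot\frac{1-2\thrshld}{\thrshld}$) but needs to be said; once you do, your overshoot accounting for the paper's $\thrshld$ gives $\max\bigl(\frac{1-2\thrshld}{\thrshld},\frac{1}{1-\thrshld}\bigr)=\frac{1+\sqrt5}{2}$ and hence $|E\symdiff E'|\le\frac{3+\sqrt5}{2}|E\symdiff\OPT|\approx2.62|E\symdiff\OPT|$, strictly better than the paper's $1+\sqrt5$. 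Your observation that $\thrshld=\tfrac12$ would give the even better factor $3$ is also correct, and shows the paper's charging analysis is not tight --- a nice byproduct of the sharper decomposition, though it cannot be used to prove the lemma as stated without also changing the algorithm.
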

\begin{proof}
To prove this, we first charge each pair of nodes in a way such that the total charge is at most $2|E\symdiff {\OPT}|$. Then we partition the pairs of nodes into $5$ different sets, and show that the size of the intersection between $E\symdiff E'$ and each of the $5$ sets is at most $\frac{1 + \sqrt5}{2}$ times the total charge given to the pairs in the given set.

The first three sets contain the pairs across non-hostile supernodes; out of them the first one is the most technically challenging, requiring a combination of Lemma~\ref{lem:lowerBoundOPT} (related to Line~\ref{line:drop2} of \cref{alg:main}) and a direct analysis on $E\symdiff \OPT$, as neither of them would suffice on their own. The analysis of the second and third sets relate to the rounding in Line~\ref{line:round}.
The fourth set contains pairs across hostile supernodes, while the fifth set contains pairs within supernodes. Their analysis is directly based on the hard constraints.

Let us define our charging scheme: first, each pair of nodes is charged if the optimal clustering pays for it, i.e.\ if this pair is in $E \symdiff \OPT$.
We further put a charge on the pairs \(uw \in E \symdiff E_3\) which connect supernodes that are clustered together in OPT.
Notice that the number of such edges is a lower bound on $|E\symdiff \OPT|$ by \cref{lem:lowerBoundOPT}.
Therefore the total charge over all pairs of nodes is at most~$2|E\symdiff \OPT|$ and no pair is charged twice.

\medskip
\emph{Case 1:}
Consider two distinct supernodes $U,W$ that are not hostile, which have more than~$\thrshld|U|\cdot |W|$ edges between them in $E$,
and have at most $\thrshld|U| \cdot |W|$ edges in~$E_3$. Then the rounding of Line~\ref{line:round} removes all edges between them.
Therefore $|E(U,W)\symdiff E'(U,W)| = |E(U,W)|\le |U|\cdot |W|$. If $\OPT$ separates $U$ and $W$, then the pairs are charged $|E(U,W)|$; else they are charged $|U|\cdot |W|-|E(U,W)|$ due to the part of the charging scheme related to~$E\symdiff \OPT$. In the latter case, they are also charged
$|E(U,W)|- |E_3(U,W)|$
due to the part of the charging scheme related to \cref{lem:lowerBoundOPT}. Therefore they are charged at least
\begin{align*}
  |U|\cdot |W|-|E(U,W)|+|E(U,W)|-|E_3(U,W)|
  &= |U| \cdot |W|-|E_3(U,W)| \\
  &\ge |U|\cdot |W|-\tfrac{3 - \sqrt5}{2}|U| \cdot |W|.
\end{align*}
Thus, in the worst case, these pairs contribute
\begin{equation*}
  \max\left\{\frac{|E(U,W)|}{|E(U,W)|}, \frac{|E(U,W)|}{|U|\cdot |W|-\thrshld{}|U|\cdot |W|}\right\}\le \frac1{1-\thrshld}=\frac{1 + \sqrt 5}{2}  
\end{equation*}
times more in $|E\symdiff E'|$ compared to their charge.

\medskip
\emph{Case 2:}
Consider two distinct supernodes $U,W$ that are not hostile, which have more than $\thrshld|U|\cdot |W|$ edges between them in $E$, and more than $\thrshld|U|\cdot |W|$ edges in~$E_3$. Then the rounding of Line~\ref{line:round} will include all $|U|\cdot |W|$ edges between them.
Thus we have $|E(U,W)\symdiff E'(U,W)| = |U|\cdot |W|-|E(U,W)| < (1-\thrshld)|U|\cdot |W|$.
If $\OPT$ separates~$U$ and $W$ it pays for $|E(U,W)|>\thrshld|U|\cdot |W|$ pairs.
Otherwise it pays $|U|\cdot |W|-|E(U,W)|$.
Thus, in the worst case, these pairs contribute
$\frac{1-\thrshld}{\thrshld}=\frac{1 + \sqrt 5}{2}$
times more in $|E\symdiff E'|$ compared to their charge.

\medskip
\emph{Case 3:}
If two distinct supernodes $U,W$ are not hostile and have at most~$\thrshld|U|\cdot |W|$ edges between them in $E$, then they also have at most that many edges in $E_3$ as we only remove edges between such supernodes. There are thus no edges between them in $E'$, meaning that $|E(U,W)\symdiff E'(U,W)| = |E(U,W)| \le \thrshld|U|\cdot |W|$. If~$\OPT$ separates $U,W$ it pays for $|E(U,W)|$ pairs related to the connection between~$U,W$; else it pays for $|U|\cdot |W|-|E(U,W)| \ge (1-\thrshld)|U|\cdot |W| > \thrshld |U|\cdot |W|$. Thus these pairs' contribution in $|E\symdiff E'|$ is at most as much as their charge.

\medskip
\emph{Case 4:}
Pairs $uv$ with $s(u)\ne s(v)$ and $s(u)$ hostile with $s(v)$ are not present in $E'$. That is because by Line~\ref{line:disconnectSupernode} no pair of hostile supernodes is connected; then Line~\ref{line:drop2} only removes edges, and Line~\ref{line:round} does not add any edge between $s(u)$ and $s(v)$ as they had $0 \le \thrshld|s(u)|\cdot |s(v)|$ edges between them. The edge $uv$ is also not present in $\OPT$ as $s(u)$ and $s(v)$ are not in the same cluster because they are hostile. These pairs' contribution in $|E\symdiff E'|$ is exactly equal to their charge.

\medskip
\emph{Case 5:}
Pairs $uv$ with $s(u)=s(v)$ are present in $E'$ by Line~\ref{line:connectSupernode} and the fact that all subsequent steps only modify edges whose endpoints are in different supernodes. The pair $uv$ is also present in $\OPT$, by \cref{lem:supernode}. Therefore these pairs' contribution in $|E\symdiff E'|$ is exactly equal to their charge.

\medskip
In the worst case, the pairs of each of the five sets contribute at most $\frac{1 + \sqrt5}{2}$ times more in $|E\symdiff E'|$ compared to their charge, which proves our lemma.
\end{proof}

We are now ready to prove the main theorem.

\begin{proof}[Proof of \cref{thm:CCCCombinatorial}]
In \cref{thm:CCCombinatorial} we established that there is a deterministic combinatorial \textsc{PIVOT} algorithm computing a Correlation Clustering with approximation factor $\alpha = 3 + \epsilon$ in time $\widetilde\Order(n^3)$, for any constant $\epsilon > 0$. Using this algorithm in \cref{alg:main} gives a valid clustering.
By \cref{lem:triangle,lem:EEt}, its approximation factor is bounded by $(\alpha + 1) \cdot (1 + \sqrt 5) + \alpha$. This is less than $16$ for $\epsilon = 0.01$.
\end{proof}
\section{PIVOT Algorithms for Correlation Clustering} \label{sec:deterministicPivoting}
\subsection{Lower Bound}
First we prove \cref{thm:pivot-lower-bound} which states that there is no \textsc{PIVOT} algorithm for Correlation Clustering with approximation factor better than $3$.

\begin{proof}[Proof of \cref{thm:pivot-lower-bound}]
Let $G=([2n],E)$ for some integer $n$, where the edge set $E$ contains all pairs of nodes except for pairs of the form $(2k+1,2k+2)$. In other words, the edge set of $G$ contains all edges except for a perfect matching.

Note that if we create a single cluster containing all nodes, then the cost is exactly $n$. On the other hand, let $u$ be the first choice that a \textsc{PIVOT} algorithm makes. If $u$ is even, let~$v=u-1$, otherwise let~$v=u+1$. By definition of $G$, $v$ is the only node not adjacent to~$u$. Therefore, the algorithm creates two clusters---one containing all nodes except for~$v$, and one containing only~$v$. There are $2n-2$ edges across the two clusters, and $n-1$ missing edges in the big cluster, meaning that the cost is $3n-3$.

Therefore, the approximation factor of any \textsc{PIVOT} algorithm is at least $(3n-3) / n = 3 - \frac3n$. This proves the theorem, as for any constant less than $3$, there exists a sufficiently large~$n$ such that $3-\frac3n$ is larger than that constant.
\end{proof}

\subsection{Optimal Deterministic PIVOT: 3-Approximation} \label{sec:pivoting-3}

A \emph{covering} LP is a linear program of the form \(\min_x \set{cx \mid Ax \geq b}\) where \(A, b, c\), and \(x\) are restricted to vectors and matrices of non-negative entries.
Covering LPs can be solved more efficiently than LPs in general and we rely on the following known machinery to prove \cref{thm:CCCombinatorial}:

\begin{theorem}[Covering LPs, Combinatorial~{{{\cite{GargK98,Fleischer04}}}}] \label{thm:covering-combinatorial}
Any covering LP with at most $N$ nonzero entries in the constraint matrix can be $(1 + \epsilon)$-approximated by a combinatorial algorithm in time $\widetilde\Order(N \epsilon^{-3})$.\footnote{The running time we state seems worse by a factor of $\epsilon^{-1}$ as compared to the theorems in~\cite{GargK98,Fleischer04}. This is because the authors assume access to a machine model with exact arithmetic of numbers of size exponential in $\epsilon^{-1}$. We can simulate this model using fixed-point arithmetic with a running time overhead of $\widetilde\Order(\epsilon^{-1})$.}
\end{theorem}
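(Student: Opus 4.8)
This statement is, up to the stated refinement, the covering version of the width-independent multiplicative-weights (Lagrangian relaxation) scheme of Garg and Könemann~\cite{GargK98}, with the iteration count sharpened as by Fleischer~\cite{Fleischer04}. The only ingredient that is not completely off-the-shelf is reconciling their exact-real-arithmetic model with the word-RAM model, which is exactly the content of the footnote. So the plan is: (i) normalize the LP; (ii) recall the primal--dual iteration and its analysis; and (iii) argue that a fixed-point implementation with $\widetilde\Order(\epsilon^{-1})$-bit numbers suffices, which accounts for the extra $\epsilon^{-1}$ factor over the bounds as usually stated.

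\textbf{Normalization and iteration.} Given $\min\{cx : Ax \geq b,\ x \geq 0\}$ with $A, b, c \geq 0$, divide row $i$ by $b_i$ (a zero row is vacuous and is dropped) and substitute $x_j \mapsto x_j/c_j$ (a zero-cost column meeting some constraint drives the optimum to $0$ and is handled separately). This gives $\min\{\mathbf 1^\top x : Ax \geq \mathbf 1,\ x \geq 0\}$ with $A \geq 0$ and the same number $N$ of nonzeros up to reindexing, and a $(1+\epsilon)$-approximation pulls back. The dual is the packing LP $\max\{\mathbf 1^\top y : A^\top y \leq \mathbf 1,\ y \geq 0\}$. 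Maintain weights $y_i$ (all initially a tiny $\delta = \poly(\epsilon/m)$) and $x = 0$; repeatedly pick a column $j^\star$ that (approximately, in Fleischer's amortized fashion) maximizes the price $p_j = \sum_i A_{ij} y_i$, raise $x_{j^\star}$ by $\Delta = 1/\max_i A_{ij^\star}$ (the amount that saturates the tightest constraint covered by $j^\star$), and set $y_i \leftarrow y_i(1 + \epsilon A_{ij^\star}\Delta)$ for all $i$; stop once every constraint is sufficiently covered, then rescale $x$ down by $\Theta(\epsilon^{-1}\log(m/\delta))$.

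\textbf{Analysis.} Since $A_{ij^\star}\Delta \leq 1$, each weight grows by a factor in $[1, 1+\epsilon]$ (width-independence), and since each step fully saturates some constraint, no constraint is the tightest more than $\Order(\epsilon^{-1}\log(m/\epsilon))$ times before termination; so there are $\Order(m\,\epsilon^{-1}\log(m/\epsilon))$ steps, and Fleischer's phase trick (reuse a column while its price stays within $(1+\epsilon)$ of the best) brings the total work for the column searches to $\widetilde\Order(N\epsilon^{-2})$. For the approximation guarantee one tracks the dual potential $D = \sum_i y_i$: weak LP duality on the normalized LP gives $p_{j^\star} \geq D/\mathrm{OPT}$, so each step multiplies $D$ by at least $\exp\bigl((1-\Order(\epsilon))\,\epsilon\Delta/\mathrm{OPT}\bigr)$ while $D$ stays bounded; telescoping and taking logarithms bounds the raw cost $\sum_{\text{steps}}\Delta$, and with $\delta$ small enough the rescaled cost is $(1+\Order(\epsilon))\,\mathrm{OPT}$, while feasibility of the rescaled $x$ follows from the stopping rule together with $1 + \epsilon z \geq (1+\epsilon)^z$ for $z \in [0,1]$. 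Rescaling $\epsilon$ by a constant factor then yields the claimed $(1+\epsilon)$ bound.

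\textbf{The arithmetic point, and the main obstacle.} The references count operations over exact reals, but here the numbers range from $\delta = \poly(\epsilon/m)$ up to $(1+\epsilon)^{\Theta(\epsilon^{-1}\log(m/\epsilon))}$ and thus need $\Theta(\epsilon^{-1}\log(m/\epsilon))$ bits; I would run the whole scheme in fixed-point at this precision and show that rounding each $y_i$-update perturbs it by a $(1 \pm \epsilon^{\Order(1)})$ factor, so the error accumulated over the $\widetilde\Order(N\epsilon^{-1})$ updates still tracks the idealized run within a $(1+\epsilon)$ envelope — absorbed, once more, into a constant-factor change of $\epsilon$. Multiplying the $\widetilde\Order(N\epsilon^{-2})$ idealized operation count by this $\widetilde\Order(\epsilon^{-1})$ per-operation cost gives $\widetilde\Order(N\epsilon^{-3})$. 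I expect this error-propagation bookkeeping — together with checking that Fleischer's amortized column selection really runs in $\widetilde\Order(N\epsilon^{-2})$ word operations for an arbitrary sparse nonnegative $A$, rather than only the network-flow instances for which it was originally phrased — to be the only genuinely delicate parts; the optimization-theoretic core is classical.
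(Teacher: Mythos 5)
The paper does not give its own proof of this statement in full generality: the theorem is cited from~\cite{GargK98,Fleischer04}, and Appendix~\ref{app:charge} only verifies the guarantee for the specific instantiation \textsc{Charge} of \cref{alg:charge} applied to the charging LP of \cref{fig:lps}. Your proposal attacks the general theorem, which is a reasonable thing to attempt, but the core iteration you describe has its feedback inverted relative to Garg--K\"onemann/Fleischer for covering (and relative to \textsc{Charge}). In the paper's scheme the multiplicatively-increased ``lengths'' are the covering primal variables $x_j$, one per \emph{column}; each step selects the \emph{row} $i^\star$ with \emph{minimum} cost $\sum_j A_{i^\star j}x_j$, multiplies up exactly the $x_j$ in that row (so that row becomes less attractive next time), and the dual $y_{i^\star}$ is accumulated \emph{additively} from selection counts (\cref{obs:dual}); the growing primal mass $s^{(k)}=\sum_j x^{(k)}_j$ is then controlled by $s^{(k)}\le\binom n2\exp(\epsilon k m/s)$ (\cref{lem:primal-increment}), and weak duality closes the gap. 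You instead keep multiplicatively-increasing weights $y_i$ on the \emph{rows}, pick the \emph{column} $j^\star$ with \emph{maximum} price $\sum_i A_{ij^\star}y_i$, and then raise precisely those $y_i$ that contributed to $j^\star$'s price. That is positive feedback: after the update the selected column's price grows at least as fast as its competitors', and rows untouched by $j^\star$ are never upweighted (nor covered, if only $x_{j^\star}$ ever moves), so there is no reason the stopping rule ``every constraint sufficiently covered'' is ever reached, and the potential argument on $D=\sum_i y_i$ does not close. To repair it you must either keep the max-price column rule but multiplicatively \emph{decrease} the weights of rows that $j^\star$ covers (Young-style), or keep multiplicatively increasing lengths but place them on the columns and select the \emph{min-cost row} --- the Garg--K\"onemann/\textsc{Charge} route the paper actually follows.

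The normalization step and the fixed-point/bit-precision remark are fine and match the footnote. One further caveat worth noting: the paper's own running-time argument (\cref{lem:charge-time}) never needs Fleischer's amortized column selection --- it simply keeps a priority queue over all $\Theta(n^3)$ bad-triplet constraints and pays $\widetilde O(n\log n)$ per iteration --- so the ``delicate part'' you flag about porting the amortization to arbitrary sparse $A$ is a genuine question about the cited references, but not one the paper itself resolves.
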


\begin{theorem}[Covering LPs, Non-Combinatorial~{{{\cite{Allen-ZhuO19,WangRM16}}}}] \label{thm:covering-noncombinatorial}
Any covering LP with at most $N$ nonzero entries in the constraint matrix can be $(1 + \epsilon)$-approximated in time $\widetilde\Order(N \epsilon^{-1})$.
\end{theorem}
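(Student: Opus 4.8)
The plan is to prove \cref{thm:covering-noncombinatorial} by recasting the covering LP as a smooth convex minimization problem and solving it with an \emph{accelerated} first-order method whose per-iteration cost is a single pass over the nonzero entries of the constraint matrix. Write the LP as $\min\{ c^\top x : Ax \ge b,\ x \ge 0 \}$ with $A, b, c \ge 0$. After discarding trivially satisfied rows and useless ($c_j = 0$) columns and rescaling the remaining rows by $1/b_i$ and columns by $c_j$ (i.e. substituting $x_j \mapsto x_j / c_j$), we may assume $b = \mathbf{1}$ and $c = \mathbf{1}$, so the goal is to find $x \ge 0$ minimizing $\mathbf{1}^\top x$ subject to $Ax \ge \mathbf{1}$. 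Crucially we only need a $(1+\epsilon)$-approximation, which lets us replace the hard constraint $Ax \ge \mathbf{1}$ by a soft penalty.

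First I would introduce the log-sum-exp surrogate $f_\mu(x) = \mu \log\bigl( \sum_i \exp\bigl( (1 - (Ax)_i)/\mu \bigr) \bigr)$, which over-estimates $\max_i (1 - (Ax)_i)$ by at most $\mu \log m$; taking $\mu = \Theta(\epsilon / \log m)$ makes the reformulation lose only a $(1 + \Theta(\epsilon))$ factor. Minimizing $\mathbf{1}^\top x$ subject to $f_\mu(x) \le 0$ over the nonnegative orthant — which one can handle either by binary search over the optimal value or by folding the penalty into the objective — is then a smooth convex program. Its gradient is $\nabla f_\mu(x) = -A^\top p(x)$ where $p(x)$ is the Gibbs distribution $p_i(x) \propto \exp\bigl( (1 - (Ax)_i)/\mu \bigr)$, so one gradient evaluation costs one sparse multiplication by $A$ and one by $A^\top$, i.e. $\widetilde{O}(N)$ arithmetic operations.

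Next I would run an accelerated gradient method in the linear-coupling / Nesterov framework on this objective, using the entropy (plus a small linear term) as the mirror map so that the diameter of the feasible region enters only polylogarithmically. The crux, and the main obstacle, is that $f_\mu$ is \emph{not} uniformly smooth: since $p(x)$ concentrates on the tight constraints, the effective Lipschitz constant of $\partial_j f_\mu$ scales like $(\max_i A_{ij})/\mu$, so a naive smoothness bound forces the iteration count to depend on the ``width'' $\max_{ij} A_{ij}$ of the LP, which we cannot afford. The resolution — the technical heart of \cite{Allen-ZhuO19,WangRM16} — is a \emph{non-uniform}, multiplicative-weights-flavored analysis: the gradient step in coordinate $j$ is itself scaled by $x_j$ (so the dynamics are multiplicative, as in width-independent MWU), and progress is measured in a local norm that rescales each coordinate by $x_j$. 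In this rescaled geometry the relevant smoothness parameter is $O(\log m / \mu) = O(\log^2 m / \epsilon)$, independent of the entries of $A$, and the accelerated coupling of a gradient sequence with a mirror-descent sequence reaches an $\epsilon$-optimal point in $\widetilde{O}(1/\epsilon)$ iterations.

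Finally I would convert the approximate minimizer $x$ into a genuinely feasible solution of cost at most $(1+\epsilon)\,\mathrm{OPT}$: since the smoothed optimum satisfies $(Ax)_i \ge 1 - O(\epsilon)$ for every $i$, scaling $x$ up by a $(1 + O(\epsilon))$ factor restores $Ax \ge \mathbf{1}$ while inflating the objective by only $(1 + O(\epsilon))$, and rescaling $\epsilon$ by a constant at the outset absorbs all the hidden constants. Multiplying the $\widetilde{O}(1/\epsilon)$ iterations by the $\widetilde{O}(N)$ per-iteration cost yields the claimed $\widetilde{O}(N\epsilon^{-1})$ bound; the low-order bookkeeping needed to represent the Gibbs weights in fixed-point arithmetic (as in the footnote to \cref{thm:covering-combinatorial}) contributes only further polylogarithmic factors.
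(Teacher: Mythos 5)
The paper does not prove this theorem; it is quoted as a black-box citation to Allen-Zhu--Orecchia and Wang--Rao--Mahoney, and the paper's only ``proof'' is the reference itself. So there is nothing internal to compare against, and the right question is whether your sketch faithfully reconstructs what those works actually do.

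Your sketch is a reasonable high-level account of the cited approach: the rescaling to $b = c = \mathbf{1}$, the log-sum-exp smoothing with $\mu = \Theta(\epsilon/\log m)$, the linear-coupling acceleration framework, and -- crucially -- the correct identification of the real obstacle, namely that a naive smoothness bound introduces a width dependence, and the correct identification of its resolution via multiplicative (coordinate-rescaled) updates analyzed in a local norm. You also correctly handle the last mile of converting the near-feasible smoothed solution into a genuinely feasible one by a $(1 + O(\epsilon))$ scaling. A few caveats worth being aware of. First, the truly hard content of the cited papers is the analysis of the coupled gradient/mirror dynamics under the non-uniform local smoothness, and in your sketch this is described but not carried out; it is far from mechanical, and a short description of the rescaled geometry does not by itself certify the $\widetilde{O}(1/\epsilon)$ iteration count. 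Second, the division of credit between the two references is not symmetric: the $\widetilde{O}(N\epsilon^{-1})$ rate for the \emph{covering} side specifically (as opposed to packing, where the symmetric argument is cleaner) required additional ideas, which is why both papers are cited; a careful write-up would have to engage with the asymmetry between packing and covering (for covering, the gradient truncation/capping trick used on the packing side does not transfer directly). Neither of these is an error, but both are places where ``the technical heart of [the references]'' is being gestured at rather than proved, which is appropriate for a sketch but would need substantial expansion for a self-contained argument.
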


Of the two theorems, the time complexity of the algorithm promised by \cref{thm:covering-noncombinatorial} is obviously better.
However, the algorithm of \cref{thm:covering-combinatorial} is remarkably simple in our setting and could thus prove to be faster in practice.
Note that either theorem suffices to obtain a \((3+\epsilon)\)-approximation for \CC{} in \(\widetilde{O}(n^3)\) time, for constant \(\epsilon>0\).

For completeness, and in order to demonstrate how simple the algorithm from \cref{thm:covering-combinatorial} is in our setting, we include the pseudocode as \cref{alg:charge}.
In \cref{app:charge} we formally prove that \cref{alg:charge} indeed has the properties promised by \cref{thm:covering-combinatorial}.

\begin{algorithm2e}[htb]
  \nonl\procedure{\Charge{$G=(V,E)$}}{
    Initialize $x_{uv},\, x^*_{uv} \gets 1$ for all $uv \in \binom V2$\;
    \While{$\sum_{uv} x_{uv} < B := (\binom n2 (1 + \epsilon))^{1/\epsilon} \,/\, (1 + \epsilon)$}{
      Find a bad triplet $uvw$ minimizing $x_{uv} + x_{vw} + x_{wu}$ \label{alg:charge:line:pick}\;
      $x_{uv} \gets (1 + \epsilon) \cdot x_{uv}$ \label{alg:charge:line:update-uv}\;
      $x_{vw} \gets (1 + \epsilon) \cdot x_{vw}$ \label{alg:charge:line:update-vw}\;
      $x_{wu} \gets (1 + \epsilon) \cdot x_{wu}$ \label{alg:charge:line:update-wu}\;
      \If{$(\sum_{uv} x_{uv}) \,/\, m(x) < (\sum_{uv} x^*_{uv}) \,/\, m(x^*)$}{   
        \lForEach{$uv\in \binom{V}{2}$}{$x^*_{uv} \gets x_{uv}$}
      }
    }
    \Return $\set{x^*_{uv} \,/\, m(x^*)}_{uv}$
  }

\caption{The combinatorial algorithm to $(1 + \Order(\epsilon))$-approximate the LP in \cref{fig:lps} (\cpageref{fig:lps}) using the multiplicative weights update method. The general method was given by Garg and Könemann~\cite{GargK98} and later refined by Fleischer~\cite{Fleischer04}. We here use the notation $m(x) = \min_{uvw \in T(G)} x_{uv} + x_{vw} + x_{wu}$.} \label{alg:charge}
\end{algorithm2e}

The solution found by \cref{alg:charge} is used together with the framework by van Zuylen and Williamson \cite{deterministicPivoting}, see \Cluster in \cref{alg:cluster}. \Cluster is discussed further in \cref{sec:pivoting-9}, where the following lemmas are proven.

\begin{algorithm2e}[!htb]
  \nonl\procedure{\Cluster{$G = (V, E), x = \set{x_{uv}}_{uv \in \binom V2}$}}{
    $C \gets \emptyset$\;
    \While{$V \neq \emptyset$}{
      Pick a pivot node $u \in V$ minimizing
      \begin{equation*}
        \frac{\displaystyle\sum_{vw : uvw \in T(G)} 1}{\displaystyle\sum_{vw : uvw \in T(G)} x_{vw}}
      \end{equation*} \label{alg:cluster:line:ratio}\;
      \vspace*{-3.5ex}
      Add a cluster containing $u$ and all its neighbors to $C$\;
      Remove $u$, its neighbors and all their incident edges from $G$\;
    }
    \Return $C$
  }

\caption{The \textsc{PIVOT} algorithm by van Zuylen and Williamson~\cite{deterministicPivoting}. Given a graph~$G$ and a good charging $\set{x_{uv}}_{uv}$ (in the sense of \cref{lem:cluster-correctness}), it computes a correlation clustering.} \label{alg:cluster}
\end{algorithm2e}

\begin{restatable}[Correctness of \textrm{\textsc{Cluster}}]{lem}{correctnessCluster}
\label{lem:cluster-correctness}
Assume that $x = \set{x_{uv}}_{uv}$ is a feasible solution to the LP in \cref{fig:lps}. Then $\Cluster{G, x}$ computes a correlation clustering of cost~\makebox{$3 \sum_{uv} x_{uv}$}. In particular, if $x$ is an $\alpha$-approximate solution to the LP (for some $\alpha \geq 1$), then $\Cluster{G, x}$ returns a $3\alpha$-approximate correlation clustering.
\end{restatable}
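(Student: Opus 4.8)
The statement has two parts: a structural claim (that $\Cluster{G,x}$ produces a clustering of cost exactly $3\sum_{uv} x_{uv}$ whenever $x$ is feasible for the primal charging LP), and a consequence obtained by combining it with the optimality-gap bound. The second part is immediate: if $x$ is $\alpha$-approximate, then $\sum_{uv} x_{uv} \le \alpha \cdot \mathrm{OPT}_{\mathrm{LP}}$; since the LP is a relaxation of the bad-triplet covering condition and every clustering pays at least $1$ per bad triplet, $\mathrm{OPT}_{\mathrm{LP}}$ is a lower bound on the correlation clustering optimum, so $3\sum_{uv} x_{uv} \le 3\alpha \cdot \mathrm{OPT}$. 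So the work is entirely in the first part.

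For the first part I would follow the van Zuylen--Williamson charging argument. The plan is to track, across the iterations of the \textsc{while} loop, a ``budget'' equal to three times the total remaining charge, i.e. $3\sum_{uv \in \binom{V'}{2}} x_{uv}$ where $V'$ is the current (shrinking) vertex set. When a pivot $u$ is chosen and its cluster $\mathcal{C} = \{u\} \cup N(u)$ is removed, the clustering incurs a local cost: the non-edges inside $\mathcal{C}$ (pairs $vw$ with $v,w \in N(u)$, $vw \notin E$ — exactly the bad triplets $uvw$ with apex... careful: $uvw \in T(G)$ with $uv, uw \in E$ and $vw \notin E$) plus the edges leaving $\mathcal{C}$ (pairs $uv$... rather $xw$ with $x \in \mathcal{C}$, $w \notin \mathcal{C}$, $xw \in E$ — these correspond to bad triplets $uxw$ with $ux \in E$, $uw \notin E$, and $xw \in E$). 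The key combinatorial observation is that every pair counted in this local cost participates in a bad triplet having $u$ as one of its three vertices, and moreover each such bad triplet $uvw$ contributes at most... I would show the local cost equals exactly $\sum_{vw : uvw \in T(G)} 1$ (each bad triplet through $u$ contributes its one ``$-$''-or-crossing pair once). Then the pivoting rule in Line~\ref{alg:cluster:line:ratio} — choosing $u$ to minimize $\bigl(\sum_{vw:uvw\in T(G)} 1\bigr) / \bigl(\sum_{vw:uvw\in T(G)} x_{vw}\bigr)$ — together with a counting/averaging argument over all candidate pivots (using the dual feasibility constraint $\sum_{w: uvw \in T(G)} y_{uvw} \le 1$, or more directly a direct averaging over $V'$), gives that the chosen $u$ satisfies
\begin{equation*}
  \sum_{vw : uvw \in T(G)} 1 \;\le\; 3 \sum_{vw : uvw \in T(G)} x_{vw}.
\end{equation*}
Since removing $\mathcal{C}$ deletes from the charge at least the pairs $vw$ with $uvw \in T(G)$ (these pairs all have an endpoint in $\mathcal{C}$), the budget drops by at least $3\sum_{vw:uvw\in T(G)} x_{vw} \ge$ local cost. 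Summing over all iterations, total cost $\le 3\sum_{uv}x_{uv}$; feasibility of $x$ is what makes the averaging inequality above hold, and a matching argument (the LP is tight on this instance structure, or simply that van Zuylen--Williamson's framework gives equality) yields the exact ``$= 3\sum_{uv} x_{uv}$'' — though for the approximation corollary only ``$\le$'' is needed, so I would not belabor the equality.

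\textbf{Main obstacle.} The delicate point is the averaging step establishing $\sum_{vw:uvw\in T(G)} 1 \le 3\sum_{vw:uvw\in T(G)} x_{vw}$ for the pivot $u$ actually selected. A naive average over all $|V'|$ candidate pivots of the quantity $\sum_{vw:uvw\in T(G)} x_{vw}$ triple-counts each surviving pair (a pair $vw$ lies in bad triplets $uvw$ for potentially many apexes $u$), whereas $\sum_{u}\sum_{vw:uvw\in T(G)} 1$ counts, for each bad triplet, its three vertices once each as ``$u$''. Reconciling these two countings — i.e. showing the ratio minimized in Line~\ref{alg:cluster:line:ratio} is at most $3$ using primal feasibility $x_{uv}+x_{vw}+x_{wu}\ge 1$ for every bad triplet — is the crux; I expect it to be handled exactly as in \cite{deterministicPivoting}, by summing the primal constraints over all bad triplets of $G[V']$ and comparing with the sum of numerators, so that some vertex achieves ratio $\le 3$, and the min-ratio pivot does at least as well. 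Once that inequality is in hand, the telescoping over iterations is routine. The full details are deferred to \cref{sec:pivoting-9}.
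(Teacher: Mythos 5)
Your proposal matches the paper's proof: both decompose the clustering cost iteration-by-iteration as $\sum_i \sum_{vw:\, u_ivw \in T(G^{(i)})} 1$, both use the primal feasibility $x_{uv}+x_{vw}+x_{wu}\ge 1$ summed over bad triplets of $G^{(i)}$ to show the min-ratio pivot has ratio at most $3$, and both telescope using that each pair $vw$ appears in $T(G^{(i)})$ for at most one $i$. Your mid-paragraph worry about ``triple-counting'' being an obstacle is a red herring — the factor-of-$3$ on both numerator and denominator sums is precisely what makes the averaging work — but you land on the correct resolution, so the argument goes through as in the paper.
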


\begin{restatable}[Running Time of \textrm{\textsc{Cluster}}, {{{\cite{deterministicPivoting}}}}]{lem}{timeCluster}
\label{lem:cluster-time}
\Cluster{G, x} runs in time $\Order(n^3)$.
\end{restatable}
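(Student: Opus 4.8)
The plan is to bound the number of iterations of the main loop and the total cost of all iterations, supported by an auxiliary data structure built once at the start. The loop runs for at most $n$ iterations, since every iteration deletes at least the current pivot from $V$. Before the loop I would enumerate all bad triplets of the input graph in $O(n^3)$ time (e.g.\ by trying each vertex as the center of an induced $P_3$ and each pair of its neighbours), store every bad triplet as a record holding its three vertices together with the relevant $x$-values, and give each vertex $u$ a list of pointers to the records of the triplets containing it. Alongside, I would maintain for every vertex $u$ the two partial sums $\nu_u = \sum_{vw : uvw \in T(G)} 1$ and $\delta_u = \sum_{vw : uvw \in T(G)} x_{vw}$ taken over the triplets that are currently still present; these are initialized in $O(n^3)$ time from the records.

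With this in place, a single iteration is carried out as follows. To pick the pivot on Line~\ref{alg:cluster:line:ratio} I would scan the at most $n$ alive vertices and select one minimizing $\nu_u/\delta_u$ (reading $0/0$, which can occur only when $u$ lies in no bad triplet, as a value that the correctness argument of \cref{lem:cluster-correctness} permits); this costs $O(n)$ per iteration, hence $O(n^2)$ over the whole run. Forming the cluster $R = \set{u}\cup N(u)$ and deleting it from the graph is also $O(n^2)$ in total. The remaining work is to discard from the data structure exactly the bad triplets that are no longer induced, namely those containing a vertex of $R$: for each $r\in R$ I would walk $r$'s list of records and, for every record not yet marked deleted, mark it deleted and subtract its contributions ($1$ and the stored $x$-value) from the partial sums $\nu$ and $\delta$ of its three vertices. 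Each record is inspected only while it is still alive, so the cost of this clean-up over the whole execution is $O\bigl(\sum_r \nu_r\bigr)$ with $\nu_r$ evaluated at the moment $r$ is deleted, which is at most $\sum_r \nu_r^{\mathrm{init}} = 3\,|T(G)| = O(n^3)$.

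Adding up: $O(n^3)$ for the initial enumeration and set-up, $O(n^2)$ over all pivot selections and cluster removals, and $O(n^3)$ for all triplet deletions, which yields the claimed $O(n^3)$ running time. \textbf{The main point to get right} is the amortization of the clean-up: one has to ensure that every bad triplet is touched only a constant number of times across the entire run, which is exactly what the ``already deleted'' marks together with the rule of scanning only the lists of the vertices being removed achieve. A naive alternative — recomputing the sums $\nu_u,\delta_u$ from scratch in each of the $\Theta(n)$ iterations — would cost $\Theta(n^3)$ per iteration and thus $\Theta(n^4)$ overall, so this bookkeeping is essential rather than cosmetic.
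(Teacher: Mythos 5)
Your proposal is correct and follows essentially the same approach as the paper: precompute the per-vertex sums $\nu_u$ and $\delta_u$ in $O(n^3)$ time, then amortize the updates over the removal of each vertex, using the fact that every vertex is removed exactly once and each bad triplet contributes a constant number of updates. The extra bookkeeping you describe (per-vertex pointer lists and deletion marks) is one concrete implementation of what the paper summarizes more briefly, and your accounting correctly arrives at the same $O(n^3)$ bound.
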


Given \cref{thm:covering-combinatorial,thm:covering-noncombinatorial} we quickly prove \cref{thm:CCCombinatorial}.

\begin{proof}[Proof of \cref{thm:CCCombinatorial}]
We compute a $(1 + \epsilon/3)$-approximate solution $x$ of the charging LP using \cref{thm:covering-combinatorial} (that is, using the procedure \Charge{G}). Plugging this solution $x$ into \Cluster{G, x} returns a $(3 + \epsilon)$-approximate correlation clustering by \cref{lem:cluster-correctness}. The total running time is bounded by $\Order(n^3)$ by \cref{lem:cluster-time} plus $\widetilde\Order(n^3 \epsilon^{-3})$ by \cref{thm:covering-combinatorial} (note that there are $n^3$ constraints, each affecting only a constant number of variables, hence the number of nonzeros in the constraint matrix is $N \leq \Order(n^3)$). For constant $\epsilon > 0$, this becomes $\widetilde\Order(n^3)$.

To obtain a $3$-approximation, we observe that any correlation clustering has cost less than~$\binom n2$. Hence, we can run the previous algorithm with $\epsilon = 1 / \binom n2$ and the $(3 + \epsilon)$-approximate solution is guaranteed to also be $3$-approximate. The running time would be bounded by $\widetilde\Order(n^9)$. To improve upon this, we use the covering LP solver in \cref{thm:covering-noncombinatorial} which runs in time $\widetilde\Order(n^3 \epsilon^{-1})$. By again setting~$\epsilon = 1 / \binom n2$, the running time becomes $\widetilde\Order(n^5)$.
\end{proof}

\clearpage
\bibliography{correlation}
\clearpage
\appendix
\section{Analysis of CHARGE}
\label{app:charge}
In this appendix we prove that \Charge (\cref{alg:charge}, \cpageref{alg:charge}) computes a \((1+\Order(\epsilon))\)-approximation to the charging LP of \cref{fig:lps} in time \(\Order(n^3 \poly(\log n / \epsilon))\), thus matching the guarantees of \cref{thm:covering-combinatorial}.
This algorithm was first developed by Garg and Könemann~\cite{GargK98} and later refined by Fleischer~\cite{Fleischer04}.

We analyze the algorithm using a primal--dual approach: We first argue that \Charge{G} constructs a good solution (up to rescaling) to the primal LP, and then compare this to an optimal dual solution.
The gap between primal and dual value is bounded by $1 + \Order(\epsilon)$, and by the weak duality theorem it follows that the primal solution computed by the algorithm is $(1 + \Order(\epsilon))$-approximately optimal.

Let us introduce some notation: Let $t$ denote the total number of iterations of \Charge{G}. For an iteration $k$, let $x_{uv}^{(k)}$ denote the current value of $x_{uv}$. We also write~$s^{(k)} = \sum_{uv} x_{uv}^{(k)}$ and $m^{(k)} = m(x^{(k)}) = \min_{uvw \in T(G)} x^{(k)}_{uv} + x^{(k)}_{vw} + x^{(k)}_{wu}$. Finally, we set $s = \sum_{uv} x^*_{uv}$ and~$m = m(x^*)$. We have $s / m \leq s^{(k)} / m^{(k)}$ for all iterations $k$ by the way that~$x^*$ is constructed.

\begin{observation}[Primal Solution] \label{obs:primal}
$\set{x^*_{uv} / m}_{uv}$ is a feasible primal solution with value~$s / m$.
\end{observation}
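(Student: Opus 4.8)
The plan is to verify the two defining properties of a feasible primal solution directly from the definitions of $x^*$, $s$, and $m$. For nonnegativity, I would observe that every coordinate $x^*_{uv}$ is a copy of some $x_{uv}^{(k)}$, and the values $x_{uv}$ start at $1$ and are only ever multiplied by $(1+\epsilon) > 0$ in Lines~\ref{alg:charge:line:update-uv}--\ref{alg:charge:line:update-wu}; hence $x^*_{uv} > 0$ for all $uv$, and dividing by $m > 0$ preserves this. For the covering constraints, fix an arbitrary bad triplet $uvw \in T(G)$. By the very definition $m = m(x^*) = \min_{u'v'w' \in T(G)} x^*_{u'v'} + x^*_{v'w'} + x^*_{w'u'}$, we have $x^*_{uv} + x^*_{vw} + x^*_{wu} \geq m$, so dividing through by $m$ gives $\tfrac{x^*_{uv}}{m} + \tfrac{x^*_{vw}}{m} + \tfrac{x^*_{wu}}{m} \geq 1$, which is exactly the constraint of the primal LP in \cref{fig:lps} for that triplet. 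This establishes feasibility.

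Finally, the objective value of the solution $\set{x^*_{uv}/m}_{uv}$ is $\sum_{uv} x^*_{uv}/m = \tfrac1m \sum_{uv} x^*_{uv} = s/m$ by the definition $s = \sum_{uv} x^*_{uv}$, which is the claimed value. There is no real obstacle here: the only point worth a sentence is that $m > 0$ so that the division is well-defined, which follows from the same argument as nonnegativity (each $x^*_{uv} > 0$, so every triple-sum is strictly positive, hence so is their minimum $m$). The heavier lifting — namely, showing that $s/m$ is actually close to the optimum of the charging LP — is deferred to the subsequent dual-comparison step and is not needed for this observation.
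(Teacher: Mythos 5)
Your proof is correct and takes essentially the same approach as the paper: the covering constraints follow immediately from the definition of $m$ as the minimum triple-sum, and the value is $s/m$ by definition of $s$. You also explicitly verify nonnegativity and $m>0$, which the paper's one-line proof leaves implicit; this is a reasonable amount of extra care but not a different argument.
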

\begin{proof}
For any bad triplet $uvw$ we have that $x^*_{uv} + x^*_{vw} + x^*_{wu} \geq m$ by definition. Hence $\set{x_{uv}^* / m}$ satisfies all primal constraints and is feasible. Its value is~$s / m$ by definition.
\end{proof}

\begin{observation}[Dual Solution] \label{obs:dual}
Let $y_{uvw}$ be the number of iterations in which $uvw$ was picked in Line~\ref{alg:charge:line:pick}, scaled by $(\log_{1+\epsilon}(B) + 1)^{-1}$. Then $\set{y_{uvw}}_{uvw}$ is a feasible dual solution with value $t / (\log_{1+\epsilon}(B) + 1)$.
\end{observation}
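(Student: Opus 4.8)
The plan is to verify the two dual constraints and compute the objective value for the proposed vector $\set{y_{uvw}}_{uvw}$. Recall from \cref{obs:dual}'s statement that $y_{uvw}$ counts the number of iterations in which the triplet $uvw$ was selected in Line~\ref{alg:charge:line:pick}, rescaled by the factor $\lambda := (\log_{1+\epsilon}(B) + 1)^{-1}$. Nonnegativity is immediate since we are counting iterations and scaling by a positive number, so the real content is the packing constraint $\sum_{w : uvw \in T(G)} y_{uvw} \leq 1$ for every pair $uv \in \binom V2$.

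For the packing constraint, the idea is to track how the charge $x_{uv}$ grows over the run of the algorithm. Fix a pair $uv$. Each time some bad triplet $uvw$ containing $uv$ is picked, Lines~\ref{alg:charge:line:update-uv}--\ref{alg:charge:line:update-wu} multiply $x_{uv}$ by $(1+\epsilon)$; conversely, $x_{uv}$ is never decreased. Hence after the whole run, $x_{uv} = (1+\epsilon)^{N_{uv}}$ where $N_{uv} = \sum_{w : uvw \in T(G)} (\text{number of iterations picking } uvw)$ is exactly $\lambda^{-1} \sum_{w} y_{uvw}$. On the other hand, the while-loop condition guarantees that while the algorithm runs we have $\sum_{uv} x_{uv} < B$, and in particular every individual $x_{uv} \le \sum_{uv} x_{uv} < B$ throughout — so at the start of the final iteration the value $x_{uv}$ (before its last update) is strictly below $B$, and after one more multiplication by $(1+\epsilon)$ it is at most $(1+\epsilon)B$. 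Thus $(1+\epsilon)^{N_{uv}} \le (1+\epsilon) B$, i.e. $N_{uv} \le \log_{1+\epsilon}(B) + 1 = \lambda^{-1}$, which rearranges to $\sum_{w} y_{uvw} = \lambda N_{uv} \le 1$. This establishes feasibility.

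For the objective value, note that $\sum_{uvw \in T(G)} y_{uvw} = \lambda \cdot (\text{total number of iterations}) = \lambda t = t / (\log_{1+\epsilon}(B) + 1)$, since every iteration picks exactly one bad triplet and the $y$-values are just scaled iteration counts. This matches the claimed value.

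The main obstacle — though it is really only a matter of care rather than depth — is the ``+1'' in the exponent bound: one must argue precisely about the value of $x_{uv}$ at the moment of its last increment. The cleanest way is to observe that just before any update in which $uvw$ is picked, the loop guard $\sum x_{uv} < B$ still holds (the guard is checked at the top of the loop), so the pre-update value of each coordinate is $< B$, giving the post-update value $\le (1+\epsilon) B$ and hence $N_{uv} \le \log_{1+\epsilon}((1+\epsilon)B) = \log_{1+\epsilon}(B) + 1$. Everything else is bookkeeping.
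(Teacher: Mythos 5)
Your proof is correct and uses the same mechanism as the paper's: track the growth of $x_{uv}$, which is multiplied by $1+\epsilon$ each time a bad triplet containing the pair $uv$ is picked, and observe that the loop guard keeps $x_{uv} < B$ at the start of any iteration, so the number of such multiplications is at most $\log_{1+\epsilon}(B)+1$. You are somewhat more explicit than the paper about why the bound has the extra $+1$ and about the fact that the dual constraint for pair $uv$ sums over all $w$ with $uvw \in T(G)$, but the underlying argument is the same.
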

\begin{proof}
In order to prove feasibility, we need to argue that $uvw$ is selected in at most $\log_{1+\epsilon}(B) + 1$ iterations. Indeed, in every iteration where $uvw$ is picked we multiplicatively increase $x_{uv}$ by $1 + \epsilon$. This can happen at most $\log_{1+\epsilon}(B) + 1$ times before the loop terminates. The value of $\set{y_{uvw}}_{uvw}$ is $\sum_{uvw} y_{uvw} = t / (\log_{1+\epsilon}(B) + 1)$.
\end{proof}

We additionally need the following technical lemma.

\begin{lemma} \label{lem:primal-increment}
For any iteration $k$, it holds that $s^{(k)} \leq \binom n2 \cdot \exp(\epsilon k m / s)$.
\end{lemma}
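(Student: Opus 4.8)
The plan is to bound the increment $s^{(k+1)} - s^{(k)}$ in each iteration and then unroll the resulting recurrence. In iteration $k+1$ the algorithm picks a bad triplet $uvw$ that minimizes $x_{uv}^{(k)} + x_{vw}^{(k)} + x_{wu}^{(k)}$, so this sum equals $m^{(k)}$, the current minimum over all bad triplets. The only change to the $x$-values is multiplying each of $x_{uv}, x_{vw}, x_{wu}$ by $1+\epsilon$, which increases $s$ by exactly $\epsilon \cdot (x_{uv}^{(k)} + x_{vw}^{(k)} + x_{wu}^{(k)}) = \epsilon \, m^{(k)}$. Hence
\begin{equation*}
  s^{(k+1)} = s^{(k)} + \epsilon\, m^{(k)}.
\end{equation*}

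Next I would relate $m^{(k)}$ to $s^{(k)}$. Recall that $x^*$ is updated to equal the current $x$ whenever the ratio $s^{(k)}/m^{(k)}$ strictly improves, so by construction $s/m \le s^{(k)}/m^{(k)}$ for every iteration $k$, i.e. $m^{(k)} \le m \cdot s^{(k)} / s$. Plugging this into the increment bound gives
\begin{equation*}
  s^{(k+1)} \le s^{(k)}\left(1 + \frac{\epsilon m}{s}\right).
\end{equation*}
Unrolling from the initial value $s^{(0)} = \binom n2$ (all $x_{uv}$ start at $1$, and there are $\binom n2$ pairs) yields $s^{(k)} \le \binom n2 \, (1 + \epsilon m/s)^k$, and using $1 + z \le e^z$ gives $s^{(k)} \le \binom n2 \exp(\epsilon k m / s)$, as claimed.

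The only mild subtlety — and the step I would be most careful about — is the direction of the inequality $m^{(k)} \le m s^{(k)}/s$: it relies on the invariant $s/m \le s^{(k)}/m^{(k)}$ maintained by the conditional replacement of $x^*$ in \cref{alg:charge}, which is exactly the fact recorded just before \cref{obs:primal}. Everything else is a routine one-line recurrence, so there is no real obstacle here; the lemma is essentially the standard width/potential bookkeeping of the Garg–Könemann multiplicative-weights analysis specialized to our charging LP.
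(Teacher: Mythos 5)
Your proposal is correct and follows essentially the same route as the paper's proof: identify that each iteration increases $s$ by exactly $\epsilon\, m^{(k)}$, invoke the invariant $s/m \le s^{(k)}/m^{(k)}$ to bound this increment by $\epsilon m s^{(k)}/s$, and then unroll (the paper phrases this as induction, but it is the same recurrence) together with $1+x \le e^x$. No gaps.
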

\begin{proof}
The proof is by induction. For $k = 0$ the statement is clear since we initially assign~$x_{uv} \gets 1$ for all pairs $uv$. For $k > 0$ we have
\begin{align}
  s^{(k)}
  &= s^{(k-1)} + \epsilon m^{(k-1)} \label{lem:primal-increment:eq:1} \\
  &\leq s^{(k-1)} \cdot (1 + \epsilon m / s) \label{lem:primal-increment:eq:2} \\
  &\leq \textstyle\binom n2 \cdot \exp(\epsilon (k-1) m / s) \cdot (1 + \epsilon m / s) \label{lem:primal-increment:eq:3} \\
  &\leq \textstyle\binom n2 \cdot \exp(\epsilon k m / s), \label{lem:primal-increment:eq:4}
\end{align}
where we used~\eqref{lem:primal-increment:eq:1} the update rule in Lines~\ref{alg:charge:line:update-uv}--\ref{alg:charge:line:update-wu},~\eqref{lem:primal-increment:eq:2} the fact that $s/m \leq s^{(k)} / m^{(k)}$,~\eqref{lem:primal-increment:eq:3} the induction hypothesis and~\eqref{lem:primal-increment:eq:4} the fact that $1 + x \leq \exp(x)$ for all real $x$.
\end{proof}

In combination we obtain the correctness of \Charge{G}:

\begin{lemma}[Correctness of \textrm{\textsc{Charge}}] \label{lem:charge-correctness}
The algorithm \Charge{G} correctly computes a~$(1 + \Order(\epsilon))$-approximate solution $\set{x^*_{uv} / m}_{uv}$ to the charging LP.
\end{lemma}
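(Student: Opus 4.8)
The plan is to carry out the textbook primal--dual analysis of the Garg--K\"onemann multiplicative-weights method, now that the three ingredients are in place: \cref{obs:primal} says the returned vector $\set{x^*_{uv}/m}_{uv}$ is primal feasible with value $s/m$; \cref{obs:dual} produces a dual feasible solution of value $t/(\log_{1+\epsilon}(B)+1)$; and \cref{lem:primal-increment} controls the growth of $s^{(k)}$. All that remains is to chain these together and to verify that the specific choice of $B$ in \cref{alg:charge} makes the resulting ratio $1+\Order(\epsilon)$.

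First I would observe that the loop terminates: every iteration increases $\sum_{uv}x_{uv}$ by $\epsilon\,m^{(k-1)}\ge 3\epsilon>0$ (all $x_{uv}\ge 1$ throughout), so after finitely many iterations the sum reaches $B$; let $t$ be that number, so $s^{(t)}\ge B$. Let $\beta$ be the common optimum value of the primal and dual LPs of \cref{fig:lps}. Applying weak duality to the dual solution of \cref{obs:dual} gives $t/(\log_{1+\epsilon}(B)+1)\le\beta$, i.e.\ $t\le\beta\,(\log_{1+\epsilon}(B)+1)$.

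Next I would instantiate \cref{lem:primal-increment} at $k=t$: since $s^{(t)}\ge B$, we get $B\le\binom n2\exp(\epsilon t m/s)$, hence
\[
  \frac sm \;\le\; \frac{\epsilon\, t}{\ln(B/\binom n2)} \;\le\; \beta\cdot\frac{\epsilon\,(\log_{1+\epsilon}(B)+1)}{\ln(B/\binom n2)}.
\]
It then remains to show the last fraction is at most $1+\Order(\epsilon)$. Writing $L=\ln(\binom n2(1+\epsilon))$, the choice $B=(\binom n2(1+\epsilon))^{1/\epsilon}/(1+\epsilon)$ gives $\ln B = L/\epsilon-\ln(1+\epsilon)$, so $\ln(B/\binom n2)=L(1-\epsilon)/\epsilon$ and $\epsilon\log_{1+\epsilon}(B)=L/\ln(1+\epsilon)-\epsilon$; substituting, the numerator $\epsilon(\log_{1+\epsilon}(B)+1)$ collapses to exactly $L/\ln(1+\epsilon)$, and the fraction becomes $\epsilon/\bigl((1-\epsilon)\ln(1+\epsilon)\bigr)$. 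Using $\ln(1+\epsilon)\ge\epsilon-\epsilon^2/2$ this is at most $1/\bigl((1-\epsilon)(1-\epsilon/2)\bigr)=1+\Order(\epsilon)$. Therefore $s/m\le(1+\Order(\epsilon))\beta$, and since $\set{x^*_{uv}/m}_{uv}$ is feasible with value $s/m\ge\beta$ by \cref{obs:primal}, it is a $(1+\Order(\epsilon))$-approximate primal solution, which is exactly the claim.

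The step I expect to require the most care is not any single inequality but the constant bookkeeping: making sure the precise value of $B$---rather than merely ``$B$ polynomially large''---is what causes the $\log_{1+\epsilon}(B)$ and $\ln(B/\binom n2)$ terms to cancel down to $1+\Order(\epsilon)$ and not to some larger constant, and keeping straight that $s,m$ in \cref{lem:primal-increment} refer to the recorded best iterate $x^*$ while $s^{(k)},m^{(k)}$ refer to the current iterate, so that the inequality $s/m\le s^{(k)}/m^{(k)}$ is invoked exactly where it is valid.
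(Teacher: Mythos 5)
Your proof is correct and follows essentially the same primal--dual route as the paper: you combine the same three ingredients (\cref{obs:primal}, \cref{obs:dual}, \cref{lem:primal-increment}), bound the primal value against the LP optimum via weak duality, and do the same algebraic cancellation driven by the specific choice of $B$ to reach $\epsilon/\bigl((1-\epsilon)\ln(1+\epsilon)\bigr)=1+\Order(\epsilon)$. The only cosmetic differences are that you add an explicit termination argument and expand the $B$-bookkeeping term by term, whereas the paper factors the bound as $\frac{\ln(B(1+\epsilon))}{\ln(B/\binom n2)}\cdot\frac{\epsilon}{\ln(1+\epsilon)}$ before evaluating; both yield the identical final expression.
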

\begin{proof}
In order to argue that the primal solution $\set{x^*_{uv} / m}_{uv}$ from \cref{obs:primal} is $(1+\Order(\epsilon))$-approximate, it suffices to bound the gap to its corresponding dual solution from \cref{obs:dual}. Their gap is bounded by
\begin{align*}
  \frac{s / m}{t / (\log_{1+\epsilon}(B) + 1)}
  &= \frac{s (\log_{1+\epsilon}(B) + 1)}{mt}
\intertext{Recall that the algorithm terminates with $s \geq B$, thus by \cref{lem:primal-increment} we obtain that $B \leq \binom n2 \cdot \exp(\epsilon tm / s)$, or equivalently $tm / s \geq \epsilon^{-1} \ln(B / \binom n2)$. It follows that the gap is bounded by}
  &\leq \frac{\epsilon(\log_{1+\epsilon}(B) + 1)}{\ln(B / \binom n2)} \\
  &= \frac{\ln(B (1 + \epsilon))}{\ln(B / \binom n2)} \cdot \frac{\epsilon}{\ln(1 + \epsilon)}
\intertext{By setting $B = (\binom n2 (1 + \epsilon))^{1/\epsilon} / (1 + \epsilon)$ as in the algorithm, the first term becomes $1 / (1 - \epsilon)$ and thus}
  &= \frac{1}{1 - \epsilon} \cdot \frac{\epsilon}{\ln(1 + \epsilon)} \\
  &\leq \frac{1}{1 - \epsilon} \cdot \frac{\epsilon}{\epsilon - \epsilon^2 / 2} \\
  &\leq 1 + \Order(\epsilon). \qedhere
\end{align*}
\end{proof}

\begin{lemma}[Running Time of \textrm{\textsc{Charge}}] \label{lem:charge-time}
The running time of \Charge{G} is bounded by $\Order(n^3 \poly(\log n / \epsilon))$.
\end{lemma}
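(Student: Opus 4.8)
The plan is to bound separately the number $t$ of iterations of the while loop of \Charge{G} and the amortized cost of a single iteration; the second is where the work lies, and it will be handled through Fleischer's refinement~\cite{Fleischer04} of the Garg--K{\"o}nemann scheme.

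To bound $t$, I would reuse the bookkeeping already set up for \cref{obs:dual}: whenever a bad triplet $uvw$ is picked in Line~\ref{alg:charge:line:pick}, the charge $x_{uv}$ is scaled by $1+\epsilon$, and since no single charge can exceed $\sum_{ab}x_{ab}$, which never surpasses $(1+\epsilon)B$, any fixed triplet is picked at most $\log_{1+\epsilon}(B)+1$ times (as in \cref{obs:dual}). With $|T(G)| \le \binom n3 = \Order(n^3)$ this gives $t \le \binom n3(\log_{1+\epsilon}(B)+1)$; plugging in $B = (\binom n2(1+\epsilon))^{1/\epsilon}/(1+\epsilon)$ and using $\ln(1+\epsilon)=\Theta(\epsilon)$ for $\epsilon\in(0,1)$ yields $\log_{1+\epsilon}(B) = \Theta(\epsilon^{-2}\log n)$, hence $t = \Order(n^3\epsilon^{-2}\log n) = \Order(n^3\poly(\log n/\epsilon))$.

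For the per-iteration cost, the easy parts are: enumerating $T(G)$ once from the adjacency matrix in $\Order(n^3)$ time, the $\Order(1)$ charge updates and incremental maintenance of $\sum_{uv}x_{uv}$ (each arithmetic operation being on $\Order(\epsilon^{-1}\log n)$-bit numbers, hence $\poly(\log n/\epsilon)$ word operations, or $\Order(1)$ in the exact-arithmetic model of \cite{GargK98,Fleischer04}), and replacing the $\Theta(n^2)$ copy into $x^*$ by only recording the iteration index that attains the best ratio and replaying the algorithm that many steps at the end. The bottleneck is Line~\ref{alg:charge:line:pick}, where a brute-force search costs $\Theta(n^3)$ per iteration. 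Here I would apply Fleischer's refinement: maintain a threshold $\tau$, initialized to $3$ (the weight $x_{uv}+x_{vw}+x_{wu}$ of every bad triplet when $x\equiv1$; if $T(G)=\emptyset$ the charging LP is trivial and we output the zero solution); in a \emph{phase}, scan the bad triplets in a fixed order, and whenever the current triplet has weight $\le(1+\epsilon)\tau$ keep updating it until its weight passes $(1+\epsilon)\tau$, then continue; once a full scan makes no update, set $\tau\gets(1+\epsilon)\tau$ and start a new phase. A single scan per phase suffices, because charges (hence triplet weights) only increase, so once a triplet surpasses $(1+\epsilon)\tau$ it stays above it; consequently each phase costs $\Order(n^3)$ look-ups plus the updates it makes. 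The number of phases is the number of threshold increments, $\Order(\log_{1+\epsilon}((1+\epsilon)B/3)) = \Order(\epsilon^{-2}\log n)$ (the minimum triplet weight never exceeds $\sum_{uv}x_{uv} < (1+\epsilon)B$), so the total is $\Order(n^3)\cdot\Order(\epsilon^{-2}\log n) + \Order(t) = \Order(n^3\poly(\log n/\epsilon))$.

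The main obstacle is verifying that Fleischer's refinement is still \emph{correct} here. Unlike the textbook setting (fractional multicommodity flow, where an edge's length depends only on the flow through it), a bad triplet's weight $x_{uv}+x_{vw}+x_{wu}$ depends on three coordinates shared with overlapping triplets, so updating one triplet changes the weights of others. Monotonicity is what makes the argument go through: updates only increase charges, hence only increase all triplet weights. From this I would derive, on the one hand, that at the start of a phase with threshold $\tau$ every triplet has weight $>\tau$ and stays so throughout the phase, so every update is applied to a triplet within a $(1+\epsilon)$ factor of the true minimum weight --- precisely the approximate-oracle hypothesis under which the Garg--K{\"o}nemann analysis still yields a $(1+\Order(\epsilon))$-approximate solution, consistent with \cref{lem:charge-correctness} (the extra $(1+\epsilon)$ absorbed into $\epsilon$) --- and, on the other hand, that ``one scan per phase'' is correct, which is exactly what pins the running time to $\Order(n^3\poly(\log n/\epsilon))$. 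Everything else is routine bookkeeping.
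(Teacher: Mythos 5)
Your argument reaches the right bound but by a genuinely different route from the paper's. The paper keeps the algorithm exactly as written and implements Line~\ref{alg:charge:line:pick} with a priority queue over all $\Order(n^3)$ triplet weights $x_{uv}+x_{vw}+x_{wu}$: initialization costs $\Order(n^3\log n)$, a query costs $\Order(\log n)$, and bumping the three charges in one iteration touches the $\Order(n)$ triplets sharing each of the three pairs, so $\Order(n\log n)$ per iteration. It also uses a tighter iteration bound than yours: since each iteration increments exactly three of the $\binom n2$ charges and each charge can be incremented at most $\log_{1+\epsilon}(B)+1$ times before $\sum_{uv}x_{uv}\geq B$, the loop runs $\Order(n^2\log_{1+\epsilon}B)$ times, a factor $n$ better than the $\Order(n^3\log_{1+\epsilon}B)$ you get by counting picks per triplet. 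Multiplying and adding the $\poly(\log n/\epsilon)$ arithmetic overhead gives $\Order(n^3\poly(\log n/\epsilon))$ with no change to the algorithm, so \cref{lem:charge-correctness} applies verbatim.

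Your route replaces the priority queue with Fleischer's threshold/phase scan, which is an attractive alternative (it removes the $\log n$ queue overhead and is closer to the original Garg--K\"onemann/Fleischer presentation), and your monotonicity observation---charges only increase, so triplet weights only increase, so once a triplet leaves the window $(\tau,(1+\epsilon)\tau]$ it never re-enters it, justifying one scan per phase despite the non-separable, shared-coordinate weights---is exactly the right point and is correct. The trade-off you should be explicit about is that this \emph{does} change the algorithm: the oracle now returns a $(1+\epsilon)$-approximate minimizer, not the minimizer, so you are bounding the running time of a variant of \Charge{} rather than of \Charge{} itself, and \cref{lem:charge-correctness} would have to be re-derived for the relaxed oracle (you correctly note the constant absorbs into $\Order(\epsilon)$, but this step is not free). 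A second loose end in your version: deciding whether to update $x^*$ requires comparing ratios involving $m(x)=\min_{uvw}x_{uv}+x_{vw}+x_{wu}$, which inside a phase you only know up to a factor $1+\epsilon$; either argue that an approximate $m$ suffices for the bookkeeping, or defer the ratio computation to the replay pass. The paper's priority-queue implementation sidesteps both issues because it always has the exact minimum at hand.
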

\begin{proof}
Any variable $x_{uv}$ can be increased at most $\log_{1+\epsilon}(B) + 1$ times. Hence, the total number of iterations is bounded by $\Order(n^2 \log_{1+\epsilon}(B)) = \Order(n^2 \poly(\log n / \epsilon))$. To efficiently implement the loop, we use a priority queue to maintain $\set{x_{uv} + x_{vw} + x_{wu}}_{uvw \in T(G)}$. The initialization takes time $\Order(n^3 \log n)$. In each iteration we can select the minimum-weight bad triplet in time $\Order(\log n)$ by a single query. Changing the three variables $x_{uv}, x_{vw}, x_{wu}$ affects at most $\Order(n)$ entries in the queue and therefore takes time~$\Order(n \log n)$.

In the previous paragraph we assumed that arithmetic operations run in unit time. However, observe that we work with numbers of magnitude up to $B$ and precision $\epsilon$. We can perform arithmetic operations on numbers of that size in time $\poly(\log(B / \epsilon)) = \poly(\log n / \epsilon)$. Therefore, the total running time increases by a factor $\poly(\log n / \epsilon)$ and is still bounded by $\Order(n^3 \poly(\log n / \epsilon))$ as claimed.
\end{proof}

\section{Correctness of CLUSTER} \label{sec:pivoting-9}
We borrow the algorithm from van Zuylen and Williamson~\cite{deterministicPivoting}, see \Cluster in \cref{alg:cluster}. We present our analysis using the charging LP relaxation in \cref{lem:cluster-correctness}. To obtain a best-possible \textsc{PIVOT} algorithm, think of the input $x$ as an exact solution to the LP in \cref{fig:lps}. We denote its optimal value by $\OPT^{(\text{LP})}$ and the optimal value of the correlation clustering by $\OPT^{(\text{CC})}$.

\correctnessCluster*
\begin{proof}
Let $G^{(i)} = (V^{(i)}, E^{(i)})$ denote the graph $G$ after the $i$-th iteration of the loop, i.e.,~$G^{(0)}$ is the initial graph $G$ and $G^{(t)}$ is the empty graph for $t$ the total number of iterations. Let~$u_i$ denote the pivot node selected in the $i$-th iteration of the algorithm. It is easy to check that the total number of violated preferences in the clustering $C$ is equal to
\begin{equation*}
  \sum_{i=0}^{t-1} \sum_{\substack{vw :\\u_ivw \in T(G^{(i)})}} 1
\end{equation*}
Indeed, in the $i$-th iteration we violate exactly the negative preferences of pairs $vw$ which are both neighbors of $u$, and the positive preferences of pairs $vw$ for which exactly one is a neighbor of $u$. In any such case and only in these cases, $uvw$ is a bad triplet. Using that $x$ is a feasible LP solution, we obtain the following bound:  
\begin{align*}
  \sum_{u \in V^{(i)}} \sum_{\substack{vw :\\uvw \in T(G^{(i)})}} 1
  &= 3 \cdot \sum_{uvw \in T(G^{(i)})} 1 \\
  &\leq 3 \cdot \sum_{uvw \in T(G^{(i)})} x_{uv} + x_{vw} + x_{wu} \\
  &= 3 \cdot \sum_{u \in V^{(i)}} \sum_{\substack{vw :\\uvw \in T(G^{(i)})}} x_{vw}
\end{align*}
By the way we picked $u_i$ in Line~\ref{alg:cluster:line:ratio}, we have that $\sum_{\substack{vw : uvw \in T(G^{(i)})}} 1 \leq 3 \cdot \sum_{\substack{vw : uvw \in T(G^{(i)})}} x_{vw}$. It follows that the total cost of the clustering is bounded by
\begin{align*}
  \sum_{i=0}^{t-1} \sum_{\substack{vw :\\u_ivw \in T(G^{(i)})}} 1
  &\leq 3 \cdot \sum_{i=0}^{t-1} \sum_{\substack{vw :\\u_ivw \in T(G^{(i)})}} x_{vw} \\
  &\leq 3 \cdot \sum_{vw \in \binom V2} x_{vw}
\end{align*}
Here, we used that every pair of vertices is counted for in exactly one graph $G^{(i)}$. This finishes the first part of the lemma.

For the second part, assume that $x$ is an $\alpha$-approximate optimal solution, i.e., $\sum_{uv} x_{uv} \leq \alpha \OPT^{(\text{LP})}$. We claim that $\OPT^{(\text{LP})} \leq \OPT^{(\text{CC})}$. Indeed, we can plug in any correlation clustering into the primal LP as follows: For every pair $uv$ whose preference is violated set $x_{uv} = 1$ and for all other pairs set $x_{uv} = 0$. The important observation is that in any correlation clustering solution, we charge at least one edge in every bad triplet. Hence, the constraints of the LP are satisfied, and we obtain a feasible solution of value $\OPT^{(\text{CC})}$. It follows that the correlation clustering constructed by \Cluster{G, x} has cost at most
\begin{equation*}
  3 \cdot \sum_{uv \in \binom V2} x_{uv} \leq 3\alpha \cdot \OPT^{(\text{LP})} \leq 3\alpha \cdot \OPT^{(\text{CC})} \qedhere
\end{equation*}
\end{proof}

\timeCluster*
\begin{proof}
We can efficiently implement \Cluster{G, x} by first precomputing $\sum_{vw : uvw \in T(G)} 1$ and $\sum_{vw : uvw \in T(G)} x_{vw}$ for every node $u$ in time $\Order(n^3)$. Then in the remaining algorithm we can efficiently select the pivot (for instance, by exhaustively checking all nodes $u$) and remove its cluster from the graph. For every vertex $u$ which is removed from the graph in that way, we can enumerate all bad triplets involving $u$ and update the precomputed quantities appropriately. Since every node is removed exactly once, the total running time is bounded by $\Order(n^3)$.  
\end{proof}
\section{Analysis of \CCC{}} \label{sec:analysisCCC}

\subparagraph*{Running Time.} We first prove the running time of our algorithm.

\begin{lemma} \label{lem:runningTime}
The running time of Algorithm~\ref{alg:main} is $O(n(n+m) + T(n,\binom{n}{2}))$, where $T(n',m')$ is an upper bound on the running time of the PIVOT algorithm we use on a graph with $n'$ nodes and $m'$ edges.
\end{lemma}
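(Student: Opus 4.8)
The plan is to walk through the procedure \Transform line by line, bound the cost of each step, and then add the cost of the final \textsc{PIVOT} call. First I would set up the data structures: computing the connected components of $(V,F)$ gives the supernode partition $SN$ in time $O(n+m)$ (note $|F| = O(n^2)$, but we may assume $|F|$ is part of the input, or recompute as needed); with $SN$ in hand we can answer $s(u)$ queries in $O(1)$, and we can precompute for every pair of supernodes whether they are hostile, which costs $O(n + |H|) = O(n^2)$ total. The assignment of $E_1$ at Line~\ref{line:connectSupernode} touches at most $\binom n2$ pairs, and likewise $E_2$ at Line~\ref{line:disconnectSupernode}; both cost $O(n^2)$.

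Next I would handle the \textbf{while loop} at Line~\ref{line:drop2}, which is the one place where a naive implementation could blow up. Each iteration removes two edges from $E_3$, so there are at most $m/2 = O(n^2)$ iterations. The subtlety is finding a witnessing triple $(U_1,U_2,U_3)$ quickly. The key observation is that the condition only depends, for each ordered pair of supernodes $(U_i, U_3)$, on whether $|E_3(U_i,U_3)| \ge 1$, i.e.\ on whether $U_i$ and $U_3$ are currently connected; so I would maintain, for each supernode $U_3$, the set of supernodes connected to it together with (for the hostile ones) a representative edge, and also maintain a list of ``active'' hostile pairs $(U_1,U_2)$ that currently share a common connected supernode. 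With appropriate bookkeeping, each of the $O(n^2)$ iterations can be processed in amortized $\widetilde O(1)$ or $O(\text{polylog})$ time; even a cruder bound of $O(n)$ per iteration gives $O(n^3) = O(n(n+m))$, which already fits the claimed bound. I should be careful to charge the cost of updating the auxiliary structures when an edge is deleted — since each edge is deleted once, the total such cost is $O(m \cdot \text{(cost per edge)})$, and with the connectivity-counter implementation this is $O(m \cdot n)$ in the worst case, i.e.\ $O(n(n+m))$.

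Finally, the \textbf{rounding loop} at Line~\ref{line:round} iterates over the $\binom{|SN|}{2} = O(n^2)$ pairs of supernodes; for each pair $\{U_1,U_2\}$ we need $|E_{U_1,U_2}\cap E_4|$, which we can maintain as a counter (updated during earlier edge deletions, again at amortized cost absorbed into the $O(n(n+m))$ term) or recompute by scanning the adjacency lists, and then we write at most $|U_1|\cdot|U_2|$ edges. Summing $|U_1|\cdot|U_2|$ over all pairs of supernodes is at most $\binom n2 = O(n^2)$, so the total work here is $O(n^2)$. Adding everything up, \Transform runs in $O(n(n+m))$; the output graph $G'$ has $n$ nodes and at most $\binom n2$ edges, so running the \textsc{PIVOT} algorithm on it costs $T(n, \binom n2)$, giving the claimed $O(n(n+m) + T(n,\binom n2))$ bound. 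I expect the while loop at Line~\ref{line:drop2} to be the main obstacle: the statement of the lemma only needs $O(n(n+m))$ total for it, so a per-iteration cost of $O(n)$ suffices, but one must still argue carefully that the witness-search and the edge-count maintenance can be done within that budget, and that each of the $O(m)$ edge deletions triggers only $O(n)$ additional work.
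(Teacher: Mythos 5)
Your proposal is correct and follows essentially the same route as the paper: bound the preprocessing steps by $O(n^2)$, bound the while loop at Line~\ref{line:drop2} by $O(nm)$ using the observation that inter-supernode edges number at most $m$, and charge $T(n,\binom n2)$ for the \textsc{PIVOT} call since $G'$ may have $\binom n2$ edges even when $m$ is small. Two small remarks: the paper avoids your proposed auxiliary data structures entirely by making a single pass over all $m$ inter-supernode edges and all $n$ supernodes, arguing that since Line~\ref{line:drop2} only deletes edges, no removable pair can survive the pass; and your parenthetical ``$O(n^3)=O(n(n+m))$'' is false when $m=o(n^2)$, though this is harmless because the bound you actually establish is the correct $O(nm)\subseteq O(n(n+m))$.
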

\begin{proof}
Computing the connected components of $(V,F)$ takes $O(n+|F|)$ time. Adding all the edges between supernodes takes $O(n^2)$ time. Then we can contract the supernodes (allowing parallel edges) in $O(n^2)$ time. Removing the edges between hostile supernodes takes $O(m+|H|)$ time. 

For the steps in the loop of Line~\ref{line:drop2}, notice that there are at most $m$ edges connecting distinct supernodes, as the only edges we added were internal in supernodes. We can iterate over all these edges $uv$, and over all supernodes $W$. If $W$ is connected with $s(u)$ and hostile with $s(v)$, then we remove $uv$ and an arbitrary edge connecting $W$ with $s(u)$, and similarly if $W$ is connected with $s(v)$ and hostile with $s(u)$. This takes $O(n \cdot m)$ time. Each pair of edges removed trivially satisfies the requirements of Line~\ref{line:drop2}. As we do not add edges in this step, it is impossible that when finishing there is still a pair of edges $e_1,e_2$ that needed to be removed; when processing $e_1$, we would remove $e_1$ along with some other edge (possibly different from $e_2$).

Rounding the connections between pairs of supernodes is done in $O(n^2)$ time.

The final graph may have at most $\binom{n}{2}$ edges (even if $m$ was much smaller, e.g.\ in the case where all nodes belong in the same supernode), therefore the time spent by the \textsc{PIVOT} algorithm is at most $T(n,\binom{n}{2})$.

The claimed bound follows by both $|F|$ and $|H|$ being $O(n^2)$.
\end{proof}

\subparagraph*{Correctness.}
We finally prove  correctness---that is, we prove that either the final algorithm satisfies all hard constraints, or no clustering can satisfy the hard constraints and the algorithm outputs ``Impossible''.

We start with showing that our algorithm correctly detects all cases where the hard constraints are impossible to satisfy.

\begin{lemma} \label{lem:impossible}
Given an instance $(V,E,F,H)$ of \CCHC, the graph~$G'\gets\Transform(V,E,F,H)$ is equal to $(\emptyset,\emptyset)$ if and only if the \CCHC{} instance is impossible to satisfy.
\end{lemma}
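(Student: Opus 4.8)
The plan is to prove both directions of the equivalence. For the easy direction, suppose the \CCHC{} instance is impossible to satisfy; I will show that \Transform{} returns $(\emptyset,\emptyset)$. The only place \Transform{} returns $(\emptyset,\emptyset)$ is when some supernode is hostile to itself, so equivalently I want: if no supernode is hostile to itself, then a valid clustering exists. But this is immediate — if no supernode is hostile to itself, then putting each supernode in its own cluster satisfies all friendly constraints (by construction of supernodes as connected components of $(V,F)$, each friendly pair lies inside a single supernode) and all hostile constraints (a hostile pair $uv$ with $s(u)=s(v)$ would make that supernode hostile to itself, contradiction; a hostile pair with $s(u)\neq s(v)$ is split across two distinct clusters). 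Hence a valid clustering exists, contradicting impossibility.

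For the other direction, suppose \Transform{} returns $(\emptyset,\emptyset)$; then by the algorithm there is a supernode $U$ hostile to itself, i.e., there exist $u_1,u_2\in U$ with $u_1u_2\in H$. I will show no clustering can be valid. By \cref{lem:supernode}, since $s(u_1)=s(u_2)=U$, any valid clustering must place $u_1$ and $u_2$ in the same cluster. But $u_1u_2\in H$ is a hostile pair, which forbids exactly that. So no clustering satisfies both constraints, i.e., the instance is impossible.

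Neither direction presents a real obstacle — the main content is already packaged in \cref{lem:supernode} and in the simple observation that the per-supernode clustering is valid whenever no supernode is self-hostile. The one point to state carefully is that $\Transform$ returns $(\emptyset,\emptyset)$ \emph{only} through the self-hostility check: all other return paths output $(V,E_4)$ with $V$ the full (nonempty, assuming $n\geq 1$) vertex set, so $G'=(\emptyset,\emptyset)$ is genuinely equivalent to the self-hostility condition. Thus the two-line argument above, plus this bookkeeping remark, completes the proof.

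\begin{proof}
Inspecting \cref{alg:main}, the procedure $\Transform$ returns $(\emptyset,\emptyset)$ if and only if there is a supernode $U\in SN$ hostile to itself; in every other case it returns $(V,E_4)$ with vertex set $V\neq\emptyset$. Hence it suffices to show that some supernode is hostile to itself if and only if the \CCHC{} instance is impossible to satisfy.

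$(\Leftarrow)$ We prove the contrapositive. Suppose no supernode is hostile to itself. Consider the clustering $C=SN$ that places each supernode in its own cluster. Every friendly pair $uv\in F$ lies in a single connected component of $(V,F)$, so $s(u)=s(v)$ and $u,v$ are clustered together, satisfying the friendliness constraint. Every hostile pair $uv\in H$ has $s(u)\neq s(v)$: otherwise the supernode $s(u)=s(v)$ would be hostile to itself. Thus $u,v$ lie in different clusters of $C$, satisfying the hostility constraint. So $C$ is a valid clustering and the instance is satisfiable.

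$(\Rightarrow)$ Suppose some supernode $U$ is hostile to itself, so there exist $u_1,u_2\in U$ with $u_1u_2\in H$. By \cref{lem:supernode}, since $s(u_1)=s(u_2)$, every valid clustering must place $u_1$ and $u_2$ in the same cluster. But $u_1u_2\in H$ forbids $u_1$ and $u_2$ from being in the same cluster. Hence no clustering satisfies all hard constraints, i.e., the instance is impossible to satisfy.
\end{proof}
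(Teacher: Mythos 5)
Your proof is correct and follows essentially the same route as the paper's: both reduce the claim to the observation that $\Transform$ returns $(\emptyset,\emptyset)$ exactly when some supernode is self-hostile, invoke \cref{lem:supernode} for the impossibility direction, and exhibit the per-supernode clustering for the satisfiability direction. Your explicit bookkeeping remark about the only return path yielding $(\emptyset,\emptyset)$ is a nice touch but does not change the substance.
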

\begin{proof}
We show that if two hostile nodes are in the same supernode, then the instance is not satisfiable and the algorithm correctly determines it; on the other hand, if no such hostile nodes exist, then there exists at least one valid clustering.

It holds that $G'=(\emptyset,\emptyset)$ if there exist nodes $u_1,u_2$ such that $u_1,u_2$ are in the same connected component of $(V,F)$ and $u_1u_2\in H$. Then $u_1,u_2$ must be in the same cluster (Lemma~\ref{lem:supernode}) and not be in the same cluster (because $u_1u_2\in H$). Therefore the instance is impossible to satisfy.

Otherwise, no $u_1,u_2$ in the same supernode are hostile. Creating a cluster for each supernode is a valid clustering. To see this, notice that no hostility constraint is violated, by hypothesis. All friendliness constraints are satisfied because any two nodes that must be linked belong in the same supernode, and thus in the same cluster. Therefore such an instance is satisfiable.
\end{proof}

In the following we can thus assume that we have a satisfiable instance with no supernode being hostile to itself.  The next lemma shows that friendly nodes have the same neighborhood and are connected.

\begin{lemma} \label{lem:friendlyStructure}
Given a satisfiable instance $(V,E,F,H)$ of \CCHC, let $G' \gets \Transform(V,E,F,H)$. For any $uv\in F$, it holds that $u,v$ are connected in $G'$ and their neighborhoods are the same.
\end{lemma}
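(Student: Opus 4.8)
The plan is to trace through the modifications performed by \Transform and show that for a friendly pair $uv\in F$, the two endpoints end up connected and with identical neighborhoods in the final graph $G'=(V,E_4)$. Since $uv\in F$, the nodes $u$ and $v$ lie in the same supernode, i.e.\ $s(u)=s(v)$. First I would observe that after Line~\ref{line:connectSupernode} the pair $uv$ becomes an edge (it is added as an intra-supernode pair), and I would argue that none of the subsequent steps ever remove it: Lines~\ref{line:disconnectSupernode}, \ref{line:drop2} and the ``disconnect'' branch of Line~\ref{line:round} only ever touch pairs whose endpoints lie in \emph{different} supernodes, and the ``connect'' branch only adds edges. Hence $uv\in E_4$, so $u,v$ are connected in $G'$.

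Next, for the neighborhood claim, I would fix an arbitrary node $w\neq u,v$ and show $uw\in E_4 \iff vw\in E_4$. The key structural fact is that every step after Line~\ref{line:connectSupernode} operates at the granularity of supernodes, not individual nodes. Concretely: if $s(w)=s(u)=s(v)$, then both $uw$ and $vw$ are intra-supernode pairs, hence both are in $E_4$ by the same argument as above. If $s(w)\neq s(u)$, I would prove by induction over the steps of \Transform that $uw\in E_i \iff vw\in E_i$ for $i=1,2,3,4$. For $E_1$ this holds since Line~\ref{line:connectSupernode} adds exactly the intra-supernode pairs and $uw,vw$ are both inter-supernode, so their membership equals their membership in $E$ — but wait, this is \emph{not} guaranteed by the input; here is where one must be a little careful. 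The point is that we are working with a \emph{satisfiable} instance, but that alone does not force $uw\in E\iff vw\in E$. So the induction must instead be phrased as: after Line~\ref{line:round}, the edge relation between any two distinct supernodes is all-or-nothing, and this all-or-nothing value is what $uw$ and $vw$ inherit, since $s(u)=s(v)$.

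Thus the cleaner route is to first establish the global invariant that in $E_4$, for every pair of distinct supernodes $U_1,U_2$, either $E_4(U_1,U_2)=\emptyset$ or $E_4(U_1,U_2)=E_{U_1,U_2}$ is the complete bipartite set — this is exactly what Line~\ref{line:round} enforces, and one checks that Line~\ref{line:round} processes every unordered pair of supernodes, so no pair is left in an intermediate state. Given this invariant, for $w$ with $s(w)=:Z\neq s(u)=s(v)=:U$, we have $uw\in E_4$ iff $E_4(U,Z)=E_{U,Z}$ iff $vw\in E_4$, since $u,v\in U$. Combined with the intra-supernode case and the connectedness of $uv$ shown above, this gives $N_{G'}(u)\setminus\{v\}=N_{G'}(v)\setminus\{u\}$ together with $uv\in E_4$, which is precisely the statement that $u$ and $v$ are connected and have the same neighborhood.

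The main obstacle is purely bookkeeping: making sure that \emph{every} unordered supernode pair is indeed touched by the \texttt{ForEach} of Line~\ref{line:round} so the all-or-nothing invariant is truly global, and that the earlier lines (\ref{line:disconnectSupernode} and \ref{line:drop2}) never accidentally modify an intra-supernode pair — the latter follows because hostile supernodes are distinct (a supernode hostile to itself was already rejected, so $E_2$ only deletes inter-supernode pairs) and Line~\ref{line:drop2} removes edges $u_1u_3,u_2u_3'$ with $U_1,U_2,U_3\in\binom{SN}{3}$ pairwise distinct. No genuine difficulty arises; it is a careful case check on which pairs each line can alter.
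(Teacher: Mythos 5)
Your proof is correct and follows essentially the same approach as the paper's: note $s(u)=s(v)$, observe that Line~\ref{line:connectSupernode} connects all intra-supernode pairs and later steps never touch such pairs, and then argue that the rounding in Line~\ref{line:round} makes the edge set between each pair of distinct supernodes all-or-nothing, so $u$ and $v$ inherit the same external neighbors. The brief detour where you first attempt an edge-by-edge induction over $E_1,\dots,E_4$ before self-correcting to the supernode-level invariant is unnecessary, but the final argument matches the paper's and is sound.
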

\begin{proof}
The idea is that all nodes in the same supernode are explicitly connected by the algorithm, in Line~\ref{line:connectSupernode}. Then all nodes of the same supernode connect to the exact same nodes due to the rounding step in Line~\ref{line:round}.

More formally, as $uv\in F$, they are trivially both in the same connected component of~$(V,F)$. Thus they are in the same supernode.

As $s(u)=s(v)$, $u$ and $v$ get connected in Line~\ref{line:connectSupernode}. All subsequent steps only modify edges~$u'v'$ where $s(u')\ne s(v')$, therefore $u,v$ remain connected in $G'$. Similarly both $u$ and~$v$ are connected with all other nodes in $s(u)$.

For nodes $w\not \in s(u)$, when $\set{s(u),s(w)}$ is processed in the loop of Line~\ref{line:round}, either both $u$ and $v$ get connected to $w$ or both get disconnected by $w$.
\end{proof}

Similarly, hostile nodes are disconnected and do not share any common neighbor.

\begin{lemma} \label{lem:hostileStructure}
Given a satisfiable instance $(V,E,F,H)$ of \CCHC, let $G' \gets \Transform(V,E,F,H)$. For any $uv\in H$ it holds that $u,v$ are not connected in $G'$ and they have no common neighbor.
\end{lemma}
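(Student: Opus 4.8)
The plan is to trace through the steps of \textsc{Transform} that affect edges between distinct supernodes and argue that, for a hostile pair $uv$, neither $s(u)$ nor $s(v)$ can acquire $u$ or $v$ as a mutual neighbor. First I would observe, exactly as in \cref{lem:friendlyStructure}, that since $uv \in H$ and the instance is satisfiable, Lemma~\ref{lem:supernode} guarantees $s(u) \neq s(v)$; moreover $s(u)$ and $s(v)$ are hostile supernodes by definition. Hence Line~\ref{line:disconnectSupernode} removes every edge between $s(u)$ and $s(v)$, and no later step re-adds any: Line~\ref{line:drop2} only deletes edges, and in Line~\ref{line:round} the pair $\{s(u),s(v)\}$ has $0 \le \thrshld\,|s(u)|\cdot|s(v)|$ edges, so the \textbf{else} branch fires and they stay disconnected. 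This shows $u,v$ are not connected in $G'$.

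Next I would handle the ``no common neighbor'' claim. Suppose for contradiction that some node $w$ is adjacent in $G'$ to both $u$ and $v$. By Lemma~\ref{lem:friendlyStructure} all nodes of a supernode have identical neighborhoods in $G'$, so the whole supernode $s(w)$ is adjacent to both $u$ and $v$, i.e.\ $s(w)$ is connected (in $G'$) to both $s(u)$ and $s(v)$. Note $s(w) \neq s(u)$ and $s(w) \neq s(v)$: if, say, $s(w)=s(u)$, then $w$ adjacent to $v$ would mean $s(u)$ is connected to $s(v)$ in $G'$, contradicting the previous paragraph. So $s(w),s(u),s(v)$ are three distinct supernodes. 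The key step is to push the adjacency in $G'$ back to an adjacency in $E_3$: since the only edges Line~\ref{line:round} adds lie between $\thrshld$-connected supernodes, if $E'(s(u),s(w))$ is nonempty then $E_3(s(u),s(w))$ is nonempty as well (the rounding step turns a nonempty-but-sparse connection into an empty one, never the reverse for pairs that end up connected), and likewise $E_3(s(v),s(w))$ is nonempty.

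But now $E_3$ contains an edge between $U_1 := s(u)$ and $U_3 := s(w)$ and an edge between $U_2 := s(v)$ and $U_3$, with $U_1,U_2$ hostile and $U_1,U_2,U_3$ distinct — precisely the condition checked in the \textbf{while} loop of Line~\ref{line:drop2}. Since the algorithm only exits that loop when no such triple remains (an invariant already argued in \cref{lem:runningTime}: the loop only removes edges, so a violating triple at termination would have been processed), this is a contradiction. Hence no common neighbor $w$ exists, completing the proof.

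The main obstacle I anticipate is the direction of the rounding argument: one must be careful that Line~\ref{line:round} can only \emph{destroy} a connection between two supernodes that happens to be below the threshold and can only \emph{complete} one that is above it, so that ``connected in $G'$'' implies ``connected in $E_3$'' for the purpose of invoking the Line~\ref{line:drop2} invariant. The other delicate point is justifying that the loop of Line~\ref{line:drop2} genuinely terminates with no violating triple, which relies on the monotone (edge-removal-only) nature of the loop body as spelled out in \cref{lem:runningTime}; I would cite that lemma rather than re-derive it.
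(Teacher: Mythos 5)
Your proof is correct and follows essentially the same route as the paper's: first observe that $s(u)\neq s(v)$ (the paper cites \cref{lem:impossible}, you derive it from \cref{lem:supernode} plus satisfiability — both are fine), then argue that Lines~\ref{line:disconnectSupernode}, \ref{line:drop2}, \ref{line:round} keep $s(u)$ and $s(v)$ disconnected, and finally use the termination condition of the while loop at Line~\ref{line:drop2} (edges only removed, so no violating triple survives) together with the fact that rounding cannot create a connection between supernodes that were disconnected in $E_3$. Your contradiction phrasing versus the paper's direct ``for any $W$, at least one of $s(u),s(v)$ is not connected to $W$'' is only a cosmetic difference, and your appeal to \cref{lem:friendlyStructure} is unnecessary (connectedness of $s(w)$ to $s(u)$ in $G'$ already follows from $w$ being adjacent to $u$) but harmless.
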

\begin{proof}
The idea is that all nodes in hostile supernodes are explicitly disconnected by the algorithm, in Line~\ref{line:disconnectSupernode}. Then if two hostile nodes share a common neighbor, we drop both edges in Line~\ref{line:drop2}.

More formally, as the instance is satisfiable, we have that $s(u)\ne s(v)$ by Lemma~\ref{lem:impossible}. Therefore no node in $s(u)$ is connected with a node in $s(v)$ after Line~\ref{line:disconnectSupernode}. In Line~\ref{line:drop2} we only remove edges, meaning that when we process $\set{s(u),s(v)}$ in the loop of Line~\ref{line:round}, the two supernodes are not connected, and they stay like that. Thus, $u,v$ (and even $s(u),s(v)$) are not connected in $G'$.

After Line~\ref{line:drop2}, for any supernode $W$ we have that at least one from $s(u),s(v)$ are not connected with $W$, or else the loop would not terminate. Assume without loss of generality that $s(u)$ is not connected with $W$. Therefore, $s(u)$ is also not connected with $W$ after the loop of Line~\ref{line:round}, meaning that even if $v$ is connected with a node $w\in W$, $u$ is not as $s(u)$ is not connected with $s(w)=W$. This guarantees that they have no common neighbor.
\end{proof}

With these lemmas, we can conclude that a \textsc{PIVOT} algorithm on $G'$ gives a clustering that satisfies the hard constraints. This was already observed in~\cite{deterministicPivoting}; we include a short proof for intuition, as we also use this lemma in \cref{sec:nodeWeightedAppendix}.

\begin{lemma} \label{lem:pivoting}
Let $(V,E,F,H)$ be a satisfiable instance of \CCHC{} and $G'=(V,E')$ be a graph such that any two friendly nodes are connected and have the same neighborhood in $G'$, while hostile nodes are not connected and have no common neighbor in $G'$. Then applying a PIVOT algorithm on $G'$ gives a clustering that satisfies the hard constraints. In particular, this holds for $G'=\Transform(V,E,F,H)$.
\end{lemma}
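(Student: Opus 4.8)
The plan is to prove the contrapositive-style structural claim directly: assume $G'$ has the stated neighborhood properties for friendly and hostile pairs, run any \textsc{PIVOT} algorithm on $G'$, and show the resulting clustering $C$ respects both the friendliness constraints $F$ and the hostility constraints $H$. Since \textsc{PIVOT} processes the graph by repeatedly selecting a pivot $u$ and forming the cluster $N[u] = \{u\} \cup N_{G'}(u)$ among the remaining vertices, the key is to track, for each pair $uv \in F$ or $uv \in H$, what happens at the first moment that one of $u, v$ (or a shared would-be pivot) is selected.

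First I would handle the friendly pairs. Fix $uv \in F$. By hypothesis $u$ and $v$ are connected in $G'$ and have the same neighborhood. I claim they are never separated by \textsc{PIVOT}. Consider the first iteration in which $u$, $v$, or any common neighbor is chosen as pivot. If $u$ is chosen, then since $uv \in E'$ the vertex $v$ is still present (it was not removed in an earlier round, by minimality of the iteration) and joins $u$'s cluster; symmetrically if $v$ is chosen. If a common neighbor $w$ is chosen as pivot, then $w \in N_{G'}(u) = N_{G'}(v)$, so both $u$ and $v$ (both still present) land in $w$'s cluster together. In every case $u$ and $v$ end up in the same cluster, so no friendly constraint is violated. (One subtlety to address: one must argue that, up to the first such iteration, neither $u$ nor $v$ could have been removed without the other — this follows because removal only happens by being the pivot or a neighbor of the pivot, and having identical neighborhoods plus the edge $uv$ means any pivot adjacent to one is adjacent to the other and to the edge, so they are removed simultaneously.)

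Next I would handle the hostile pairs. Fix $uv \in H$. By hypothesis $u \not\sim v$ in $G'$ and they share no common neighbor. For $u$ and $v$ to be placed in the same cluster by \textsc{PIVOT}, that cluster must be $N[w]$ for some pivot $w$ with $u, v \in N[w]$. Either $w \in \{u,v\}$, say $w = u$, but then $v \in N_{G'}(u)$ contradicting non-adjacency; or $w \notin \{u,v\}$, in which case $w$ is a common neighbor of $u$ and $v$, contradicting the no-common-neighbor property. Hence no cluster contains both $u$ and $v$, so no hostile constraint is violated. Finally, the ``in particular'' clause follows immediately: \cref{lem:friendlyStructure} and \cref{lem:hostileStructure} establish exactly the two hypotheses on $G' = \Transform(V,E,F,H)$ for satisfiable instances.

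The main obstacle is not any deep idea but the careful bookkeeping in the friendly case — specifically verifying that $u$ and $v$ (with identical neighborhoods and joined by an edge) are always removed in the same iteration, so that when the ``first relevant pivot'' is chosen, both are still available to be clustered together. The hostile case is essentially immediate from the definitions once the clusters are recognized as closed neighborhoods of pivots.
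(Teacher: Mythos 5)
Your proof is correct and follows essentially the same approach as the paper: both arguments hinge on the observation that a \textsc{PIVOT} cluster is the closed neighborhood of a pivot in an induced subgraph of $G'$, so the adjacency/non-adjacency hypotheses transfer to every recursive step. The paper phrases the friendly case as a contradiction while you argue it directly by tracking the first relevant pivot, but this is only a presentational difference.
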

\begin{proof}
The idea is that due to the assumptions, the choice of the first pivot does not violate any hard constraint. As \textsc{PIVOT} algorithms progress, they work with induced subgraphs of the original graph, which also satisfy the assumptions, and therefore no hard constraint is ever violated.

For the sake of contradiction, assume that two hostile nodes $u,v$ are placed in the same cluster by a \textsc{PIVOT} algorithm. By definition of a \textsc{PIVOT} algorithm, this happens when we work with some $V'\subseteq V$ on the induced subgraph $G'[V']$, and we pivot on a node $w$ that is connected with both $u,v$. As $w$ is connected with both $u,v$ in $G'[V']$, it is also connected with $u,v$ in $G'$. But this contradicts the assumption on hostile nodes.

Similarly, for the sake of contradiction assume that two friendly nodes $u,v$ are put in separate clusters by a \textsc{PIVOT} algorithm. Without loss of generality assume that $u$ is the first to be placed in a cluster that does not contain $v$. Again, this happens when we work with some $V'\subseteq V$ on the induced subgraph $G'[V']$, and we pivot on a node $w$ that is connected with $u$ but not with $v$. As $G'[V']$ is an induced subgraph, $w$ is connected with $u$ but not with $v$ in $G'$.
But this contradicts the assumption on friendly nodes.

Therefore, by Lemmas~\ref{lem:friendlyStructure} and~\ref{lem:hostileStructure} the claim holds for $G'=\Transform(V,E,F,H)$.
\end{proof}

\section{\NWCC{}} \label{sec:nodeWeightedAppendix}

\subparagraph*{Deterministic Algorithm.}
We first give the deterministic \textsc{PIVOT} algorithms for Node-Weighted Correlation Clustering (\cref{thm:node-weighted-deterministic}). We summarize the pseudocode in \cref{alg:node-weighted-deterministic}. The analysis of the deterministic algorithm is similar to the \textsc{PIVOT} algorithm in \cref{sec:deterministicPivoting}.

\begin{figure}[h]
\caption{\,The LP relaxation for Node-Weighted Correlation Clustering.} \label{fig:lp-node-weighted}
\vskip -1.5ex\rule{\linewidth}{.5pt}
\begin{mini}|s|<b>{}{\sum_{uv \in \binom V2} x_{uv}}{}{}
    \addConstraint{\frac{x_{uv}}{\omega_u \omega_v} + \frac{x_{vw}}{\omega_v \omega_w} + \frac{x_{wu}}{\omega_w \omega_u}}{\geq 1\quad}{\forall uvw \in T(G)}
    \addConstraint{x_{uv}}{\geq 0\quad}{\forall uv \in \textstyle\binom V2}
\end{mini}
\rule{\linewidth}{.5pt}
\end{figure}

\begin{algorithm2e}[h]
Compute a $(1 + \frac\epsilon3)$-approximate solution \raisebox{0pt}[0pt][0pt]{$x = \set{x_{uv}}_{uv \in \binom V2}$} of the LP in \cref{fig:lp-node-weighted} \label{alg:node-weighted-deterministic:line:LP}\;
$C \gets \emptyset$\;
\While{$V \neq \emptyset$}{
    Pick a pivot node $u \in V$ minimizing
    \begin{equation*}
        \frac{\displaystyle\sum_{vw : uvw \in T(G)} \omega_v \omega_w}{\displaystyle\sum_{vw : uvw \in T(G)} x_{vw}}
    \end{equation*} \label{alg:node-weighted-deterministic:line:ratio}\;
    \vspace*{-3.5ex}
    Add a cluster containing $u$ and all its neighbors to $C$\;
    Remove $u$, its neighbors and all their incident edges from $G$\;
}
\Return~$C$
\caption{The adapted \textsc{PIVOT} algorithm to $(3 + \epsilon)$-approximate Node-Weighted Correlation Clustering.} \label{alg:node-weighted-deterministic}
\end{algorithm2e}

\begin{lemma}[Correctness of \cref{alg:node-weighted-deterministic}] \label{lem:node-weighted-deterministic-correctness}
\cref{alg:node-weighted-deterministic} correctly approximates Node-Weighted Correlation Clustering with approximation factor $3 + \epsilon$.
\end{lemma}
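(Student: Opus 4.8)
The plan is to adapt the analysis of \cref{alg:cluster} (\cref{lem:cluster-correctness}) to the node-weighted setting, the only genuinely new ingredient being a \emph{weighted} averaging step. First I would prove the feasibility-only statement: for any feasible solution $x$ of the LP in \cref{fig:lp-node-weighted}, the pivoting loop of \cref{alg:node-weighted-deterministic} outputs a clustering of weighted cost at most $3\sum_{uv \in \binom V2} x_{uv}$. Writing $G^{(i)} = (V^{(i)}, E^{(i)})$ for the graph after $i$ iterations and $u_i$ for the $i$-th pivot, the same bookkeeping as in the unweighted case shows that the weighted cost equals $\sum_i \sum_{vw \,:\, u_i vw \in T(G^{(i)})} \omega_v \omega_w$, since pivoting on $u_i$ violates exactly the preferences of those pairs $vw$ that form a bad triplet with $u_i$, at cost $\omega_v\omega_w$ each; moreover each pair $vw$ is charged in at most one $G^{(i)}$, because as soon as one of $v,w$ is clustered the pair can no longer induce a bad triplet together with a later pivot.

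The crux is to control the ratio minimized in \cref{alg:node-weighted-deterministic:line:ratio}: whenever $T(G^{(i)}) \neq \emptyset$ I want a node $u$ with $\sum_{vw\,:\,uvw\in T(G^{(i)})}\omega_v\omega_w \le 3 \sum_{vw\,:\,uvw\in T(G^{(i)})} x_{vw}$. The uniform averaging over all potential pivots used in \cref{lem:cluster-correctness} fails here, because a bad triplet $uvw$ contributes $\omega_v\omega_w$, $\omega_u\omega_w$ or $\omega_u\omega_v$ to the numerator depending on which of its vertices is the pivot, and the constraint $\frac{x_{uv}}{\omega_u\omega_v} + \frac{x_{vw}}{\omega_v\omega_w} + \frac{x_{wu}}{\omega_w\omega_u} \geq 1$ does \emph{not} control $x_{uv}+x_{vw}+x_{wu}$ against $\omega_u\omega_v+\omega_v\omega_w+\omega_w\omega_u$. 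The remedy is to average the ratios with weight $\omega_u$ on node $u$. For the numerators,
\begin{equation*}
  \sum_{u \in V^{(i)}} \omega_u \sum_{vw \,:\, uvw \in T(G^{(i)})} \omega_v \omega_w \;=\; 3 \sum_{uvw \in T(G^{(i)})} \omega_u \omega_v \omega_w ,
\end{equation*}
while for the denominators, extracting $\omega_u\omega_v\omega_w$ from each triplet term and invoking feasibility of $x$ on that triplet,
\begin{equation*}
  \sum_{u \in V^{(i)}} \omega_u \sum_{vw \,:\, uvw \in T(G^{(i)})} x_{vw} \;=\; \sum_{uvw \in T(G^{(i)})} \omega_u\omega_v\omega_w\left( \frac{x_{uv}}{\omega_u\omega_v} + \frac{x_{vw}}{\omega_v\omega_w} + \frac{x_{wu}}{\omega_w\omega_u}\right) \;\geq\; \sum_{uvw \in T(G^{(i)})} \omega_u\omega_v\omega_w .
\end{equation*}
Hence the $\omega$-weighted average of the ratios is at most $3$, so some node attains ratio at most $3$; therefore so does the minimizing pivot $u_i$ (the degenerate $0/0$ case occurs only for nodes lying in no bad triplet of $G^{(i)}$, which contribute nothing to that iteration). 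Consequently iteration $i$ costs at most $3 \sum_{vw\,:\,u_ivw\in T(G^{(i)})} x_{vw}$, and summing over $i$ together with the ``charged at most once'' property gives total cost at most $3 \sum_{uv} x_{uv}$.

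To conclude, the LP of \cref{fig:lp-node-weighted} relaxes \NWCC{}: from an optimal valid clustering set $x_{uv} = \omega_u\omega_v$ for violated pairs and $x_{uv}=0$ otherwise; then $x_{uv}/(\omega_u\omega_v)$ is the $0/1$ indicator of a violation, every bad triplet has at least one violated pair so all constraints hold, and the objective equals the clustering cost, giving $\OPT^{(\text{LP})} \le \OPT^{(\text{NWCC})}$. Finally, the LP is a covering LP (nonnegative constraint matrix, right-hand side and objective), so the solution $x$ produced in \cref{alg:node-weighted-deterministic:line:LP} is feasible with $\sum_{uv} x_{uv} \le (1 + \tfrac\epsilon3)\OPT^{(\text{LP})}$; chaining the three inequalities yields a clustering of cost at most $3(1 + \tfrac\epsilon3)\OPT^{(\text{NWCC})} = (3 + \epsilon)\OPT^{(\text{NWCC})}$. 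I expect the main obstacle to be the averaging step: recognizing that the plain uniform average of \cref{lem:cluster-correctness} is genuinely too weak, and that weighting each node's ratio by $\omega_u$ is exactly what lines the LP constraint up with the triple products.
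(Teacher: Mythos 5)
Your proof is correct and follows essentially the same route as the paper: the weighted cost is decomposed over iterations as $\sum_i \sum_{vw:u_ivw\in T(G^{(i)})}\omega_v\omega_w$, the pivot-ratio in each iteration is bounded by $3$ via an $\omega_u$-weighted average over potential pivots combined with the LP constraint, and the LP optimum is lower-bounded by $\OPT^{(\text{NWCC})}$ via the indicator-type feasible solution $x_{uv}=\omega_u\omega_v$. The only differences are expository — you spell out why uniform averaging fails and address the $0/0$ degeneracy explicitly — but the argument itself matches the paper's.
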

\begin{proof}
We use the same notation as in \cref{lem:cluster-correctness}. That is, let $G^{(i)} = (V^{(i)}, E^{(i)})$ denote the graph $G$ after removing the $i$-th cluster and let $u_i$ denote the $i$-th pivot node. By the same reasoning as in \cref{lem:cluster-correctness}, the total cost of the node-weighted clustering constructed by the algorithm is exactly
\begin{equation*}
    \sum_{i=0}^{t-1} \sum_{\substack{vw:\\u_ivw \in T(G^{(i)})}} \omega_v \omega_w.
\end{equation*}
In order to bound this cost, we again bound the cost of selecting the average node $u$ as a pivot---this time however, we weight the nodes proportional to their weights $\omega_u$:
\begin{align*}
    \sum_{u \in V^{(i)}} \omega_u \cdot \sum_{\substack{vw:\\uvw \in T(G^{(i)})}} \omega_v \omega_w
    &= 3 \cdot \sum_{\substack{uvw \in T(G^{(i)})}} \omega_u \omega_v \omega_w \\
    &\leq 3 \cdot \sum_{\substack{uvw \in T(G^{(i)})}} \omega_u x_{vw} + \omega_v x_{wu} + \omega_w x_{uv} \\
    &= 3 \cdot \sum_{u \in V^{(i)}} \omega_u \cdot \sum_{\substack{vw:\\uvw \in T(G^{(i)})}} x_{vw}.
\end{align*}
In the second step, we applied the LP constraint. Using this inequality, we conclude that for any pivot node the ratio in Line~\ref{alg:node-weighted-deterministic:line:ratio} is bounded by~$3$. Assuming that $x$ is a $(1 + \frac\epsilon3)$-approximate solution to the LP in \cref{fig:lp-node-weighted}, we obtain the following upper bound on the cost of the node-weighted correlation clustering:
\begin{align*}
    \sum_{i=0}^{t-1} \sum_{\substack{vw\\u_ivw \in T(G^{(i)})}} \omega_v \omega_w
    &\leq 3 \cdot \sum_{i=0}^{t-1} \sum_{\substack{vw\\u_ivw \in T(G^{(i)})}} x_{vw} \\
    &\leq 3 \cdot \sum_{vw \in V} x_{vw} \\
    &\leq (3 + \epsilon) \cdot \OPT^{(\text{LP})} \\
    &\leq (3 + \epsilon) \cdot \OPT^{(\text{NWCC})}. 
\end{align*}
Here, in order to bound $\OPT^{(\text{LP})} \leq \OPT^{(\text{NWCC})}$ (the optimal cost of the node-weighted correlation clustering), we argue that any node-weighted correlation clustering can be turned into a feasible solution of the LP in \cref{fig:lp-node-weighted}. Indeed, for any pair $uv$ whose preference is violated assign $x_{uv} = \omega_u \omega_v$ and for any pair $uv$ whose preference is respected assign~$x_{uv} = 0$.
Recalling that every bad triplet involves at least one pair whose preference was violated we conclude that all constraints of the LP are satisfied.
Thus \(x\) is a feasible solution and we have that $\OPT^{(\text{LP})} \leq \OPT^{(\text{NWCC})}$.
\end{proof}

\begin{proof}[Proof of \cref{thm:node-weighted-deterministic}]
We use \cref{alg:node-weighted-deterministic}. The correctness follows from the previous \cref{lem:node-weighted-deterministic-correctness}. To appropriately bound the running time, we use the same insight as in \cref{sec:deterministicPivoting}: Since the LP in \cref{fig:lp-node-weighted} is a covering LP, we can use the combinatorial algorithm in \cref{thm:covering-combinatorial} to approximate the LP in Line~\ref{alg:node-weighted-deterministic:line:LP} in time $\widetilde\Order(n^3 \epsilon^{-3})$. This proves the first part of the theorem.

For the second part we instead use an all-purpose LP solver to find an exact solution to the LP in Line~\ref{alg:node-weighted-deterministic:line:LP} in time $\Order((n^3)^{2.373}) = \Order(n^{7.119})$~\cite{CohenLS19,Brand20,AlmanW21}.
\end{proof}

\subparagraph*{Randomized Algorithm.}
In this section we describe our optimal randomized \textsc{PIVOT} algorithm for Node-Weighted Correlation Clustering (\cref{thm:node-weighted-randomized}). As the decisive ingredient, we provide a data structure to perform weighted sampling on a decremental set:

\begin{lemma}[Weighted Sampling] \label{lem:weightedSampling}
Let $A$ be a set of initially $n$ objects with associated weights $\set{\omega_a}_{a \in A}$. There is a data structure supporting the following operations on $A$:
\smallskip
\begin{itemize}
\item {\normalfont$\textsc{Sample}()$:} Samples and removes an element $a \in A$, where $a \in A$ is selected with probability~$\omega_a / \sum_{a' \in A} \omega_{a'}$. 
\item {\normalfont$\textsc{Remove}(a)$:} Removes $a$ from $A$. 
\end{itemize}
\smallskip
The total time to initialize the data structure and to run the previous operations until $A$ is empty is bounded by $\Order(n)$, with high probability $1 - \frac{1}{n^c}$, for any constant $c > 0$.
\end{lemma}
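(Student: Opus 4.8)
The plan is to build the data structure by combining the Alias Method with Rejection Sampling, so that the amortized cost per operation is $\Order(1)$. Recall that the Alias Method, after an $\Order(k)$-time preprocessing on $k$ weighted elements, supports $\Order(1)$-time sampling of an element proportional to its weight; the catch is that it does not natively support deletions. The idea is therefore to treat the alias table as static, and to handle deletions lazily: we maintain a boolean ``alive'' flag for each element. When $\textsc{Remove}(a)$ is called, we simply mark $a$ as dead (and subtract $\omega_a$ from a running total of alive weight). When $\textsc{Sample}()$ is called, we repeatedly draw an element from the static alias table; if the drawn element is dead, we reject and draw again; once we draw a live element $a$, we return it and mark it dead. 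The first subtlety is that a single draw from the static table is \emph{not} uniform over the alive elements — it is proportional to $\omega_a$ over \emph{all} elements (dead or alive) — but conditioning a draw on the event ``the drawn element is alive'' yields exactly the distribution $\omega_a / \sum_{a' \text{ alive}} \omega_{a'}$, which is what $\textsc{Sample}$ must produce. So correctness of the output distribution is immediate from the rejection-sampling principle.

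The second, and main, subtlety is the running time: if many elements are dead, each $\textsc{Sample}$ could take many rejected draws. A single global alias table would give expected cost inversely proportional to the fraction of surviving weight, which can blow up. The fix is \emph{periodic rebuilding}: we rebuild the alias table (on just the currently alive elements) whenever the total alive weight has dropped by, say, a constant factor (e.g.\ a factor $2$) since the last rebuild — equivalently, whenever at least half the weight present at the last rebuild has been removed. Between two consecutive rebuilds, the alive weight is always at least half the total weight in the current table, so each draw from the table hits a live element with probability $\ge \tfrac12$, hence each $\textsc{Sample}$ costs $\Order(1)$ draws in expectation and $\Order(\log n)$ draws with high probability by a standard Chernoff/geometric-tail bound. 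The first step I would carry out is to set up this invariant carefully and argue the per-$\textsc{Sample}$ bound; then I would bound the total rebuilding cost.

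For the rebuilding cost, observe that each rebuild is triggered by the removal of a constant fraction of the weight \emph{currently in the table}. If we charge the $\Order(k)$ cost of rebuilding a $k$-element table to the $\Omega(k)$ elements that were removed (or, more precisely when weights are nonuniform, to the removed weight) since the previous rebuild, then — because weights are polynomially bounded and there are only $n$ elements total, so each element is charged $\Order(1)$ times across the whole run — the total rebuilding cost telescopes to $\Order(n)$. (One must be slightly careful when weights are very nonuniform: a cleaner accounting is by \emph{count} of removed elements together with the observation that the weight drops geometrically, giving $\Order(\log(\text{weight range})) = \Order(\log n)$ rebuilds, each of cost $\Order(n)$ in the worst case — this already gives $\widetilde\Order(n)$; to shave the log factor down to a true $\Order(n)$ one rebuilds only the \emph{surviving} portion and amortizes against removed elements, so the sizes of successive rebuilds form a decreasing sequence summing to $\Order(n)$.) Finally, the $\Order(\log n)$-per-operation bound on the number of rejected draws, summed over at most $n$ operations and combined with a union bound, yields a total running time of $\Order(n)$ with high probability $1 - n^{-c}$ for any constant $c$ (by choosing the constant in the Chernoff bound appropriately). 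I expect the main obstacle to be the amortization of the rebuild cost under \emph{nonuniform} weights — making sure that the ``constant fraction of weight removed triggers a rebuild'' rule interacts correctly with the fact that a single heavy element can dominate the table — and the clean way around this is the surviving-portion-rebuild charging scheme sketched above, which decouples the rebuild sizes from individual element weights.
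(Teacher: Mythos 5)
Your outline reuses two ingredients the paper also uses — lazy deletion with rejection sampling from a static alias table, and a rebuild when a constant fraction of the alive weight has been removed — but the \emph{single-level} (global) alias table does not yield $O(n)$ total time, and the fix you sketch does not repair this. The charging argument needs that a rebuild triggered by a constant-factor drop in alive \emph{weight} also corresponds to removing a constant fraction of the \emph{elements} in the current table; with nonuniform (but still $\poly(n)$-bounded) weights this fails. For instance, set $\omega_i = n^c/2^{i-1}$ for $i = 1, \dots, c\log n$ and $\omega_i = 1$ otherwise, and call $\textsc{Remove}$ on the heavy elements in order. Each such removal roughly halves the alive weight, so it triggers a rebuild, yet removes only a single element; the surviving-portion sizes are $n-1, n-2, \dots$ and the total rebuild cost is $\Theta(n\log n)$, not $O(n)$. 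Your claim that ``the sizes of successive rebuilds form a decreasing sequence summing to $O(n)$'' is therefore unjustified — the sizes decrease, but only by one at a time, and the weight range caps the number of rebuilds at $O(\log n)$, each of cost up to $\Theta(n)$.

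The paper's resolution is a two-level structure: partition $A$ into $\ell = O(\log n)$ buckets, where $B_i$ holds the elements with weight in $[2^{i-1}, 2^i)$; keep an alias table inside each bucket, plus an outer alias table over the buckets using stale per-bucket weights $p_i$ and true weights $q_i$ with the invariant $q_i \geq p_i/2$ (enforced by rebuilding the bucket, and the outer table, when violated), together with an extra accept-with-probability-$q_i/p_i$ rejection step at the bucket level. The key point — the one your scheme is missing — is that all weights inside a bucket are within a factor $2$, so when a bucket's weight halves, at least a $1/3$ fraction of its elements were removed; this gives the geometric decay $\sum_{k\geq 0} (2/3)^k |B_i| = O(|B_i|)$ per bucket and $O(n)$ in total. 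That comparable-weight bucketing is precisely what decouples ``weight removed'' from ``elements removed,'' which is the obstruction you correctly identified but did not overcome.
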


\begin{algorithm2e}[t]
Initialize the weighted sampling data structure on $V$ with weights $\set{\omega_u}_{u \in V}$\;
$C \gets \emptyset$\;
\While{$V \neq \emptyset$}{
    $u \gets \textsc{Sample}()$\;
    Add a cluster containing $u$ and all its neighbors to $C$\;
    Remove $u$, its neighbors and all their incident edges from $G$\;
    Run $\textsc{Remove}(u)$ and $\textsc{Remove}(v)$ for all neighbors $v$ of $u$\;
}
\Return~$C$
\caption{The randomized \textsc{PIVOT} algorithm computing an expected $3$-approximation of Node-Weighted Correlation Clustering.} \label{alg:node-weighted-randomized}
\end{algorithm2e}

We postpone the proof of \cref{lem:weightedSampling} for now and first analyze the randomized algorithm in \cref{alg:node-weighted-randomized}. It can be seen as a natural generalization of the sampling algorithm from~\cite{pivoting}.

\begin{lemma}[Correctness of \cref{alg:node-weighted-randomized}] \label{lem:node-weighted-randomized-correctness}
\cref{alg:node-weighted-randomized} correctly approximates Node-Weighted Correlation Clustering with expected approximation factor $3$.
\end{lemma}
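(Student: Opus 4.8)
\emph{Proof plan.}
The plan is to reuse the analysis of the deterministic variant (Lemma~\ref{lem:node-weighted-deterministic-correctness}) almost verbatim, with the worst-case choice of pivot replaced by the weighted average induced by the sampling distribution, and to charge the cost against a fixed \emph{integral} feasible solution of the LP in Figure~\ref{fig:lp-node-weighted} extracted from an optimal clustering. Fix an optimal node-weighted clustering and let $x = \set{x_{uv}}_{uv}$ be given by $x_{uv} = \omega_u\omega_v$ whenever the preference on $uv$ is violated by that clustering, and $x_{uv} = 0$ otherwise. As observed in the proof of Lemma~\ref{lem:node-weighted-deterministic-correctness}, every bad triplet contains at least one violated pair, so $x$ is LP-feasible and $\sum_{uv} x_{uv} = \OPT^{(\text{NWCC})}$. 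Note that $x$ plays no role in the algorithm; it is a pure analysis device.

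First I would adopt the notation of Lemma~\ref{lem:cluster-correctness}: let $G^{(i)} = (V^{(i)}, E^{(i)})$ be the graph after removing the first $i$ clusters, let $u_i$ be the $i$-th pivot, and set $W^{(i)} = \sum_{u \in V^{(i)}} \omega_u$. By Lemma~\ref{lem:weightedSampling} the set maintained by the sampling structure at the start of iteration $i$ is exactly $V^{(i)}$, so conditioned on $G^{(i)}$ the pivot $u_i$ equals any given $u \in V^{(i)}$ with probability $\omega_u / W^{(i)}$. Exactly as in Lemma~\ref{lem:cluster-correctness}, the cost of the produced clustering equals $\sum_i \sum_{vw : u_i vw \in T(G^{(i)})} \omega_v\omega_w$. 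To bound the expected cost of iteration $i$, I would average the per-pivot cost over $u \in V^{(i)}$ with weights $\omega_u/W^{(i)}$, invoking the chain of (in)equalities already established in the proof of Lemma~\ref{lem:node-weighted-deterministic-correctness}:
\begin{align*}
  \sum_{u \in V^{(i)}} \omega_u \sum_{vw : uvw \in T(G^{(i)})} \omega_v\omega_w
  &= 3 \sum_{uvw \in T(G^{(i)})} \omega_u\omega_v\omega_w \\
  &\le 3 \sum_{uvw \in T(G^{(i)})} \bigl( \omega_u x_{vw} + \omega_v x_{wu} + \omega_w x_{uv} \bigr) \\
  &= 3 \sum_{u \in V^{(i)}} \omega_u \sum_{vw : uvw \in T(G^{(i)})} x_{vw},
\end{align*}
where the middle step is the LP constraint. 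Dividing by $W^{(i)}$ gives, for every realization of $G^{(i)}$,
\[
  \mathbb{E}\Bigl[ \sum_{vw : u_i vw \in T(G^{(i)})} \omega_v\omega_w \,\Bigm|\, G^{(i)} \Bigr]
  \;\le\; 3\,\mathbb{E}\Bigl[ \sum_{vw : u_i vw \in T(G^{(i)})} x_{vw} \,\Bigm|\, G^{(i)} \Bigr].
\]
Summing over iterations, applying the tower rule, and using the same double-counting observation as in Lemma~\ref{lem:cluster-correctness}---every pair $vw$ forms a bad triplet with the pivot $u_i$ in $G^{(i)}$ for at most one index $i$---I would conclude $\sum_i \sum_{vw : u_i vw \in T(G^{(i)})} x_{vw} \le \sum_{vw \in \binom V2} x_{vw}$ surely, hence the expected cost of the clustering is at most $3\sum_{vw} x_{vw} = 3\,\OPT^{(\text{NWCC})}$.

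I do not expect a serious obstacle; the argument is a weighted-sampling reincarnation of the deterministic proof. The two points that warrant care are: (i)~that the displayed inequality between conditional expectations holds for \emph{every} possible $G^{(i)}$---it is merely the deterministic averaging identity divided by the positive quantity $W^{(i)}$---so that summation and the tower rule are applied legitimately; and (ii)~the double-counting claim, which holds because whenever $\set{u_i, v, w}$ is a bad triplet in $G^{(i)}$, at least one of $v, w$ is a neighbour of $u_i$, hence joins $u_i$'s cluster and is absent from every later graph $G^{(j)}$.
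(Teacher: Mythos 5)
Your proof is correct and follows essentially the same route as the paper: the same averaging identity over pivots weighted by $\omega_u$, the same use of the LP constraint, and the same double-counting observation that each pair $vw$ contributes at most once across iterations. The only (harmless) deviation is that you compare against the specific integral LP-feasible point extracted from an optimal clustering, giving $\sum_{uv}x_{uv}=\OPT^{(\text{NWCC})}$ directly, whereas the paper compares against an optimal LP solution and then invokes $\OPT^{(\text{LP})}\le\OPT^{(\text{NWCC})}$---two equivalent ways to close the same argument.
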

\begin{proof}
We borrow the notation from \cref{lem:node-weighted-deterministic-correctness}, writing $G^{(i)} = (V^{(i)}, E^{(i)})$ for the graph after removing the $i$-th cluster and $u_i$ for the $i$-th pivot node. Assuming the correctness of the sampling data structure (\cref{lem:weightedSampling}), $u_i$ is sampled from $V_i$ with probability $\omega_u / \sum_{v \in V} \omega_v$. Again, the total cost of the clustering computed by \cref{alg:node-weighted-randomized} is exactly
\begin{equation*}
    \sum_{i=0}^{t-1} \sum_{\substack{vw:\\u_ivw \in T(G^{(i)})}} \omega_v \omega_w.
\end{equation*}
Let $x = \set{x_{uv}}_{uv}$ denote an optimal solution to the LP in \cref{fig:lp-node-weighted}. (In contrast to the deterministic algorithm, here we do not compute the solution.) To bound the inner sum in expectation, we use the same computation as in \cref{lem:node-weighted-deterministic-correctness}, relying on the LP constraint:
\begin{align*}
    \Ex_{u \in V^{(i)}}\left(\sum_{\substack{vw:\\uvw \in T(G^{(i)})}} \omega_v \omega_w\right)
    &= \frac{1}{\sum_{v \in V^{(i)}} \omega_v} \cdot \sum_{u \in V^{(i)}} \omega_u \cdot \sum_{\substack{vw:\\uvw \in T(G^{(i)})}} \omega_v \omega_w \\
    &= \frac{3}{\sum_{v \in V^{(i)}} \omega_v} \cdot \sum_{\substack{uvw \in T(G^{(i)})}} \omega_u \omega_v \omega_w \\
    &\leq \frac{3}{\sum_{v \in V^{(i)}} \omega_v} \cdot \sum_{\substack{uvw \in T(G^{(i)})}} \omega_u x_{vw} + \omega_v x_{wu} + \omega_w x_{uv} \\
    &= 3 \cdot \Ex_{u \in V^{(i)}}\left(\sum_{\substack{vw:\\uvw \in T(G^{(i)})}} x_{vw} \right).
\end{align*}
It follows that the expected total cost is bounded by
\begin{align*}
    \Ex_{u_0, \dots, u_{t-1}} \left(\sum_{i=0}^{t-1} \sum_{\substack{vw:\\u_ivw \in T(G^{(i)})}} \omega_v \omega_w\right)
    &\leq 3 \cdot \Ex_{u_0, \dots, u_{t-1}} \left( \sum_{i=0}^{t-1} \sum_{\substack{vw:\\u_ivw \in T(G^{(i)})}} x_{vw} \right) \\
    &\leq 3 \cdot \sum_{vw \in \binom V2} x_{vw} \\
    &\leq 3 \cdot \OPT^{(\text{LP})} \\
    &\leq 3 \cdot \OPT^{(\text{NWCC})},
\end{align*}
where we used the same arguments as in \cref{lem:node-weighted-deterministic-correctness}. In particular, we used that for \emph{any} choice of pivot nodes every pair $vw$ appears in the sum at most once and we can therefore drop the expectation.
\end{proof}

\begin{proof}[Proof of \cref{thm:node-weighted-randomized}]
By \cref{lem:node-weighted-randomized-correctness}, \cref{alg:node-weighted-randomized} correctly approximates Node-Weighted Correlation Clustering within a factor $3$, in expectation.

To bound the running time of the algorithm, first recall that by \cref{lem:weightedSampling} the total time to initialize and sample from the weighted sampling data structure is $\Order(n)$ with high probability $1 - 1/\poly(n)$. The time to construct the clustering is $\Order(n + m)$ as any edge in the graph is touched exactly once.
\end{proof}

We remark that this randomized algorithm can in fact be seen as a reduction from Node-Weighted Correlation Clustering to Constrained Correlation Clustering: For any node~$u$, construct a supernode containing $\omega_u$ many new nodes (all of which are joined by friendliness constraints). Then there is a one-to-one correspondence between node-weighted and constrained correlation clusters of the same cost. The constructed instance's size is proportional to the sum of weights and can thus be much larger than the original instance's size; however we can use our weighted sampling data structure to efficiently sample from it anyways.

\subparagraph*{Weighted Sampling.}
We finally provide a proof of \cref{lem:weightedSampling}. It heavily relies on Walker's Alias Method~\cite{Walker74,alias}, which we summarize in the following theorem:

\begin{theorem}[Alias Method~{{{\cite{Walker74,alias}}}}]
Let $A$ be a set of $n$ objects with weights~\makebox{$\set{\omega_a}_{a \in A}$}. In time $\Order(n)$ we can preprocess $A$, and then sample $a \in A$ with probability~$\omega_a / \sum_{a' \in A} \omega_{a'}$ in constant time.
\end{theorem}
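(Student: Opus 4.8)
The plan is to prove this by the classical Alias Method construction of Walker~\cite{Walker74}, in the streamlined form due to Vose~\cite{alias}: we rewrite the target distribution as a uniform mixture of $n$ distributions, each supported on at most two elements, store this mixture in three length-$n$ arrays (a \emph{primary} element, an \emph{alias} element, and a \emph{threshold} per bin), and answer a query with a single uniform draw over the $n$ bins followed by one comparison.

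First I would set up the construction. Let $W = \sum_{a' \in A} \omega_{a'}$, and to each $a \in A$ associate the rescaled quantity $q_a = n\omega_a / W$, so that $\sum_{a \in A} q_a = n$; picture this as $n$ unit-capacity bins to be filled by the $q_a$'s. Maintain two stacks $\mathsf{Small} = \set{a : q_a < 1}$ and $\mathsf{Large} = \set{a : q_a \ge 1}$, initialized in $\Order(n)$ time. Repeat until $\mathsf{Small}$ is empty: pop $s$ from $\mathsf{Small}$ and $l$ from $\mathsf{Large}$; make $s$ the primary of the next empty bin with threshold $q_s$ and make $l$ that bin's alias; then replace $q_l$ by $q_l - (1 - q_s)$ and push $l$ back onto $\mathsf{Small}$ or $\mathsf{Large}$ according to whether its updated value is $<1$ or $\ge 1$. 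Any elements still on $\mathsf{Large}$ when $\mathsf{Small}$ empties necessarily have value exactly $1$ (by the invariant below) and are made primaries of the remaining bins with threshold $1$. A query picks a bin index $i$ uniformly in $[n]$, draws $r$ uniformly in $[0,1]$, and returns the primary of bin $i$ if $r < \mathrm{threshold}_i$ and the alias otherwise.

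The correctness rests on the invariant, proved by induction on the number of loop iterations: at every point the number of filled bins equals the number of \emph{finalized} elements, and $\sum_{a\text{ not finalized}} q_a$ equals the number of empty bins. Each iteration finalizes exactly one element ($s$), fills exactly one bin, and moves $q_s + (1 - q_s) = 1$ of total mass into it, preserving the invariant. I expect the only nontrivial point — the main obstacle — is deducing from this invariant that $\mathsf{Large}$ is never exhausted while $\mathsf{Small}$ is nonempty, so that the procedure is well defined and (since one element is finalized per step) terminates after at most $n$ steps: if $\mathsf{Large}$ were empty with $\mathsf{Small}$ nonempty, all not-yet-finalized elements would have value $<1$ yet sum to the integer number of remaining bins, forcing strictly more remaining elements than bins and contradicting the first half of the invariant; the same count argument shows the leftover $\mathsf{Large}$ elements have value exactly $1$. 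At termination every element is finalized, and its total value $q_a$ has been deposited across the bins: in each bin where $a$ is primary it contributes the threshold, and in each bin where it is alias it contributes one minus the threshold. Hence a query returns $a$ with probability $\tfrac1n \sum_{\text{bins holding }a}(\text{its contribution there}) = q_a/n = \omega_a/W$, as required.

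Finally, the running time: preprocessing is $\Order(n)$, since the initial bucketing is linear and each loop iteration does $\Order(1)$ work (a constant number of stack operations and one arithmetic update) while finalizing one of the $n$ elements; a query is $\Order(1)$ (one random word, one comparison, one array lookup). The one secondary subtlety is numerical precision, which I would handle by running the construction over integers: since the weights are $\poly(n)$, so is $W$, and replacing each $q_a$ by the integer $n\omega_a$ and storing each bin's threshold as the integer $n\omega_s$ over the common denominator $W$ makes every update ($n\omega_l \leftarrow n\omega_l - (W - n\omega_s)$) and every comparison exact on $\Order(1)$ machine words, while a query simply compares a uniform integer in $\{0,\dots,W-1\}$ against the stored numerator.
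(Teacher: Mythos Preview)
The paper does not prove this theorem at all: it is stated as a black-box result imported from the cited references and is used without further justification in the construction of the weighted-sampling data structure. Your write-up is a correct and complete reconstruction of the standard Vose-style Alias Method argument (the two-stack build, the invariant that the residual mass equals the number of empty bins, and the $\Order(1)$ query), so there is simply no in-paper proof to compare it against.
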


We start with the description of our data structure. Throughout, we partition the objects in~$A$ into buckets $B_1, \dots, B_\ell$, such that the $i$-th bucket $B_i$ contains objects with weights in~$[2^{i-1},2^i)$. Assuming that each weight is bounded by $\poly(n)$, the number of buckets is bounded by $\ell = O(\log{n})$. For every bucket $i = 1, \dots, \ell$, we maintain an \emph{initial weight} $p_i$ and an \emph{actual weight} $q_i$. Initially, we set $p_i, q_i \gets \sum_{s \in B_i} w(s_i)$. Moreover, using the Alias Method we preprocess in $O(\log{n})$ time the set of buckets~$\set{B_1, \dots, B_\ell}$ with their associated initial weights $p_i$. Additionally, using the Alias Method we preprocess each individual bucket $B_i$ with associated weights $\set{\omega_a}_{a \in B_i}$.

Next, we describe how to implement the supported operations:
\smallskip
\begin{itemize}
    \item {\normalfont$\textsc{Remove}(a)$:} Let $a$ be contained in the $i$-th bucket $B_i$. We remove $a$ from $B_i$, and update the actual weight $q_i \gets q_i - \omega_a$ (but not the initial weight $p_i$). If $q_i < p_i / 2$, then we recompute $p_i \gets \sum_{a \in B_i} \omega_a$ and we recompute the Alias Method both on $B_i$ as well as on the set of buckets (based on their initial weights $p_i$).
    \item {\normalfont$\textsc{Sample}()$:} We sample a bucket $i = 1, \dots, \ell$ with probability proportional to its initial weight $p_i$ using the Alias Method. With probability $q_i / p_i$ we \emph{accept} the bucket, otherwise we \emph{reject} and sample a new bucket.
    
    Next, using the preprocessed Alias Method we sample an element $a$ from $B_i$. If the element is no longer contained in $B_i$ (as it was removed in the meantime), we repeat and sample a new element. As soon as an element $a$ is found we report $a$ and call $\textsc{Remove}(a)$.
\end{itemize}
\smallskip
First, in \cref{lem:weighted-sampling-correctness} we argue for the correctness of the data structure. Then in \cref{lem:weighted-sampling-preprocessing,lem:weighted-sampling-remove,lem:weighted-sampling-sample} we analyze the total running time.

\begin{lemma} \label{lem:weighted-sampling-correctness}
{\normalfont$\textsc{Sample}()$} correctly returns an element $a$ with probability $\omega_a / \sum_{a' \in A} \omega_{a'}$.
\end{lemma}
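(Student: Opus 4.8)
The plan is to decompose one call to \textsc{Sample} into two nested rejection-sampling stages — first the choice of a bucket, then the choice of an element inside that bucket — and to show that each stage, conditioned on eventual acceptance, produces exactly the intended conditional distribution, so that their product is the claimed distribution $\omega_a / \sum_{a' \in A}\omega_{a'}$.

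First I would pin down the invariant maintained by the data structure. By construction $q_i$ is updated in lockstep with removals, so at all times $q_i = \sum_{a \in B_i}\omega_a$, i.e., $q_i$ is the true current weight of bucket $B_i$. Moreover $p_i/2 \le q_i \le p_i$ always: whenever a removal would push $q_i$ below $p_i/2$ we reset $p_i \gets \sum_{a\in B_i}\omega_a = q_i$, and $q_i$ only decreases between resets. In particular $q_i/p_i \in [1/2,1]$ is a genuine probability, and if $A \neq \emptyset$ then $\sum_i q_i > 0$, which is all we need to know that every rejection loop below has positive per-round acceptance probability and therefore terminates with probability $1$ (efficiency of these loops is deferred to the later lemmas). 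I would also note explicitly which weights the two families of Alias structures are built on: the bucket-level Alias structure uses the initial weights $\{p_i\}$, while the Alias structure for bucket $i$ is built on the set of elements present at the last reset of $p_i$ — call it $B_i^{\mathrm{init}} \supseteq B_i$, with $\sum_{a \in B_i^{\mathrm{init}}}\omega_a = p_i$ — so a single draw from it returns $a \in B_i^{\mathrm{init}}$ with probability $\omega_a/p_i$.

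Now the two stages. In one iteration of the outer loop we pick bucket $i$ with probability $p_i/\sum_j p_j$ and accept it with probability $q_i/p_i$, so the event ``accept bucket $i$ in this iteration'' has probability $q_i/\sum_j p_j$, independent of the iteration index; by the standard geometric-series computation for rejection sampling, the eventually accepted bucket equals $i$ with probability $q_i/\sum_j q_j$. Conditioned on having accepted bucket $i$ (so $q_i > 0$ and $B_i \neq \emptyset$), we repeatedly draw from the Alias structure on $B_i^{\mathrm{init}}$ until the drawn element lies in the current $B_i$; the same rejection argument gives that the reported element is $a \in B_i$ with probability $(\omega_a/p_i)/\sum_{a'\in B_i}(\omega_{a'}/p_i) = \omega_a/q_i$. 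Multiplying, for $a$ in bucket $i$,
\[
\Pr[\textsc{Sample}()=a] = \frac{q_i}{\sum_j q_j}\cdot\frac{\omega_a}{q_i} = \frac{\omega_a}{\sum_j q_j} = \frac{\omega_a}{\sum_{a'\in A}\omega_{a'}},
\]
using $\sum_j q_j = \sum_j\sum_{a'\in B_j}\omega_{a'} = \sum_{a'\in A}\omega_{a'}$. I expect the only delicate points to be expository rather than technical: stating the invariant $p_i/2 \le q_i \le p_i$ cleanly so that $q_i/p_i$ is manifestly a probability, being careful that the bucket-internal Alias structure samples from the (possibly stale) set $B_i^{\mathrm{init}}$ and not from the current $B_i$, and phrasing precisely the rejection-sampling fact that ``conditioned on eventual acceptance, the outcome distribution is the per-round acceptance distribution renormalized.'' Once those are fixed, both computations are routine.
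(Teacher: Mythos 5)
Your proof is correct and follows the same two-stage decomposition as the paper's: rejection-sample a bucket using the (stale) weights $p_i$ and accept with probability $q_i/p_i$, then rejection-sample an element of that bucket from the (stale) Alias structure on $B_i^{\mathrm{init}}$ and accept iff it is still present, with the invariant $q_i = \sum_{a\in B_i}\omega_a$ closing the argument. You are in fact a bit more careful than the paper's write-up: the paper asserts that the Alias Method draws $B_i$ with probability $p_i/\sum_{a'\in A}\omega_{a'}$, whereas the literal denominator is $\sum_j p_j \ge \sum_j q_j = \sum_{a'\in A}\omega_{a'}$; the renormalization you make explicit ("the eventually accepted bucket equals $i$ with probability $q_i/\sum_j q_j$", and analogously inside the bucket) is exactly what makes the step rigorous.
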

\begin{proof}
The statement is proven in two steps. First, we show that every bucket $B_i$ is selected with probability $\sum_{a \in B_i} \omega_a / \sum_{a' \in A} \omega_{a'}$. Indeed, it is easy to check that we invariantly have $q_i = \sum_{a \in B_i} \omega_a$. Moreover, every bucket $B_i$ is sampled with probability~$p_i / \sum_{a' \in A} \omega_{a'}$ by the Alias Method. Since we accept every bucket with probability~$q_i / p_i$ (and resample otherwise), the probability of accepting $B_i$ is indeed
\begin{equation*}
    \frac{p_i}{\sum_{a' \in A} \omega_{a'}} \cdot \frac{q_i}{p_i} = \frac{q_i}{\sum_{a' \in A} \omega_{a'}} = \frac{\sum_{a \in B_i} \omega_a}{\sum_{a' \in A} \omega_{a'}}.
\end{equation*}

Second, assume that we accepted $B_i$ and continue sampling $a \in B_i$. Since we resample whenever a non-existing element is returned, each existing element $a \in B_i$ is sampled with probability exactly $\omega_a / q_i$. By combining both steps, we obtain the claim.
\end{proof}

\begin{lemma} \label{lem:weighted-sampling-preprocessing}
The preprocessing time is bounded by $\Order(n)$.
\end{lemma}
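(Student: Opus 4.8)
The plan is to account for each component of the preprocessing separately and observe that they all sum to $\Order(n)$. First I would handle the bucketing itself: for each object $a \in A$ we must determine the index $i$ with $\omega_a \in [2^{i-1}, 2^i)$, i.e., $i = \lfloor \log_2 \omega_a \rfloor + 1$. Since each weight is bounded by $\poly(n)$ and hence fits into a constant number of machine words, computing the position of its most significant set bit is a constant-time word-RAM operation, so distributing all $n$ objects into their buckets takes $\Order(n)$ time in total. While scanning the objects we can simultaneously accumulate the sums defining the initial and actual weights, $p_i = q_i = \sum_{a \in B_i} \omega_a$, again in $\Order(n)$ time overall.

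Next I would bound the cost of the Alias Method preprocessing. By the Alias Method theorem, preprocessing a set of $k$ weighted objects takes $\Order(k)$ time. Applied to the set of buckets $\set{B_1, \dots, B_\ell}$ with their initial weights $p_i$, and using that $\ell = \Order(\log n)$ under the polynomial weight assumption, this costs $\Order(\log n)$. Applied to an individual bucket $B_i$ with weights $\set{\omega_a}_{a \in B_i}$ it costs $\Order(|B_i|)$; since the buckets partition $A$, summing over all buckets gives $\sum_{i=1}^{\ell} \Order(|B_i|) = \Order(n)$.

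Adding the three contributions yields $\Order(n) + \Order(\log n) + \Order(n) = \Order(n)$, proving the lemma. I do not expect a genuine obstacle here; the only point that needs a word of care is that identifying an object's bucket must be done in constant (not logarithmic) time per object, which is exactly where the word-RAM model together with the polynomial bound on the weights is used.
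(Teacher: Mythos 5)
Your proof is correct and follows essentially the same approach as the paper's: bounding the cost of computing $p_i$ and $q_i$ by $\Order(n)$, the Alias Method on the $\ell = \Order(\log n)$ buckets by $\Order(\log n)$, and the per-bucket Alias Method by $\sum_i \Order(|B_i|) = \Order(n)$. The additional detail you give about computing bucket indices in constant time via the word-RAM model is a welcome clarification but not a departure from the paper's argument.
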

\begin{proof}
The computation of $p_i$ and $q_i$ takes time $\Order(n)$. Moreover, to initialize the Alias Method on the buckets takes time $\Order(\log n)$, and to initialize the Alias Method on an individual bucket $B_i$ takes time $\Order(|B_i|)$. Thus, the total preprocessing time is bounded by~$\Order(n + \sum_i |B_i|) = \Order(n)$.
\end{proof}

\begin{lemma} \label{lem:weighted-sampling-remove}
The total time of all {\normalfont$\textsc{Remove}(\cdot)$} operations is bounded by $\Order(n)$.
\end{lemma}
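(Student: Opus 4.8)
The plan is to observe that each individual $\textsc{Remove}$ call does only constant work unless it triggers a bucket rebuild, and then to amortize the rebuild cost against the removals via a standard geometric argument; the bound will be entirely deterministic (the randomness in \cref{lem:weightedSampling} concerns only $\textsc{Sample}$).

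First I would note that, disregarding rebuilds, a call $\textsc{Remove}(a)$ performs only $\Order(1)$ work: it locates the bucket $B_i$ containing $a$, deletes $a$ from $B_i$, and decrements $q_i$. Since at most $n$ elements are ever removed, these steps contribute $\Order(n)$ in total, and it remains to bound the cost of rebuilds. A rebuild of $B_i$ costs $\Order(|B_i| + \log n)$: time $\Order(|B_i|)$ to recompute $p_i = \sum_{a \in B_i} \omega_a$ and the Alias structure on $B_i$, plus $\Order(\ell) = \Order(\log n)$ to recompute the Alias structure over the $\ell = \Order(\log n)$ buckets.

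The key point is that a rebuild of $B_i$ happens only after a constant fraction of its elements have been removed since the previous rebuild (or since initialization). Indeed, right after a rebuild we have $p_i = q_i = \sum_{a \in B_i} \omega_a$, and since every element of $B_i$ has weight in $[2^{i-1}, 2^i)$, if $B_i$ then contains $k$ elements we get $p_i \ge k 2^{i-1}$. The next rebuild is triggered only once $q_i < p_i / 2$, i.e.\ once more than $p_i / 2 \ge k 2^{i-2}$ units of weight have been removed from $B_i$; as each removed element carries weight $< 2^i$, this needs more than $k/4$ removals from $B_i$. Hence the size of $B_i$ shrinks by at least a factor $3/4$ between consecutive rebuilds, so $B_i$ is rebuilt $\Order(\log n)$ times, and across all $\Order(\log n)$ buckets there are $\Order(\log^2 n)$ rebuilds; this bounds the total $\Order(\log n)$ per-rebuild overhead by $\Order(\log^3 n) = \Order(n)$. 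For the $\Order(|B_i|)$ part, I would charge each rebuild of $B_i$ — costing $\Order(k)$, where $k$ is the bucket's size at the start of the enclosing inter-rebuild interval — to the more than $k/4$ removals from $B_i$ occurring in that interval; each removal is thereby charged $\Order(1)$ and lies in exactly one interval, so the total is $\Order(n)$. Summing, all $\textsc{Remove}$ operations together take $\Order(n)$ time.

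No step is a genuine obstacle; the one requiring the most care is the bookkeeping between the two quantities being counted — total \emph{weight} removed from a bucket versus \emph{number} of elements removed — which is exactly where the weight bracketing $[2^{i-1}, 2^i)$ of bucket $B_i$ enters, together with treating the initial interval (before $B_i$'s first rebuild) on the same footing using the initial value of $p_i$.
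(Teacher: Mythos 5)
Your proof is correct and follows essentially the same amortization as the paper: $\Order(1)$ per removal outside of rebuilds, a constant fraction of a bucket's elements must be deleted to trigger a rebuild (the paper derives $1/3$ by comparing removed to remaining weight, you derive $1/4$ directly from $p_i \ge k2^{i-1}$ — both work), hence $\Order(|B_i|)$ total per bucket, plus $\Order(\log^3 n)$ for the top-level Alias rebuilds. The only cosmetic difference is that you package the per-bucket bound as a charging scheme whereas the paper writes it as a geometric series $\sum_k (2/3)^k|B_i|$, which is the same argument.
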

\begin{proof}
For a given element $a$, we can find its bucket $B_i$ and update the actual weight $q_i$ of that bucket in constant time. Since there are at most $n$ operations, this amounts to time~$\Order(n)$. It remains to bound the time to reconstruct the Alias Method data structures.

We are left to argue about the total time of rebuilding. We only rebuild the structure of a single bucket if its actual weight dropped in half since the last rebuilding. As the objects in a bucket have weights that are within a factor $2$, we have removed at least a fraction of $\frac{1}{3}$ objects in the bucket since the last rebuilding. Thus, the total number of reconstructions of a bucket $B_i$ is $O(\log{|B_i|}) = \Order(\log n)$ and the total time for these reconstructions is bounded by~$\sum_{k=0}^\infty (\frac23)^k |B_i| = \Order(|B_i|)$. Summing over all buckets, the total reconstruction time is bounded by $\Order(n)$.

We also rebuild the Alias Method structure on the set of buckets each time a bucket is rebuilt. Since there are only $\Order(\log n)$ buckets, each of which is rebuilt at most $O(\log n)$ times with running time $\Order(\log n)$, the total time for this part is bounded by $\Order(\log^3 n)$.
\end{proof}

\begin{lemma} \label{lem:weighted-sampling-sample}
The total time of all {\normalfont$\textsc{Sample}()$} operations is bounded by $\Order(n)$ with probability~\makebox{$1 - \frac1{n^c}$}, for any constant $c > 0$.
\end{lemma}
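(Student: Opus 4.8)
The plan is to bound the total number of \emph{rejection-sampling trials} performed over all calls to \textsc{Sample}(), and to observe that each such trial costs only $\Order(1)$ time. Recall that one call to \textsc{Sample}() runs two nested rejection loops: an outer loop that repeatedly draws a bucket $B_i$ (via the Alias Method on the initial weights $p_i$) and accepts it with probability $q_i/p_i$, and, once a bucket $B_i$ is accepted, an inner loop that repeatedly draws an element of $B_i$ (via the Alias Method on that bucket) and accepts it iff it has not yet been removed. I would first show that every individual trial, conditioned on the entire history of the execution so far, succeeds (leads to an acceptance) with probability at least $\tfrac12$.

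For the outer loop this uses the invariant $q_i \ge p_i/2$ maintained by \textsc{Remove}() (the structure of $B_i$ is rebuilt, resetting $p_i \gets q_i$, the moment $q_i$ drops below $p_i/2$). Hence $\sum_i p_i \le 2\sum_i q_i = 2\sum_{a\in A}\omega_a$, while the probability that a single outer trial accepts \emph{some} bucket equals $\sum_i \tfrac{p_i}{\sum_j p_j}\cdot\tfrac{q_i}{p_i} = \tfrac{\sum_i q_i}{\sum_j p_j}\ge \tfrac12$. For the inner loop, the Alias Method on $B_i$ was built (at the last rebuild) on a set of total weight $p_i$, and the elements still present have total weight exactly $q_i$, so a single inner trial returns a present element with probability $q_i/p_i\ge\tfrac12$. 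In both cases the trial costs $\Order(1)$: an Alias-Method query is $\Order(1)$, and checking membership or updating $q_i$ is $\Order(1)$ using an indicator array.

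Next I would count the number of \emph{successful} trials that are actually needed. There are at most $n$ calls to \textsc{Sample}() in \cref{alg:node-weighted-randomized} (each removes at least the pivot), and each call needs exactly one successful outer trial and one successful inner trial, for at most $2n$ successes in total. Since, conditioned on the past, each trial succeeds with probability at least $\tfrac12$, the number of trials needed to accumulate $2n$ successes is stochastically dominated by the number of i.i.d.\ $\mathrm{Bernoulli}(\tfrac12)$ trials needed for $2n$ successes. A standard Chernoff bound then shows that $8n$ trials yield fewer than $2n$ successes only with probability $\exp(-\Omega(n))$, which is below $n^{-c}$ for every constant $c>0$ once $n$ is large enough. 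On the complementary event the total number of trials is at most $8n$, so the total time spent in \textsc{Sample}() is $\Order(n)$ with probability $1-\tfrac1{n^{c}}$.

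The main obstacle is the adaptivity: both the bucket drawn and the acceptance probabilities $q_i/p_i$ depend on all the random choices made so far, so the trials are not independent. The point that resolves this is precisely that the lower bound $\tfrac12$ on the acceptance probability holds \emph{conditionally on the entire history}, which is exactly what is needed to pass to the i.i.d.\ comparison (a martingale / stochastic-domination argument); once this is in place, the Chernoff estimate and the $\Order(1)$ per-trial bound finish the proof.
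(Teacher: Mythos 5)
Your proof is correct and follows essentially the same route as the paper's: lower-bound the per-trial acceptance probability by $\tfrac12$ using the invariant $q_i \ge p_i/2$, observe that the total number of trials is therefore dominated by a negative-binomial / sum-of-geometrics quantity, and apply a standard concentration bound. You are somewhat more careful than the paper about the dependence structure: the paper simply says each \textsc{Sample}() call is ``modeled by a geometric random variable'' and invokes a concentration bound for sums of \emph{independent} geometrics, leaving the adaptivity implicit, whereas you explicitly note that the $\tfrac12$ bound holds conditionally on the full history and pass to i.i.d.\ $\mathrm{Bernoulli}(\tfrac12)$ trials by stochastic domination before applying Chernoff. That extra care is a small but genuine improvement in rigor; otherwise the two arguments are the same.
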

\begin{proof}
Consider a single execution of $\textsc{Sample}()$. Since at any point during the lifetime of the data structure we have that $q_i > \frac{p_i}2$, we accept the sampled bucket with probability at least $\frac12$. Moreover, by the same argument we find an existing element in that bucket with probability $\frac12$ as well. The number of repetitions of both of these random experiments can be modeled by a geometric random variable with constant expectation. Hence, using a standard concentration bound for the sum of independent geometric random variables~\cite{Janson18}, the total number repetitions across the at most $n$ executions of $\textsc{Sample}()$ is bounded by $\Order(n)$ with probability at most $1 - \frac{1}{n^c}$, for any constant $c > 0$. 
\end{proof}

In combination, \cref{lem:weighted-sampling-correctness,lem:weighted-sampling-preprocessing,lem:weighted-sampling-remove,lem:weighted-sampling-sample} prove the correctness of \cref{lem:weightedSampling}.

\subparagraph*{Non-Integer Weights.}
Throughout we assumed that the weights are integers, but what if the weights are instead rationals (or reals in an appropriate model of computation)? Our deterministic algorithm uses the weights only in the solution of the LP and is therefore unaffected by the change. We remark that our randomized algorithm can also be adapted to $3$-approximate \NWCC{} with the same running time even if rational weights were allowed by adapting the weighted sampling structure.

\end{document}